\newcommand{\ditto}[1][10pt]{\rule[0.25em]{#1}{.4pt}~\texttt{"}~\rule[0.25em]{#1}{.4pt}}
\tikzset{wave/.style={decorate, decoration={snake, segment length=2mm, amplitude=0.3mm}}}
\newtheorem{observation}{Remark}[section]
\newtheorem{lemma}[observation]{Lemma}  %
\newtheorem{theorem}[observation]{Theorem}
\theoremstyle{definition}
\newtheorem{definition}[observation]{Definition}
\newtheorem{assumption}[observation]{Assumption}
\newtheorem{example}[observation]{Example}
\theoremstyle{remark}
\newtheorem{remark}[observation]{Remark}
\theoremstyle{plain}
\newtheorem{proposition}[observation]{Proposition}
\newtheorem{corollary}[observation]{Corollary}
\protected\def\specialmergetwolists{%
  \begingroup
  \@ifstar{\def\cnta{1}\@specialmergetwolists}
    {\def\cnta{0}\@specialmergetwolists}%
}
\def\@specialmergetwolists#1#2#3#4{%
  \def\tempa##1##2{%
    \edef##2{%
      \ifnum\cnta=\@ne\else\expandafter\@firstoftwo\fi
      \unexpanded\expandafter{##1}%
    }%
  }%
  \tempa{#2}\tempb\tempa{#3}\tempa
  \def\cnta{0}\def#4{}%
  \foreach \x in \tempb{%
    \xdef\cnta{\the\numexpr\cnta+1}%
    \gdef\cntb{0}%
    \foreach \y in \tempa{%
      \xdef\cntb{\the\numexpr\cntb+1}%
      \ifnum\cntb=\cnta\relax
        \xdef#4{#4\ifx#4\empty\else,\fi\x#1\y}%
        \breakforeach
      \fi
    }%
  }%
  \endgroup
}
\tikzset{
    ncbar angle/.initial=90,
    ncbar/.style={
        to path=(\tikztostart)
        -- ($(\tikztostart)!#1!\pgfkeysvalueof{/tikz/ncbar angle}:(\tikztotarget)$)
        -- ($(\tikztotarget)!($(\tikztostart)!#1!\pgfkeysvalueof{/tikz/ncbar angle}:(\tikztotarget)$)!\pgfkeysvalueof{/tikz/ncbar angle}:(\tikztostart)$)
        -- (\tikztotarget)
    },
    ncbar/.default=0.5cm,
}
\tikzset{square left brace/.style={ncbar=0.15cm}}
\tikzset{square right brace/.style={ncbar=-0.15cm}}
\tikzstyle{w}=[draw,fill,circle,inner sep=1]
\tikzstyle{m}=[node distance=0.35cm]
\tikzstyle{p}=[node distance=0.15cm]
\def\perclr{blue}
\def\actclr{red}
\def\changed#1{#1} %
\newif\iffullver
\newcounter{runctr}
\newcounter{runctrplusone}
\def\run#1#2#3#4{
      \setcounter{runctr}{0}
     \whiledo{\numexpr\value{runctr}<#1}{
            \node[w] (w#2\therunctr) at ($ (#3,#4) + ( 0, \therunctr ) $) {};
            \node[m,left of=w#2\therunctr,anchor=east] (m#2\therunctr) {};
            \stepcounter{runctr}
        }
      \node (w#2\therunctr) at ($( #3, #4)+ (0,\therunctr)$) {};
      \setcounter{runctr}{0}
      \setcounter{runctrplusone}{1}
     \whiledo{\numexpr\value{runctr}<#1}{
            \draw[->] (w#2\therunctr)-- node (s#2\therunctr){} (w#2\therunctrplusone);
            \stepcounter{runctr}
            \stepcounter{runctrplusone}
        }
  }
\def\runfinite#1#2#3#4{
      \setcounter{runctr}{0}
     \whiledo{\numexpr\value{runctr}<#1}{
            \node[w] (w#2\therunctr) at ($ (#3,#4) + ( 0, \therunctr ) $) {};
            \node[m,left of=w#2\therunctr,anchor=east] (m#2\therunctr) {};
            \stepcounter{runctr}
        }
      \setcounter{runctr}{0}
      \setcounter{runctrplusone}{1}
     \whiledo{\numexpr\value{runctrplusone}<#1}{
            \draw[->] (w#2\therunctr)-- node (s#2\therunctr){} (w#2\therunctrplusone);
            \stepcounter{runctr}
            \stepcounter{runctrplusone}
        }
  }
\tikzset{>=Stealth}
\def\t{\mathrm{T}}
\def\KC{K^C}
\def\KP{K^P}
\def\K{K}
\def\WC{W^C}
\def\WP{W^P}
\def\B#1{[{#1}]}
\def\D#1{\langle{#1}\rangle}
\renewcommand\phi{\varphi}
\def\vec#1{\vv{#1}}
\begin{document}

\title{\huge Security Properties through the Lens of Modal Logic}

\author{\IEEEauthorblockN{Matvey Soloviev}
 \IEEEauthorblockA{\textit{KTH Royal Institute of Technology}}
\and \IEEEauthorblockN{Musard Balliu}\IEEEauthorblockA{\textit{KTH Royal Institute of Technology}}
 \and \IEEEauthorblockN{Roberto Guanciale}
\IEEEauthorblockA{\textit{KTH Royal Institute of Technology}}
}

\maketitle

\begin{abstract} 
We introduce a framework for reasoning about the security of computer systems
using modal logic. This framework is sufficiently expressive to capture
a variety of known security properties, while also being intuitive and
independent of syntactic details and enforcement mechanisms. We show
how to use our formalism to represent various progress- and termination-(in)sensitive
variants of confidentiality, integrity, robust declassification and
transparent endorsement, and prove equivalence to standard definitions.
The intuitive nature and closeness to semantic reality
of our approach allows us to make explicit
several hidden assumptions of these definitions, and identify potential
issues and subtleties with them, while also holding the promise
of formulating cleaner versions and future extension to entirely novel properties.

\end{abstract}

\section{Introduction}

The study of computer security is concerned with guaranteeing that computer
systems maintain some desirable properties in the face of attackers
that seek to subvert them by way of some set of actions and observations available to each of them.
To formalise and streamline this process, various  \emph{security properties}, 
ranging from basic ones such as confidentiality and
integrity \cite{goguen-meseguer,biba} to more involved ones such as robust declassification \cite{DBLP:conf/csfw/ZdancewicM01} and transparent endorsement \cite{cecchetti2017nonmalleable},
have been  studied. These properties can be used as
building blocks to provide extensional meaning to an overall \emph{security policy}, that is,  a description of desirable 
properties that we wish to be maintained against an attacker with well-defined capabilities. At the same time,
they serve as baseline for justifying and validating the soundness of enforcement mechanisms.

Generally, these security properties are defined with respect to particular system models and formalisms, which are
often tailor-made for the problem at hand.  In particular, constraints in specification techniques, programming language features, compositional reasoning, 
and details and limitations in the enforcement mechanisms have been intertwined in such a way that it has often been unclear exactly what security properties are enforced and how these properties relate to each other. Moreover,
the differences in these details often obscure subtleties
that determine their interpretation in counterintuitive scenarios, such
as ones involving nontermination, asynchrony or unusual sets of attacker capabilities  \cite{DBLP:conf/csfw/ZdancewicM01,DBLP:conf/csfw/MyersSZ04,DBLP:conf/csfw/ChongM06,DBLP:journals/jcs/MyersSZ06,DBLP:conf/pldi/BalliuM09,DBLP:journals/corr/abs-1107-5594,cecchetti2017nonmalleable,DBLP:conf/csfw/OakABS21}.

In this paper, we set out to explore security properties through the lens of modal logic  \cite{BRV01,rak}. We show that modal logic is a natural tool for expressing a range of security properties pertaining to intricate interactions of confidentiality and integrity in deterministic systems. 
We introduce a way of encoding the security-relevant behaviour of a system and the agents involved in its operation as \emph{security Kripke frames}, in which modal formulae can be evaluated. 
Drawing on a general interpretation of knowledge, capabilities and permissions as modal operators,
 we show how modal logic formulae can elegantly and intuitively capture security properties related to confidentiality and integrity along with notions of declassification (intended release of sensitive information) and endorsement (intended untrusted influence upon trusted data).
We then study the interplay between confidentiality and integrity by capturing security properties involving both, specifically robust declassification and transparent endorsement. We show each of these properties to be equivalent to their usual trace-based definition under the given conversion.
In the process, we have uncovered several subtleties of the established definitions, including a potential issue with how termination-insensitive robust declassification handles
termination and interesting details in how transparent endorsement deals with cause and effect.
Due to its generality, our framework can express these security properties for a range of attacker models, including passive, active, termination- and progress-(in)sensitive attackers, and system models, including synchronous, batch-job, and interactive systems.   Another advantage of our framework is that while programs with formal semantics can be converted into it straightforwardly, it also lends itself to manual representation of intuitive abstract scenarios that are not actually programming language-based, which can aid
in understanding various security properties. 

We are certainly not the first to point out the connection between modal logics and security properties. 
Since Sutherland's early work on non-deducibility \cite{Suth86}, several past works on information flow control have used the epistemic concept of knowledge as a fundamental mechanism to bring out what security property is being enforced \cite{dima2006nondeterministic,askarov2007gradual,DBLP:journals/corr/abs-1107-5594,DBLP:conf/nordsec/Balliu13,AhmadianB22,DBLP:conf/eurosp/McCallBJ22}. 
\changed{These works have produced elegant and intuitive security conditions for confidentiality and various flavours of declassification, but they do not address concerns related to the interplay between integrity and confidentiality.
One approach that considers both is the modal framework of Moore et al. \cite{mooreabstract}, which can capture many of the properties we discuss. However, their definition does not explicitly model the operational evolution of the system over time, which necessitates using the capabilities of agents as a proxy for system behaviour at the price of imposing additional restrictions on what policies can be expressed.
We discuss the connection to these works further in Section \ref{rw}.
}

The main contributions of this paper are the following:
\begin{itemize}
\item We introduce a way of capturing security-relevant information about a computer system, including each agent's capabilities and permissions for both passive observation and active interference, as Kripke frames, in which modal logic can be used to reason about its behaviour.
\item We show how to  represent a range of security properties, including confidentiality, integrity, declassification, endorsement, robust declassification, and transparent endorsement, as modal logic formulae in this framework. 
\item We demonstrate how these formulae can be adapted to  represent and compare versions of these properties for a range of attacker models, including passive, active, termination- and progress-(in)sensitive attackers, and system models, including synchronous, batch-job, and interactive systems\changed{, and explore potentially interesting novel variant definitions}.
\item We prove equivalence between our modal security properties and popular trace-based versions of the respective security properties in the literature, uncovering several subtleties of the established properties in the process.
\end{itemize}

\iffullver
\else
Proofs for the theorems presented in this paper can be found in the associated technical report [cite].
\fi

\section{Background}

\subsection{A Primer on Modal Logic}
\label{sec:modallogic}

In mathematical logic, modifiers upon basic statements $\varphi$ such as ``it is possible that $\varphi$''
or ``$\varphi$ will eventually hold'' are referred to as \emph{modalities}. Logics which capture modalities and 
provide an interpretation for them are known as modal logics. Perhaps the most common example of a modal
logic is the propositional logic augmented with modalities for time and possibility, in which we write
$\Box \varphi$ for ``$\varphi$ is forever true'' (resp. ``necessarily''), and $\Diamond \varphi$ for ``$\varphi$
will be true at some point'' (resp. ``possibly''). This modal logic and others are customarily interpreted
using \emph{possible-world} (or \emph{Kripke}) \emph{semantics}, due to Saul Kripke \cite{BRV01,rak}. In this semantics,
we assign truth values to formulae in a \emph{Kripke frame}, which is a collection of 
\emph{possible worlds} $W$ and named relations $W\times W \supseteq R_1$, $R_2$, $\ldots$ on these worlds, as well as an interpretation $w\vDash \phi$ of
propositional formulae $\phi$ (i.e. ones that do not yet include any modalities) for every world $w\in W$.
If we seek to represent time, the possible worlds can be taken to be snapshots of the object of discourse at different times;
at each of them, propositional formulae talking about the state at that time can be evaluated and may take different truth values.
A single (reflexive, transitive) relation $T$ can then be introduced to relate pairs $(w_1,w_2)$ if $w_2$ lies
in the future of $w_1$.

The language of formulae is formed from the closure of the propositional formulae that we could interpret in the individual worlds
with additional modalities for each relation $R$, written as $[R] \varphi$ and $\langle R\rangle \varphi$ for any subformula $\varphi$.
In the case of the time relation $T$, these just correspond to the ``forever'' $\Box$ and ``eventually'' $\Diamond$
modality we mentioned above. We extend the interpretations $w\vDash$ to cover these modalities as follows:
\begin{align*}
& w\vDash [R]\varphi & \text{iff }w'\vDash \varphi\text{ $\forall w'$ s.t. $(w,w')\in R$} \\
& w\vDash \langle R\rangle \varphi & \text{iff $\exists w'$ with $(w,w')\in R$ and }w'\vDash \varphi
\end{align*}
It is easy to check that $w\vDash [R]\varphi$ iff $w\vDash \neg\langle R\rangle \neg \varphi$.

Apart from the time modality, we will use several other modalities to capture the capabilities 
and permissions of participants in a system over the course of this paper. Of particular interest will
be a modality of knowledge, with $\B{\KC_A}\varphi$ representing that $A$ knows that $\varphi$ is true,
and $\langle \KC_A\rangle \varphi$ representing that $A$ considers $\varphi$ possible.
As implied by the box-diamond duality, this modal-logic interpretation of knowledge is based on
an interpretation of ignorance (non-knowledge) of the truth of a formula $\varphi$ as
arising from the ignorant party being unsure as to the real state of the world they
find themselves in, and thus being unable to rule out the possibility that $\neg\varphi$ is true.
This way of modelling knowledge has been used successfully in a variety of contexts \cite{rak}, and also corresponds naturally to a view in security research \cite{askarov2007gradual,DBLP:journals/tissec/HalpernO08,DBLP:conf/pldi/BalliuDG11} where
limitations to knowledge derive from having a partial view of the system (and thus
not being able to rule out possible counterfactual states of parts of the system that one
may not see).

\subsection{Language-Based Security}
\def\Vars{V}
\def\Val{\mathsf{Val}}

Information flow control is a popular security framework for reasoning about 
dependencies between information sources and sinks, 
ensuring that these dependencies adhere to desirable (trace-based)
security properties. In a language-based setting, this security framework has the following components: 
(1) a program model which is given by the execution semantics of a
program; (2) an attacker model specifying the capabilities of an attacker over program executions (e.g., as read and write capabilities); (3) a permission model 
(or security policy) specifying the allowed or disallowed information flows in the program model; 
(4) a security property (or security condition) providing an end-to-end (extensional) characterisation of a program’s security with respect to these three components,
essentially ensuring that the attacker capabilities do not exceed the permissions.
A classical security property is noninterference \cite{goguen-meseguer}, which requires that any two executions starting with
equal values on public sources yield equal values on public sinks. This amounts to an attacker capable of only reading public sources and sinks, 
and a permission model which disallows information from secret sources to public sinks.

\vspace{0.5em}
\noindent \textbf{Program model.} Throughout this paper, we will discuss several (trace-based) security properties
of programs that occur in the literature. To describe these properties,
we define a model of programs written in a standard imperative \textsf{while}-language.
This model closely resembles those used in other security literature \cite{DBLP:journals/jsac/SabelfeldM03},
so we can use it to give the standard definitions;
and as we shall see, programs in this model can also be straightforwardly converted into
\textit{security Kripke frames}.
 Formally, 
\begin{definition} \label{def:program}
A program $P$
is a tuple $\langle p, V, I\rangle$, where 
\begin{itemize}
\item $p\in \mathsf{Programs}$ is the program text, drawn from a simple imperative $\mathsf{while}$ language.
\item $\Vars$ is the set of variables on which $p$ operates.
\item $I$ assigns to each variable $v$ a domain $\Val_v$ of values that it can take. For simplicity, we write $\Val \triangleq \bigcup_v  \Val_v$.  
\end{itemize}
\end{definition}
We assume that the program semantics are given by a \textbf{deterministic}
small-step semantical relation $(\rightarrow)$,
which relates \emph{configurations} (pairs of programs and \emph{stores}) in $(\mathsf{Programs}\times (\Vars\rightarrow \Val))$.
\def\Tr{\mathrm{Tr}}
\begin{definition} \label{def:trace} \begin{enumerate}[(i)]
\item For a program $P$, the transition relation $\rightarrow$ induces a set of \emph{possible traces} $\Tr(P)$.
This set contains, for each initial store
$\sigma_0: \Vars\rightarrow \Val$, where $\sigma_0(v)\in I(v)$ for all $v\in \Vars$,
all finite (partial) traces
$\langle p, \sigma_0\rangle \rightarrow \ldots \rightarrow \langle p_n, \sigma_n\rangle$.
\item When $t=\langle p, \sigma_0\rangle \rightarrow \ldots $ is either infinite or can not be extended by another transition $\rightarrow$, we refer to $t$ as \emph{maximal}. By determinism, every trace can be uniquely extended to a maximal trace.
\item If the maximal trace is finite, it must end in a halting state. In that case, we denote the unique maximal trace extending a given trace $t$ by $t_{\Downarrow}$.
\item When $t=\langle p, \sigma_0\rangle$, we refer to $t$ as \emph{initial} or \emph{singleton}. Every trace is the extension of a unique initial trace.
\end{enumerate}
\end{definition}

Note that infinite maximal traces are not contained in $\Tr(P)$, since they do not represent a state in which the system could actually be during its execution.

In examples throughout the paper, we will take programs to be given in a
standard imperative language with bounded and unbounded loops and mutable variables, but without functions.
To make the presentation frictionless, we assume this language to come
with appropriate operators to manipulate various data types, such as list concatenation $::$
and set union $\cup$.
\def\view{\mathsf{view}}
\def\fix{\mathsf{fix}}

\def\op{\mathsf{op}}

\vspace{0.5em}
\noindent \textbf{Capabilities and permissions.} In order to reason about the security of a program, we
define what actions (\emph{capabilities}) are available to the \emph{agents} participating in it,
and the properties we want to be preserved regardless of the actions any agent may take --
or, complementarily, what properties we want to \emph{permit} each agent to violate.
All the security properties from the literature we discuss in this paper are based on a particular
setup, where each agent's capabilities and permissions are defined in terms of being
allowed to read and write some subset of variables in $\Vars$.
We refer to the structure that contains this data for each agent $A$ as 
 the \emph{standard security context} (s.s.c.).

The read capabilities of the agent $A$ are described by a set of variables $R(A)$. Following the approach of Myers et al. \cite{DBLP:conf/csfw/MyersSZ04}, we take every agent to be able to observe whenever a value changes, but not to have an intrinsic ability to keep track of how much time passed between changes.
Each agent's read \emph{permissions} exactly coincide with their read \emph{capabilities}.
Moreover, each agent has the capability to influence the \emph{initial} state of another subset of variables $W(A)$,
which can be seen as attacker inputs to the program.
In terms of permissions, the agent is taken to be allowed to influence the contents of those same variables throughout the execution. 

Formally, for a program $P$ and set of agents $\mathcal A$, the s.s.c. includes maps $W, R: \mathcal A\rightarrow \mathcal{P}(\Vars)$.
These maps induce a projection of configurations to what each agent can read and must leave invariant when writing. Subsequent configurations in a trace might have the same projection for $A$, resulting in $A$ not being able to tell that a step occurred; to eliminate such repetition, we define a \emph{destuttering} operator $\mathsf{dest}(\vec c)$ as the longest subvector of $\vec c$ such that no two subsequent entries of it are equal, e.g. $\mathsf{dest}( \langle 0,0\rangle, \langle 0,0\rangle, \langle 1,0\rangle, \langle 1,0\rangle, \langle 0,0\rangle) = ( \langle 0,0\rangle, \langle 1,0\rangle, \langle 0,0\rangle)$. %
We then define the view $\view$ and write invariant $\fix$ for $A$ as follows:%
\begin{align*}
&  \view_A(\langle p, \sigma_0\rangle \rightarrow \ldots \rightarrow \langle p_n, \sigma_n\rangle) 
= \mathsf{dest} \langle \sigma_0|_{\scaleto{R(A)\mathstrut}{5pt}}, \ldots, \sigma_n|_{\scaleto{R(A)\mathstrut}{5pt}} \rangle \\
&  \fix_A(\langle p, \sigma_0\rangle \rightarrow \ldots \rightarrow \langle p_n, \sigma_n\rangle) 
= \mathsf{dest} \langle \sigma_0|_{\scaleto{\Vars \setminus W(A)\mathstrut}{5pt}}, \ldots, \sigma_n|_{\scaleto{\Vars \setminus W(A)\mathstrut}{5pt}} \rangle 
\end{align*}

We use $\sigma|_S$ to refer to the function $\sigma$ with domain restricted to $S$.
For example, in a program with variables $\{a,b,c,d\}$, we may have
$W(A)=\{c,d\}$, $R(A)=\{b,c\}$, so $A$ may write to $c$ and $d$, and read from $b$ and $c$.
Then a store $\sigma$ can be described as a 4-tuple $(a,b,c,d)$, and for example
\todo{RG: Do we want this case where wirte is not a subset of read?}
\begin{align*}
\view_A &(\langle c:=d; a:=b, (1,2,3,4) \rangle ) \\
 & \rightarrow \langle a:=b, (1,2,4,4) \rangle \\
 & \rightarrow \langle \mathsf{stop}, (2,2,4,4) \rangle) 
 &= ( (2,3), (2,4) ). \\
\fix_A &(\ditto[48pt] ) 
 &= ( (1,2), (2,2) ). 
\end{align*}
Note that in this example, reads are observed at steps 1 and 2,
while writes are observed at steps 1 and 3.

\section{Security framework}\label{sec:sf}

\subsection{Security Kripke Frames}

Security is typically defined in terms of the capabilities and permissions of a set of agents involved
in the operation of a system. We want to capture this data to reason about it. Both the capabilities and permissions
can be passive or active: an agent may observe something about the system's execution, or influence it by changing the
state. 
We represent systems subject to our security analysis as Kripke frames whose worlds correspond to possible
state of the system at a given point in time. These frames are equipped with a time
relation describing the \changed{evolution, and hence the operational behaviour,} of the system (assumed to be deterministic in this paper),
and a number of other relations representing the knowledge, capabilities and permissions
of the agents involved.

\begin{definition}A \emph{security Kripke frame} over a set of worlds $W$ and agents $A$ 
is a frame equipped with
\begin{itemize}
\item a time relation $T$, which is a union of total orders
on disjoint subsets of $W$;
\item for each agent $A$, 
\begin{itemize}
\item a pair of equivalence relations $\KC_A$ and $\KP_A$, which relates
two worlds if $A$ is, respectively, incapable and not permitted to distinguish them.
\item  a pair of transitive and reflexive relations $\WC_{A}$ and $\WP_{A}$, which relates two worlds $(w_1,w_2)$ if $A$ is, respectively, capable and permitted to move the system from state $w_{1}$ to $w_{2}$.
\end{itemize}
\end{itemize}
\end{definition}
We will generally work in settings in which $\KC_A=\KP_A$, and 
write both of them simply as $\K_A$ for compactness.

The $T$-connected components play a special role, as each of them
defines a \emph{deterministic} (by virtue of its totality on them:
for any two worlds, we can always say that one of them precedes the other, or they are identical) history
of how the system may evolve assuming no agent actions. %
\begin{definition}
A \emph{run} is an inclusion-maximal set of worlds $W$ on which $T$ is a total order.
\end{definition}

We assume that the worlds themselves come with a standard propositional logic
(that is, atomic propositions linked by $\wedge$, $\vee$, $\neg$ and $\Rightarrow$,
partially interdefineable)
of statements about the system's state in those worlds. This interpretation
can then be extended into a modal logic as described in Section \ref{sec:modallogic}.

\subsection{Security Kripke Frames from Programs}

We can define a conversion into our Kripke framework for any program and security context, though some useful
properties will hold of the resulting security Kripke frame when the security context is an s.s.c. in particular.

\begin{definition} \label{def:kripint}
Given a program $P$ with security context $S$, the \emph{Kripke interpretation} of $P$ and $S$ is
the security Kripke frame with
\begin{itemize}
\item possible worlds: %
The set of possible finite traces $\Tr(P)$, as those
represent states that the system may actually be in.
\item A time relation $T$ which relates two worlds if one is a (non-strict) prefix of the other.
\item For each agent $A$, $\KC_A=\KP_A$ relate two worlds $w^1$, $w^2$ if
$\view_A(w^1) = \view_A(w^2)$. 
\item For each agent $A$, $\WC_A$ relates two worlds $w^1$, $w^2$
if they are initial traces (i.e. length-1) and $\fix_A (w^1) = \fix_A (w^2)$.
Agents are only capable of changing the initial state of the system, and can not
interfere with it after execution commences.
\item For each agent $A$, the write-permission relation $\WP_A$ relates two worlds $w^1$, $w^2$ if
$\fix_A(w^1) = \fix_A(w^2)$. %
\end{itemize}
\end{definition}

In the Kripke interpretation of a program, we give an interpretation to atomic formulae which describe the state of the program at known points in the trace, as follows:
\begin{itemize}
\item $w\vDash v@\tau=i$ iff $w=(\langle p_0,\sigma_0\rangle, \ldots, \langle p_n,\sigma_n\rangle)$, $\tau\leq n$ and $\sigma_\tau(v)=i$.
This formula denotes that at least $\tau$ steps have been made, and $v$ had value $i$ at the $\tau$th step.
\item $w\vDash {\Downarrow}$ iff the program has already terminated in $w$, that is, $p_n=\mathsf{stop}$.
In particular, $w\vDash \Diamond{\Downarrow}$ means that the run in which $w$ lies will terminate, and $\Box\neg{\Downarrow}$ means that it diverges.
\end{itemize}
Atoms which are true at a world are also true at all of their $T$-successors. This property turns out to be desirable for more complex formulae as well, characterising those formulae that can be thought of as describing \emph{events} or \emph{memories} that once made can not be invalidated by additional information. 

\begin{definition}
A formula $\varphi$ is \emph{temporally sound (t.s.)} in a security Kripke frame if for all worlds $w$, 
$w\vDash \varphi$ iff $w\vDash \Box \varphi$.
\end{definition}
The intuition will be made precise when we define perfect recall (Def. \ref{def:pr}),
as we will see that it is only possible for those formulae that are temporally sound. 

While we focus on one particular attacker model, many other attacker models
can be directly reduced to it or modelled with small adaptations. We discuss
this in more detail in Remark \ref{rmk1}.

\subsection{Depicting Security Kripke Frames}

We will often illustrate (parts of) security Kripke frames as graphs, where the worlds are nodes
and some or all relations are depicted as edges or (for equivalence relations) partitions
of the worlds. We use black directed edges for $T$, wavy lines (blue and dashed for $\WP_A$, red for $\WC_A$ where relevant) for the write relations and circles for the $\KC_A$ equivalence classes, typically referred to as \emph{knowledge sets}. We may omit drawing edges and circles where transitivity, reflexivity etc. of the respective relation imply them, or the relation in question is not relevant to the discussion.

\begin{example} Consider the program $b:=a$, which
copies the value of one variable into another in one time step. Depending on the
associated s.s.c., the security Kripke frame representation of this program changes.
We can illustrate all types of relations by considering just two s.s.c.s with one
agent $A$ each: one in which $A$ can read $b$ but not $a$ (and not write anything),
and one in which $A$ can write $a$ but not $b$ (and read everything).
Assuming the possible initial configurations are such $b$ is always $0$ and $a$
is either $0$ or $1$, we obtain two security Kripke frames, depicted next to each other here
(worlds labelled $ab$):
\begin{center}
\begin{tikzpicture}
    \runfinite{2}{1}{0}{0}
    \runfinite{2}{0}{1.5}{0}

     \draw [color=\perclr] (0.75,0) ellipse (1.15 and 0.3);
     \draw [color=\perclr] (0,1) ellipse (0.35 and 0.3);
     \draw [color=\perclr] (1.5,1) ellipse (0.35 and 0.3);

    \node [fill=white,inner sep=0.3pt] (l1) at (m00) {\small $00$ }; 
    \node [fill=white,inner sep=0.3pt] (l1) at (m10) {\small $10$ }; 

    \node [fill=white,inner sep=0.3pt] (l1) at (m01) {\small $00$ }; 
    \node [fill=white,inner sep=0.3pt] (l1) at (m11) {\small $11$ }; 

\end{tikzpicture}
\hspace{1em}
\begin{tikzpicture}
    \runfinite{2}{1}{0}{0}
    \runfinite{2}{0}{1.5}{0}

     \draw[color=\actclr, wave] (w00) to[bend left]  (w10);
     \draw[color=\perclr, wave, densely dashdotted] (w00) to[bend right]  (w10);

    \node [fill=white,inner sep=0.3pt] (l1) at (m00) {\small $00$ }; 
    \node [fill=white,inner sep=0.3pt] (l1) at (m10) {\small $10$ }; 

    \node [fill=white,inner sep=0.3pt] (l1) at (m01) {\small $00$ }; 
    \node [fill=white,inner sep=0.3pt] (l1) at (m11) {\small $11$ }; 
\end{tikzpicture}
\end{center}
It should be noted that each world actually contains more data than 
the labels imply: for example the world labelled $11$ at the top
left is really the trace $\langle b:=a, 10\rangle \rightarrow \langle (), 11\rangle$. 
We omit this data for concision, as the whole trace may be reconstrcuted from
looking at the run.

{%
\begin{remark}
\begin{enumerate}[(i)]
\item If we wish to model each agent as having the ability to track time, we can encode a clock variable as part of the transition semantics, and include it in $R(A)$. 
If all agents can observe the same clock variable (and none of them can write to it), we call the system \emph{synchronous}. 
  \item
        Some works instead choose to model interactive programs as having explicit FIFO output and/or input channels \cite{ONeillCC06}. Our model subsumes this, as we can represent the output channel of an agent by a special list-valued variable $O$, and define syntactic sugar $\mathsf{output}(e) \triangleq O:=(e::O)$.
Observations are generated exactly when this variable changes value, i.e. a new output is prepended. Similarly, programs with input channels can be modelled by considering an \textit{input stream}  $I$, and define $\mathsf{input}(x) \triangleq x:=\mathsf{head}(I); I:=\mathsf{tail}(I)$. Clark and Hunt  have shown that for deterministic interactive systems, streams are sufficient to model arbitrary interactive input strategies \cite{CH08}.
\item If we wish to model an agent not being able to observe a variable $v$ until some condition such as program termination is met, we can create a duplicate $v'$, put only $v'$ in $R(A)$ and have the semantics
force $v':=v$ iff the condition is met.
\item We sometimes encounter a seemingly more powerful attacker model, where the program contains some distinguished ``holes'' in which the attacker can insert code. This is actually equivalent to the model we use:
we can augment the program with fresh, attacker-writable variables whose contents encode the attacker code for each hole,
and fill the holes with copies of a special program that acts as an interpreter on the contents of the corresponding new variable.
\end{enumerate}\label{rmk1}
\end{remark}
}

\end{example}

\subsection{Properties of Security Kripke Frames}
Security Kripke frames can exhibit several natural properties which are useful in proving facts about their behaviour,
many of which are guaranteed to hold when a frame was obtained from an s.s.c. (and more generally
for many particularly well-behaved settings).

The first property is a straightforward consequence of the complete set of initial stores and projection-based
definitions of write and read permissions, but may hold in more general settings.
\begin{definition} In a Kripke frame, two relations $A$ and $B$ are said to \emph{commute}
if whenever $w_1 \sim_A w_2 \sim_B w_3$, there exists a $w_4$ such that
$w_1 \sim_B w_4 \sim_A w_3$, and likewise for $A$ and $B$ swapped.
\end{definition}
\begin{proposition} \label{prop:allcommute}
In a Kripke interpretation of a program $P$ with s.s.c. $S$,
for all agents $A$, $\WC_A$ commutes with $\KC_A$.
\end{proposition}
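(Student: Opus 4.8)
The plan is to unfold the three relations from Definition~\ref{def:kripint} and reduce the claim to a recombination property of stores. The structural fact doing most of the work is that $\WC_A$ is supported entirely on \emph{initial} (length-$1$) traces, and all such traces of $P$ carry the same program text~$p$; moreover, destuttering is trivial on one‑element vectors, so for an initial trace $\view_A(\langle p,\sigma\rangle)=\langle\sigma|_{R(A)}\rangle$ and $\fix_A(\langle p,\sigma\rangle)=\langle\sigma|_{\Vars\setminus W(A)}\rangle$. Hence in a chain $w_1\sim_{\WC_A}w_2\sim_{\KC_A}w_3$ in which a $\WC_A$-partner for $w_3$ can exist at all (so $w_3$ is an initial trace), all three worlds are initial traces $\langle p,\sigma_1\rangle,\langle p,\sigma_2\rangle,\langle p,\sigma_3\rangle$, and the chain says precisely that $\sigma_1$ and $\sigma_2$ agree on $\Vars\setminus W(A)$ while $\sigma_2$ and $\sigma_3$ agree on $R(A)$.

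The main step is to build the witness $w_4$ by \emph{amalgamating} stores: let $\sigma_4$ agree with $\sigma_1$ on $R(A)$, with $\sigma_3$ on $\Vars\setminus W(A)$, and (arbitrarily, say with $\sigma_1$) on the leftover variables $W(A)\setminus R(A)$. This is consistent, since on the overlap $R(A)\cap(\Vars\setminus W(A))$ the first agreement gives $\sigma_1=\sigma_2$ and the second gives $\sigma_2=\sigma_3$, so the two prescriptions coincide there. Because $P$ carries the \emph{complete} family of initial stores — every $\sigma_0$ with $\sigma_0(v)\in I(v)$ for all $v$ — and $\sigma_1,\sigma_3$ already take values in those per‑variable domains, $\sigma_4$ is again a legitimate initial store, so $w_4:=\langle p,\sigma_4\rangle\in\Tr(P)$ is a genuine world. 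Reading off the projections then yields $\view_A(w_4)=\langle\sigma_1|_{R(A)}\rangle=\view_A(w_1)$, i.e.\ $w_1\sim_{\KC_A}w_4$, and $\fix_A(w_4)=\langle\sigma_3|_{\Vars\setminus W(A)}\rangle=\fix_A(w_3)$ with both worlds initial, i.e.\ $w_4\sim_{\WC_A}w_3$. The opposite direction, from $w_1\sim_{\KC_A}w_2\sim_{\WC_A}w_3$, is entirely symmetric: now $w_2,w_3$ are forced to be initial, and one amalgamates so that $\sigma_4$ agrees with $\sigma_1$ on $\Vars\setminus W(A)$ and with $\sigma_3$ on $R(A)$, with the identical overlap check.

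I expect the substance of the argument to be exactly this store recombination together with the appeal to completeness of the initial‑store family; the overlap‑consistency check is the single place that actually consumes the hypothesis (via transitivity of the two partial agreements), and the rest is bookkeeping. The one subtlety worth pinning down — and the most likely place for a gap — is \emph{which} worlds may sit at the ends of a mixed $\WC_A/\KC_A$ chain: a $\KC_A$-neighbour of an initial trace need not itself be initial, since a first step that leaves $A$'s projection unchanged is erased by destuttering, and $\WC_A$ relates nothing to such a non‑initial world. So the clean reading of the statement is about chains whose endpoints are initial traces (which is the case the downstream lemmas invoke), and that reduction should be stated explicitly before running the amalgamation.
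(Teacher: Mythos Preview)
Your approach is exactly what the paper has in mind: it does not give an explicit proof, only the remark that the proposition is ``a straightforward consequence of the complete set of initial stores and projection-based definitions of write and read permissions,'' and your store-amalgamation argument (with the overlap consistency check on $R(A)\cap(\Vars\setminus W(A))$) is precisely the unpacking of that sentence.

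Your closing caveat is well-taken and in fact goes beyond what the paper addresses. The proposition as literally stated can fail when the free endpoint of the chain is a non-initial trace that is $\KC_A$-related to an initial one via destuttering (e.g.\ a length-$2$ trace whose first step changes no $R(A)$-visible variable): such a world has no $\WC_A$-partner, so the required $w_4$ need not exist. The paper does not flag this. Your proposed resolution---restricting to chains whose endpoints are initial traces---is both sound and sufficient for every downstream use: in the proof of Theorem~\ref{thm:syncrd}, commutativity is only invoked between singleton $T$-ancestors, and the simplification remark after Definition~\ref{def:rd1} likewise only concerns worlds where $\WC_A$ acts nontrivially.
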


The next property simply describes that an agent may never forget something that it has once known. This may seem like an
imposition at first sight considering boundedness of memory, but in fact is usually a safe conservative assumption as 
situations in which an attacker actually can not muster enough memory to store a piece of information to execute an attack are rare.
\begin{definition} \label{def:pr} A security Kripke frame satisfies \emph{perfect recall} (for an agent $A$) if
for all t.s. $\varphi$,
$\B{\KC_A}\varphi \Rightarrow \Box \B{\KC_A} \varphi$.
\end{definition}
\begin{proposition} 
The Kripke interpretation of a program $P$ with s.s.c. $S$ satisfies perfect recall.
\end{proposition}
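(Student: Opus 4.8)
The plan is to unwind the modal definitions and reduce the statement to a combinatorial observation about how $\view_A$ behaves along prefixes of a trace. Fix an agent $A$, a temporally sound $\varphi$, and a world $w$ with $w \vDash \B{\KC_A}\varphi$; we must show $w \vDash \Box\B{\KC_A}\varphi$. Spelling out $\Box$ and $\B{\KC_A}$ in the Kripke interpretation, and recalling that $T$ is the (non-strict) prefix relation while $\KC_A$ is equality of $\view_A$, this amounts to: for every world $w''$ having $w$ as a prefix and every world $w'''$ with $\view_A(w'') = \view_A(w''')$, we have $w''' \vDash \varphi$. I would first note that temporal soundness of $\varphi$ is exactly upward closure along $T$: since $T$ is reflexive, $w \vDash \Box\varphi$ always implies $w \vDash \varphi$, so the content of t.s.\ is that $w \vDash \varphi$ together with ``$w$ a prefix of $w'$'' forces $w' \vDash \varphi$. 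Hence it suffices to exhibit a prefix $w'''_0$ of $w'''$ with $w'''_0 \vDash \varphi$.

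The key lemma is that $\mathsf{dest}$ can be computed by a single left-to-right scan that emits each entry unless it equals the previously emitted one; two consequences follow. First, $\view_A$ is \emph{prefix-monotone}: if a trace $u$ is a prefix of a trace $v$, then $\view_A(u)$ is a prefix of $\view_A(v)$. Second, extending a trace by one transition changes its $\view_A$ by either nothing or a single appended entry (because $\mathsf{dest}$ preserves the last element of its input), so the \emph{length} of the view grows by $0$ or $1$ at each step. Now apply prefix-monotonicity to $w$ (a prefix of $w''$) and then use $\view_A(w'') = \view_A(w''')$ to conclude that $\view_A(w)$ is a prefix of $\view_A(w''')$; in particular $1 \le |\view_A(w)| \le |\view_A(w''')|$. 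Walking from the singleton prefix of $w'''$ (whose view has length $1$) up to $w'''$ itself, the view-length therefore takes every value between $1$ and $|\view_A(w''')|$, so some prefix $w'''_0$ of $w'''$ satisfies $|\view_A(w'''_0)| = |\view_A(w)|$; as $\view_A(w'''_0)$ and $\view_A(w)$ are then prefixes of $\view_A(w''')$ of equal length, they coincide, i.e.\ $w$ and $w'''_0$ are $\KC_A$-related.

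To finish: from $w \vDash \B{\KC_A}\varphi$ we obtain $w'''_0 \vDash \varphi$, and since $w'''_0$ is a prefix of $w'''$, temporal soundness yields $w''' \vDash \varphi$, as required. (The object $w'''_0$ is a legitimate world, since a prefix of a trace in $\Tr(P)$ is again a finite partial trace in $\Tr(P)$.) The only real obstacle is the destuttering in the definition of $\view_A$: without it one would simply take the length-$|w|$ prefix of $w'''$, but since $\mathsf{dest}$ may collapse arbitrarily long constant stretches, the matching prefix must instead be located by the monotonicity-plus-unit-step argument above. Everything else — reflexivity of $T$, $\KC_A$ being view-equality, and closure of $\Tr(P)$ under prefixes — is routine bookkeeping.
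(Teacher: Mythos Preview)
Your proof is correct. The paper itself does not supply a proof of this proposition (it is stated without proof, as are the neighbouring propositions about commutation and characteristic formulae), so there is nothing to compare against; your argument fills the gap cleanly. The core idea --- that $\view_A$ is prefix-monotone and grows in length by at most one per step, so an intermediate-value argument locates a prefix $w'''_0$ of $w'''$ that is $\KC_A$-related to $w$ --- is exactly what is needed, and your handling of temporal soundness as upward closure along $T$ is right. The only step worth tightening in presentation is the claim that $\view_A(w'''_0)$ is a prefix of $\view_A(w''')$: you use this implicitly when concluding that two equal-length prefixes of $\view_A(w''')$ must coincide, and it follows from the same prefix-monotonicity you already established, so you might want to make that invocation explicit.
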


The restriction to temporally sound $\varphi$ is very nearly necessary, because given perfect recall, a formula that fails to be temporally sound at a world $w$ (so $w\vDash \neg\Box\varphi$) can not be known there, as 
 $\Box \B{\KC_A}\varphi$ implies $\Box\varphi$.

 The following property guarantees that the predicates are expressive enough to distinguish histories of the system. 
\begin{definition} A Kripke frame has \emph{characteristic formulae for prefixes}
if for every world $w$, there is a formula $\varphi_w$ that is only true at $w$ and its time successors $w'$ ($(w,w')\in T$).
\end{definition}
\begin{proposition}Security Kripke frames derived from s.s.c.s have characteristic formulae for prefixes.
\end{proposition}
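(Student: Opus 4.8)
The plan is to write down, for each world $w$, an explicit characteristic formula $\varphi_w$ using only the atoms $v@\tau = i$, and then verify it holds at exactly $w$ and the worlds extending it. Fix $w = \langle p,\sigma_0\rangle \to \langle p_1,\sigma_1\rangle \to \cdots \to \langle p_n,\sigma_n\rangle \in \Tr(P)$. The variable set $\Vars$ is finite (it is the set of variables syntactically occurring in the \textsf{while}-program $p$); I also assume $\Vars\neq\emptyset$, which is w.l.o.g. since a variable-free program can be padded with a fresh dummy variable of singleton domain that no agent may read or write, leaving all relations of the security Kripke frame unchanged. Pick any $v_0\in\Vars$ and set
\[
  \varphi_w \triangleq \Bigl(\,\bigwedge_{v\in \Vars} v@0 = \sigma_0(v)\Bigr) \wedge \bigl(v_0@n = \sigma_n(v_0)\bigr),
\]
a finite conjunction of atoms.

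For soundness I would observe that any $T$-successor $w'$ of $w$ literally contains $w$ as a sequence-prefix, so its configurations $0,\dots,n$ coincide with those of $w$: hence $w'$'s initial store is $\sigma_0$ (first conjunct) and $w'$ has made at least $n$ steps with $v_0$ equal to $\sigma_n(v_0)$ at step $n$ (second conjunct), so $w'\vDash\varphi_w$; the case $w'=w$ is included. For completeness, suppose $w'\vDash\varphi_w$. The first conjunct pins down the initial store of $w'$ to be $\sigma_0$; by determinism of $\to$ (Definition~\ref{def:trace}, which gives a unique maximal trace $t^\ast$ from each initial store, of which every trace with that initial store is a prefix), both $w$ and $w'$ are prefixes of the same $t^\ast$, hence comparable. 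The second conjunct forces $w'$ to have at least $n$ steps, while $w$ has exactly $n$, so $w$ is a prefix of $w'$, i.e. $(w,w')\in T$. Thus $\varphi_w$ is true exactly at $\{w' : (w,w')\in T\}$.

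The only subtle point — and the part I expect to be the main obstacle — is that the atomic vocabulary cannot speak directly about the length of a trace, only about a variable's value at a fixed step index; the trick that makes the second conjunct serve as a ``length-$\geq n$ certificate'' is that once the initial store has been fixed by the first conjunct, determinism makes $\sigma_n(v_0)$ the unique possible value of $v_0$ at step $n$, so $v_0@n=\sigma_n(v_0)$ holds precisely at those extensions of $w$ that reach step $n$ — which is all of them. This is also exactly why the variable-free case is the sole obstruction, removed by the harmless padding above. Note that the argument uses nothing about the s.s.c.\ beyond the general shape of the Kripke interpretation and determinism of $\to$, so it in fact holds for the Kripke interpretation of a program under an arbitrary security context.
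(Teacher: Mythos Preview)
The paper states this proposition without proof, but a glance at how characteristic formulae are later deployed (e.g.\ in the proof of Theorem~\ref{thm:inteeq}, where the authors take ``$\varphi_{w^2}$ the characteristic formula for the run of $w^2$, $\tau$ the length of $w^2_\Downarrow$, $a$ an arbitrary variable and $v$ the value it takes in $w^2_\Downarrow$, so $a@\tau=v$ is true at $w^2_\Downarrow$ and false before it'') makes clear the intended construction is exactly yours: pin down the initial store via $\bigwedge_{v\in\Vars} v@0=\sigma_0(v)$, then certify the length with a single atom $v_0@n=\sigma_n(v_0)$, which---once the initial store and hence the whole run is fixed by determinism---is equivalent to ``at least $n$ steps have elapsed''. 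Your soundness and completeness verification is correct.

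One small wrinkle in the padding argument: adding a dummy variable does give an isomorphic relational structure, but it also enlarges the atomic vocabulary, and the characteristic formula you build in the padded frame uses an atom $d@n=*$ that simply does not exist in the original frame's language. So padding does not literally establish the claim for the original variable-free frame. In fact, when $\Vars=\emptyset$ there is a single initial store and hence a single run, the only available atom is $\Downarrow$, and distinct non-terminal intermediate worlds are genuinely indistinguishable by any formula in the frame's own language---so characteristic formulae fail to exist there. This is a degenerate edge case the paper tacitly excludes (and which its own later proofs, picking ``an arbitrary variable $a$'', also presuppose away); your argument for $\Vars\neq\emptyset$ is exactly what is needed, and it would be cleaner simply to state that hypothesis rather than to pad.
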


Often, we assume that if a program has terminated, this becomes known
to every observer. Sometimes this assumption is explicit, but the more
common use of it is implicit in many older works of language-based security literature,
where the attacker is implicitly treated as receiving the program's output iff
the program in fact terminated, so the circumstance of seeing an output trace at all
conveys knowledge that the program has halted. Rather than encode a particular mechanism
by which this knowledge is conveyed in our representation of programs, we can formulate
a simple condition which encodes that this knowledge is made available by some mechanism
or another.

\begin{definition} \label{def:sigterm} A program $P$ \emph{signals termination}
if in its Kripke interpretation, for all agents $A$ and worlds $w$,
$$ w\vDash {\Downarrow} \Rightarrow \B{\KC_A}{\Downarrow}. $$
\end{definition}

\section{Security properties}

Having defined our security framework, we now show how it can be used to 
represent
a number of security properties as modal formulae which must be 
satisfied everywhere in the system. While our framework is general and extends to any system model that meets the assumptions
in Section \ref{sec:sf} (see Remark \ref{rmk1}), we show how it compares to popular trace-based security properties in the area of language-based security. As we will see,
the beauty of modal logic is that it naturally captures a range of intricate security properties pertaining to (complex interactions of) confidentiality and integrity,
and, at the same time, illuminates a number of issues that are not apparent from existing trace-based definitions of security properties.

In general, the security properties we propose all take a form that may be glossed as
$$w \vDash \text{potential problem} \Rightarrow \text{excuse}. $$
More specifically, we first identify a potentially problematic circumstance (for example, that
$A$ may eventually come to know a fact $\varphi$), and then describe a sufficient set
of conditions for this circumstance to not actually rise to the level
of a violation of the property (such as that $A$ may have known $\varphi$ from
the start, and hence did not just obtain it through the release of a secret, 
or that the system designer explicitly chose to approve the release of information).
Often, the ``problem'' will be some formula derived from the attacker's capabilities,
describing that the attacker can attain some effect or knowledge which might be questionable;
and the ``excuse'' will be a formula derived from the permissions,
describing that the effect or knowledge was in fact permitted under the security policy.

However, this is by no means a hard and fast rule, as we will find that
some security properties are naturally expressed in terms of problems being ``excused'' by other
capabilities. The problem itself may also involve permissions in some capacity,
such as when it recursively talks about some outcome not being permitted.
In general, one should keep in mind that there is no unique
way of writing any property in our framework, and we will therefore
choose formulations based on ease of explanation and expedience for equivalence proofs.

\subsection{Confidentiality}

Confidentiality is the property that no secret information is leaked to the public.
In a deterministic setting, if no new information is introduced into a (part of a) system,
then future states can be completely predicted from past ones. 
Conversely, if there are two states in which a system looks the same to an observer,
but proceeds to evolve differently, then the two states must have differed in a part
the observer could not see. If the observer then later gets to observe a difference
after all, he can rule out one or the other, and thereby learn something about the unseen
part. 
As in the security model underlying s.s.c.s, we conflate what an agent is capable of observing and what
the agent has the permission to know, though it would be easy to extend this
definition to situations where this is not the case (as we show for declassification in the appendix).

Per the above idea, the notion of confidentiality
therefore amounts to postulating that an agent's knowledge does not
increase over time, or more precisely that if an agent comes to know anything
in the future, then they must have known that this will be the case eventually
from the start. Formally, we make the following definition.
\begin{definition} \label{def:conf}
A security Kripke frame satisfies \emph{confidentiality} if
for all agents $A$, worlds $w$ and t.s. formulae $\varphi$,
$$w \vDash \Diamond \B{\KC_A}\varphi \Rightarrow \B{\KP_A}\Diamond \varphi. $$
\end{definition}
The prototypical example of a program that violates this definition is
the program $b:=a$ from the previous section, with $a$ secret and $b$ public:
in this program, at the initial world where $a$ equals $0$, the agent
will eventually come to know that $a$ was equal to $0$ ($\Diamond \B{\KC_A} (a@0=0)$), but is not allowed to 
know it initially ($\neg \B{\KP_A}\Diamond (a@0=0)$).
The same in fact holds at time $1$ for $b@1=0$, even though $a$ is allowed to know $b$, as this knowledge would also imply knowing the value of $a$.
\vspace{0.5em}
\noindent \textbf{Comparing to trace-based properties.}
Though it is impractically restrictive (as it does not
allow systems to have any public effect depending on secret data at all),
confidentiality is a frequently-invoked baseline security property,
which is often defined as a property of pairs of program traces. 
A simple definition in this style, which is equivalent to ours, can be given as follows.
\begin{definition} \label{def:confts}
(Adapted from \cite{DBLP:conf/csfw/MyersSZ04}, Def. 3)
A program $P$ and s.s.c. $S$
with read permissions $R$ satisfies \emph{trace confidentiality}
if for all agents $A$, initial stores $\sigma$
and pairs of assignments $v_1$ and $v_2$ to secret-to-$A$ variables
in $X=V(P)\setminus R(A)$,
if $t^i$ denotes the maximal trace generated from 
$\langle p,\sigma[X\mapsto v_i]\rangle$, then
 $\view_A(t^1)=\view_A(t^2)$.
\end{definition}

We establish the correctness of our definition by
showing that as we convert programs and s.s.c.s into
security Kripke frames by Definition \ref{def:kripint},
the resulting frame satisfies
the modal logic definition iff the original program satisfied
the trace-based definition.

\begin{theorem} \label{thm:confbasic} Suppose $P$ is a program with s.s.c. $S$.
Then $P$ and $S$ satisfy Def. \ref{def:confts}
iff the Kripke interpretation of $P$ and $S$ satisfies
Def. \ref{def:conf}.
\end{theorem}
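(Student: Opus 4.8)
The plan is to prove both implications directly, working with the explicit description of the Kripke interpretation: worlds are the finite traces in $\Tr(P)$, the relation $T$ is the prefix order, and (since $\KC_A=\KP_A$, written $\K_A$) two worlds are $\K_A$-related exactly when they have the same $\view_A$. Two bookkeeping facts about destuttered views will be used throughout, and I would isolate them first. (1) By forward determinism, the finite traces starting from a fixed store $\rho$ are precisely the finite prefixes of the unique (possibly infinite) maximal trace $t_\rho$ from $\rho$; since extending a trace by one step either leaves the destuttered view unchanged or appends exactly one entry, the values of $\view_A$ over these prefixes are exactly the nonempty initial subvectors of $\view_A(t_\rho)$. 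Hence there is a finite trace with initial store $\rho$ and view $\vec q$ \emph{iff} $\vec q$ is an initial subvector of $\view_A(t_\rho)$. (2) For two worlds $w\preceq_T w'$, $\view_A(w)$ is an initial subvector of $\view_A(w')$; and the first entry of $\view_A(w)$ is $\sigma_0|_{R(A)}$, so $\K_A$-related worlds have initial stores agreeing off $X := V(P)\setminus R(A)$.

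For \emph{trace confidentiality $\Rightarrow$ Def.~\ref{def:conf}}: fix $A$, a world $w$ with initial store $\sigma$, a temporally sound $\varphi$, and assume $w\vDash\Diamond\B{\K_A}\varphi$, witnessed by $w^*\succeq_T w$ with $w^*\vDash\B{\K_A}\varphi$. Take any $w_1$ with $\view_A(w_1)=\view_A(w)$ and initial store $\sigma_1$; by fact (2) it differs from $\sigma$ only on $X$, so with $v_0:=\sigma|_X$, $v_1:=\sigma_1|_X$ we have $\sigma=\sigma[X\mapsto v_0]$, $\sigma_1=\sigma[X\mapsto v_1]$, and trace confidentiality gives $\view_A(t^0)=\view_A(t^1)$ for the maximal traces $t^0,t^1$ from $\sigma,\sigma_1$. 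Now $w^*$ is a finite prefix of $t^0$, so $\view_A(w^*)$ is an initial subvector of $\view_A(t^0)=\view_A(t^1)$, hence by fact (1) equals $\view_A(w_1^*)$ for some finite prefix $w_1^*$ of $t^1$; and since $\view_A(w_1)=\view_A(w)$ is an initial subvector of $\view_A(w^*)$, we may choose $w_1^*$ with $w_1\preceq_T w_1^*$ (take $w_1^*=w_1$ if the views already agree; otherwise any such prefix strictly extends $w_1$). As $\view_A(w_1^*)=\view_A(w^*)$ and $w^*\vDash\B{\K_A}\varphi$, we get $w_1^*\vDash\varphi$, so $w_1\vDash\Diamond\varphi$. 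Since $w_1$ was arbitrary, $w\vDash\B{\K_A}\Diamond\varphi$.

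For \emph{Def.~\ref{def:conf} $\Rightarrow$ trace confidentiality}, argue by contraposition. Suppose $\view_A(t^1)\neq\view_A(t^2)$ for the maximal traces from $\sigma[X\mapsto v_1]$ and $\sigma[X\mapsto v_2]$. They cannot each be an initial subvector of the other (they would then coincide), so after possibly swapping the two labels we may assume $\view_A(t^1)$ is not an initial subvector of $\view_A(t^2)$, and then some finite prefix $w'$ of $t^1$ has $\view_A(w')$ not an initial subvector of $\view_A(t^2)$; fix such a $w'$. Let $\rho:=\sigma[X\mapsto v_2]$ and take $\varphi := \bigvee_{v\in V(P)}\neg(v@0=\rho(v))$ (``the initial store is not $\rho$''); each disjunct depends only on the value at step $0$, which is fixed along $T$, so $\varphi$ is upward-closed along $T$ and hence temporally sound — a legal witness for Def.~\ref{def:conf}. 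No $T$-successor of the singleton $\langle p,\rho\rangle$ — i.e.\ no finite prefix of $t^2$ — satisfies $\varphi$, so $\langle p,\rho\rangle\not\vDash\Diamond\varphi$. Conversely, any world with view $\view_A(w')$ has initial store $\neq\rho$: otherwise it would be a finite prefix of $t^2$ by fact (1), forcing $\view_A(w')$ to be an initial subvector of $\view_A(t^2)$, contrary to the choice of $w'$. Hence $w'\vDash\B{\K_A}\varphi$, and since the singleton $\langle p,\sigma[X\mapsto v_1]\rangle$ is $T$-below $w'$ we get $\langle p,\sigma[X\mapsto v_1]\rangle\vDash\Diamond\B{\K_A}\varphi$. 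But $\langle p,\sigma[X\mapsto v_1]\rangle$ and $\langle p,\rho\rangle$ have the same view $\langle\sigma|_{R(A)}\rangle$ and the latter fails $\Diamond\varphi$, so $\langle p,\sigma[X\mapsto v_1]\rangle\not\vDash\B{\K_A}\Diamond\varphi$, contradicting Def.~\ref{def:conf}.

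Most of the work is the destuttering bookkeeping in facts (1) and (2) — in particular the equivalence ``$\rho$ can produce view $\vec q$'' $\iff$ ``$\vec q$ is an initial subvector of $\view_A(t_\rho)$'', which is what makes the argument survive the maximal traces being infinite (hence not worlds), once ``$\neq$'' of views is read as ``not mutually initial subvectors''. The one genuinely delicate design choice is the witness formula in the second direction: I would use ``the initial store is not $\rho$'' rather than a characteristic formula of $w'$, because $\B{\K_A}\varphi$ must hold at \emph{every} world with the same view as $w'$, not merely at $T$-descendants of $w'$; one then has to check, as above, that this $\varphi$ is temporally sound.
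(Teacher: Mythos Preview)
Your proof is correct and follows essentially the same approach as the paper's: both directions hinge on the same witness formula (``the initial store is not $\rho$'' is precisely the paper's $\neg\varphi_{w^2}$ for the singleton $w^2=\langle p,\rho\rangle$, written out explicitly rather than via the abstract characteristic-formula property), and both arguments reduce to matching up initial subvectors of destuttered views. The only notable difference is that you prove the implication ``trace confidentiality $\Rightarrow$ Def.~\ref{def:conf}'' directly, whereas the paper argues by contraposition (using perfect recall to descend to the singleton ancestor); your direct argument is arguably cleaner, and your explicit bookkeeping facts (1) and (2) make precise several steps the paper leaves implicit.
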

\iffullver
\begin{proof} See \hyperlink{proof:confbasic}{appendix}. \end{proof}
\fi

Driven by the need to justify soundness of enforcement mechanisms, such as security type systems \cite{DBLP:journals/jcs/VolpanoIS96},
that do not account for program's termination, several authors have proposed
 \emph{termination-insensitive} notions of confidentiality: that
is, they choose to only track confidentiality violations involving halting
runs.
The motivation for this weaker attacker model can be seen as a combination of two assumptions:
firstly, the attacker is taken to not be
able to observe the running system, but instead only
to obtain a transcript of its observable part after it halts, and seeing
nothing at all if it does not. Secondly, any information that the attacker
could glean from observing whether the program terminates or not alone
is taken to be unproblematic by fiat (this is the part that motivates
the term ``termination-insensitive'')  \cite{DBLP:conf/esorics/AskarovHSS08}. 

It can be argued that it is more appropriate to encode
the circumstance that information only is revealed on termination
directly by setting up the $\KC$ relations to never
distinguish points that have not halted. We choose to not do this for
reasons of exposition, as our definition of the relations demonstrates
the use of the setup better and we can use it to define security
properties with different attacker models (e.g. progress-insensitivity)
 later on.
The compatibility with more involved properties such as robust
declassification also motivates us to adapt
the following definition from \cite{DBLP:conf/csfw/MyersSZ04}. 

\begin{definition} \label{def:conf2}
(Adapted from \cite{DBLP:conf/csfw/MyersSZ04}, Def. 3)
A program $P$ and s.s.c. $S$
with read permissions $R$ satisfies \emph{termination-insensitive trace confidentiality}
if for all agents $A$, initial stores $\sigma$
and pairs of assignments $v_1$ and $v_2$ to secret-to-$A$ variables
in $X=V(P)\setminus R(A)$,
if $t^i$ denotes the maximal trace generated from 
$\langle p,\sigma[X\mapsto v_i]\rangle$, then
either $\view_A(t^1)=\view_A(t^2)$, or
at least one of $t^1$ and $t^2$ does not halt.
\end{definition}

Due to the additional condition, this definition is of course not
equivalent to our previously stated modal one, unless the system
is such that all traces halt. Instead, we need to encode the additional
condition that both runs terminate.

\begin{definition} \label{def:confti}
A security Kripke frame satisfies \emph{termination-insensitive confidentiality} if
for all agents $A$, worlds $w$ and t.s. formulae $\varphi$,
$$w \vDash \Diamond{\Downarrow} \wedge \Diamond \B{\KC_A}\varphi \Rightarrow \B{\KP_A}(\Box\neg{\Downarrow} \vee \Diamond \varphi). $$
\end{definition}

We can read this as saying that if the run halts and $A$ eventually will learn $\varphi$, then
$A$ already knows that $\varphi$ whenever the run does not diverge.
To illustrate the differences between
the definitions, we compare their behaviour on various example programs.
\begin{example} In the following, we take variable $p$ to be readable to $A$ and $s$ to not be.
All violations, where they occur, involve the formula $\varphi \triangleq s@0=0$ at the length-1 world where $s=0$.
\begin{enumerate}[(i)]
\item The program $p:=s$ violates both \ref{def:conf} and \ref{def:confti}.
\item The program $p:=s; \mathsf{if}\,s=1\,\mathsf{then}\,\mathsf{loop}$
violates \ref{def:conf}, but satisfies \ref{def:confti}. It also violates the intermediate variant definition
using the formula $\Diamond{\Downarrow} \wedge \Diamond \B{\KC_A}\varphi \Rightarrow \B{\KP_A}\Diamond\varphi$, which could be seen as requiring that the program halt but \emph{not} permitting leaks that would already happen if the attacker could observe termination alone.
\item The program $p:=s; \mathsf{loop}$ violates \ref{def:conf}, but satisfies both \ref{def:confti} and the variant definition in (ii).
\end{enumerate}
\end{example}

As with the general case before, in
the termination-insensitive model, too, our definition is
equivalent to the trace-based one.

\begin{theorem} \label{thm:confeq} Suppose $P$ is a program with s.s.c. $S$.
Then $P$ and $S$ satisfy Def. \ref{def:conf2}
iff the Kripke interpretation of $P$ and $S$ satisfies
Def. \ref{def:confti}.
\end{theorem}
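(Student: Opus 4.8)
The plan is to mirror the structure of the proof of Theorem \ref{thm:confbasic} (termination-insensitive confidentiality being the ``$\wedge\,\Diamond{\Downarrow}$-guarded, $\vee\,\Box\neg{\Downarrow}$-excused'' analogue of plain confidentiality), so I would first recall how the unguarded equivalence is established and then track how the two extra disjuncts/conjuncts are matched up on each side. The key translation facts are: (a) worlds in the Kripke interpretation are finite traces, and $\KC_A$ identifies two of them exactly when their $\view_A$ agree; (b) by determinism (Def. \ref{def:trace}) every trace extends uniquely to a maximal trace, and $w\vDash\Diamond\psi$ for a t.s.\ $\psi$ says some prefix-extension of $w$ satisfies $\psi$, which for $\psi={\Downarrow}$ means the maximal trace through $w$ halts; (c) $w\vDash\Box\neg{\Downarrow}$ says the maximal trace through $w$ diverges; (d) characteristic formulae for prefixes let us take, for $\varphi$ in the modal statement, the characteristic formula $\varphi_{w_0}$ of the relevant length-1 world(s), exactly as in Theorem \ref{thm:confbasic}.

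For the ($\Leftarrow$) direction (modal $\Rightarrow$ trace): assume the Kripke interpretation satisfies Def. \ref{def:confti}, and suppose toward a contradiction that Def. \ref{def:conf2} fails, i.e.\ there are $A$, $\sigma$, and secret assignments $v_1,v_2$ such that $\view_A(t^1)\neq\view_A(t^2)$ while \emph{both} $t^1$ and $t^2$ halt. Let $w^i$ be the length-1 world $\langle p,\sigma[X\mapsto v_i]\rangle$. Since the $t^i$ halt, $w^i\vDash\Diamond{\Downarrow}$. Because the views differ, along (WLOG) the run of $t^1$ there is a first world $w$ whose $\view_A$-class is not shared by any prefix-extension of $w^2$; taking $\varphi=\varphi_w$, the characteristic formula of that world, one gets $w^1\vDash\Diamond\B{\KC_A}\varphi$ (once $A$ sees the distinguishing view she knows she is on the $w^1$-run, hence $\varphi$ holds) — this is precisely the argument from Theorem \ref{thm:confbasic}. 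Def. \ref{def:confti} then forces $w^1\vDash\B{\KP_A}(\Box\neg{\Downarrow}\vee\Diamond\varphi)$; but $\B{\KP_A}=\K_A$ also relates $w^1$ to $w^2$ (same initial store outside $X$, same $\view_A$ at length 1 since nothing has been read yet — or more carefully, since $w^1\not\sim_{\K_A}w^2$ would already give a view difference at step 0, contradicting minimality), so $w^2\vDash\Box\neg{\Downarrow}\vee\Diamond\varphi$; the first disjunct contradicts $t^2$ halting, and the second contradicts $\varphi_w$ being false on the whole $w^2$-run. Contradiction.

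For the ($\Rightarrow$) direction (trace $\Rightarrow$ modal): assume Def. \ref{def:conf2} and take any world $w$, agent $A$, t.s.\ $\varphi$ with $w\vDash\Diamond{\Downarrow}\wedge\Diamond\B{\KC_A}\varphi$. We must show $w\vDash\B{\KP_A}(\Box\neg{\Downarrow}\vee\Diamond\varphi)$, i.e.\ for every $w'$ with $w\sim_{\K_A}w'$, either the maximal trace through $w'$ diverges or some prefix-extension of $w'$ satisfies $\varphi$. Fix such a $w'$; let $w_0,w_0'$ be the initial (length-1) subtraces of $w,w'$, which differ only in $X=V(P)\setminus R(A)$ and have equal $\view_A$. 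By Prop. \ref{prop:allcommute} ($\WC_A$ commutes with $\K_A$) together with perfect recall, the fact that $A$ eventually knows $\varphi$ on the $w$-run transfers to a statement about runs reachable from $w'$; concretely, applying Def. \ref{def:conf2} with $\sigma$ the store of $w_0$, $v_1$ its $X$-part and $v_2$ the $X$-part of $w_0'$: if $w'$'s maximal trace halts, then (using that $w$'s maximal trace also halts, from $w\vDash\Diamond{\Downarrow}$) Def. \ref{def:conf2} gives $\view_A(t^1)=\view_A(t^2)$, so the view-prefix along $w$ witnessing $\B{\KC_A}\varphi$ occurs verbatim along the $w'$-run, whence that world satisfies $\varphi$ (t.s.) and $w'\vDash\Diamond\varphi$. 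If instead $w'$'s maximal trace diverges, $w'\vDash\Box\neg{\Downarrow}$ directly. Either way the disjunction holds.

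The main obstacle — and the place to be careful — is the asymmetric bookkeeping of the termination guard: in Def. \ref{def:conf2} ``at least one of $t^1,t^2$ does not halt'' excuses a view mismatch, whereas in Def. \ref{def:confti} the guard $\Diamond{\Downarrow}$ is about the run \emph{at $w$} and the excuse $\Box\neg{\Downarrow}$ is evaluated at the $\K_A$-related world $w'$. One must check that these line up: when $w$'s maximal trace halts we are comparing a halting $t^1$ against $t^2$, so Def. \ref{def:conf2}'s escape hatch fires iff $t^2$ diverges iff $w'\vDash\Box\neg{\Downarrow}$, which is exactly the left disjunct; and conversely in the ($\Leftarrow$) direction the use of $\Diamond{\Downarrow}$ at $w^1$ is what lets us discharge the $\Box\neg{\Downarrow}$ alternative. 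The other subtlety, shared with Theorem \ref{thm:confbasic}, is justifying that it suffices to test the modal condition with characteristic formulae of worlds and that an arbitrary t.s.\ $\varphi$ with $w\vDash\Diamond\B{\KC_A}\varphi$ reduces to this case; I would simply invoke that reduction as already carried out there.
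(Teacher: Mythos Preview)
Your overall structure is right, and the $(\Rightarrow)$ direction is essentially correct (though the invocation of Prop.~\ref{prop:allcommute} is spurious: there is no $\WC_A$ in confidentiality, and you never use commutativity --- your actual argument goes straight through Def.~\ref{def:conf2} and the matching of views). The paper does this direction by contrapositive and uses perfect recall to push the violation down to the initial world, but your direct argument works too once you observe that for t.s.\ $\varphi$, $\Diamond\varphi$ is constant along a run, so finding any $u'$ on the $w'$-run with $u'\vDash\varphi$ suffices.

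There is, however, a genuine gap in the $(\Leftarrow)$ direction: your choice $\varphi=\varphi_w$ does not give $\B{\KC_A}\varphi_w$. For $\B{\KC_A}\varphi_w$ to hold at $w$ (or any later world), \emph{every} $\K_A$-related world must be a $T$-successor of $w$. But you have only argued that no world on the $w^2$-run shares $w$'s view; there may be many other runs whose prefixes do share it, and those worlds are $\K_A$-related to $w$ yet fail $\varphi_w$. Your parenthetical ``she knows she is on the $w^1$-run'' is exactly the overreach: the distinguishing view rules out the $w^2$-run, not all others. The fix --- which is what the paper does both here and in Theorem~\ref{thm:confbasic} --- is to take $\varphi=\neg\varphi_{w^2}$. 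Then $\B{\KC_A}\neg\varphi_{w^2}$ at $w$ (or at $w^1_\Downarrow$) says precisely that $A$ knows she is not on the $w^2$-run, which is all your view argument establishes; and $w^2\vDash\Diamond{\Downarrow}\wedge\Box\neg(\neg\varphi_{w^2})$ still gives the required contradiction with $\B{\KP_A}(\Box\neg{\Downarrow}\vee\Diamond\varphi)$.
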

\iffullver
\begin{proof} See \hyperlink{proof:confeq}{appendix}. \end{proof}
\fi

Since the difference between the definitions only pertains to non-terminating runs,
it stands to reason that if all runs terminate, the two definitions in fact agree.
\begin{corollary} 
Suppose $P$ is a program with s.s.c. $S$ and all runs terminate.
Then $P$ and $S$ satisfy Def. \ref{def:conf2}
iff the Kripke interpretation of $P$ and $S$ satisfies
Def. \ref{def:conf}.
\end{corollary}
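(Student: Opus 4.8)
The plan is to leverage the two equivalence theorems already established and observe that the termination hypothesis collapses the gap between the termination-sensitive and termination-insensitive variants on both sides of the correspondence. I will route through the trace-based definitions. Note that Definition~\ref{def:conf2} is obtained from Definition~\ref{def:confts} simply by adding the escape clause ``or at least one of $t^1$ and $t^2$ does not halt'', and that both definitions quantify over exactly the same data (agents $A$, initial stores $\sigma$, and pairs $v_1,v_2$ of assignments to the secret-to-$A$ variables $X = V(P)\setminus R(A)$). So the first step is to observe that when all runs of $P$ terminate, each $t^i$ — being the maximal trace generated from $\langle p,\sigma[X\mapsto v_i]\rangle$ — is finite and hence halts, so the added disjunct is never satisfied; consequently Definition~\ref{def:conf2} and Definition~\ref{def:confts} hold of exactly the same programs-with-s.s.c.\ under this hypothesis.

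The second step is then pure chaining: by Theorem~\ref{thm:confbasic}, $P$ and $S$ satisfy Definition~\ref{def:confts} iff the Kripke interpretation of $P$ and $S$ satisfies Definition~\ref{def:conf}; combining with the first step gives that $P$ and $S$ satisfy Definition~\ref{def:conf2} iff the Kripke interpretation satisfies Definition~\ref{def:conf}, which is the claim. As a cross-check, one can also argue entirely on the modal side: by Theorem~\ref{thm:confeq} it suffices to show that, under the termination hypothesis, the Kripke interpretation satisfies Definition~\ref{def:confti} iff it satisfies Definition~\ref{def:conf}. For this one verifies that ``all runs terminate'' makes $\Diamond{\Downarrow}$ true at every world (the run through any world has a $T$-maximal element, namely the finite maximal trace, which satisfies ${\Downarrow}$) and $\Box\neg{\Downarrow}$ true at no world; the implication in Definition~\ref{def:confti} then simplifies pointwise to $w\vDash \Diamond\B{\KC_A}\varphi \Rightarrow \B{\KP_A}\Diamond\varphi$, i.e.\ Definition~\ref{def:conf}, using that $\B{\KP_A}(\Box\neg{\Downarrow}\vee\Diamond\varphi)$ and $\B{\KP_A}\Diamond\varphi$ coincide once $\Box\neg{\Downarrow}$ is everywhere false.

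The main obstacle, such as it is, is merely the bookkeeping in this reduction: checking that the two trace-based definitions really do agree in quantifier structure, so that the vanishing of the escape clause yields an exact equivalence rather than just one implication, and — on the modal route — confirming that a run is genuinely bounded above by its terminating world so that $\Diamond{\Downarrow}$ is valid. Neither point raises any real difficulty once the definitions are unfolded.
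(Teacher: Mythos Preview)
Your proposal is correct. Your cross-check on the modal side is exactly the paper's proof: the paper simply notes that when $\Diamond{\Downarrow}$ holds everywhere, the formulae of Definitions~\ref{def:conf} and~\ref{def:confti} coincide, and implicitly invokes Theorem~\ref{thm:confeq} to bridge to Definition~\ref{def:conf2}.

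Your primary route differs in that you collapse the gap on the trace-based side instead, observing that the escape clause in Definition~\ref{def:conf2} is vacuous when every maximal trace halts, so Definitions~\ref{def:confts} and~\ref{def:conf2} agree, and then invoking Theorem~\ref{thm:confbasic} rather than Theorem~\ref{thm:confeq}. Both routes are equally short and equally valid; the only practical difference is which equivalence theorem you lean on. The paper's choice has the minor advantage of making the reasoning visible directly in the modal formula (you watch the $\Diamond{\Downarrow}$ and $\Box\neg{\Downarrow}$ conjuncts evaporate), whereas your primary route keeps the argument entirely at the level of traces and avoids unfolding the modal semantics at all.
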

\begin{proof}
When $\Diamond {\Downarrow}$, the formulae of Defs. \ref{def:conf} 
and \ref{def:confti} are equivalent.
\end{proof}

{%
\begin{remark} %
An alternative way to define termination-insensitive confidentiality
would be to represent the circumstance that $A$ can only make observations
when the program terminates by encoding this in the capability relation $\KC_A$,
by saying that $w_1 \sim_{\KC_A} w_2$ iff either $\view_A w_1 = \view_A w_2$ and
both $w_1$ and $w_2$ represent halting states, or neither of $w_1$ and $w_2$ 
represent halting states.
In that case, Definition \ref{def:confti} could be written as 
$w\vDash \Diamond \B{\KC_A}\varphi \Rightarrow \B{\KP_A}(\Box\neg{\Downarrow} \vee \Diamond \varphi)$,
only needing to encode that knowledge was not gained by ruling out a
nonterminating run. We chose against this to avoid the confusing discussion of
multiple Kripke interpretations.
\end{remark}

\noindent \textbf{Progress insensitivity.} Another more permissive
notion of confidentiality that was suggested in Askarov et al. \cite{DBLP:conf/esorics/AskarovHSS08} is that of
\emph{progress-insensitive confidentiality}. As termination-insensitive
confidentiality stands for a notion of confidentiality ``modulo'' what
information is revealed by whether the program terminates or not,
progress-insensitive confidentiality is defined to ignore information
that could be gleaned from only observing how much progress the program made,
i.e. how many distinct states have been observed.
The rationale for this is
that data takes an exponential amount of time to leak through the progress
channel, and so for complex secrets, in certain settings everything
leaked through it can be treated as computationally infeasible to act upon.
An example of a program that satisfies progress-insensitive confidentiality
but not the other types we have seen is
$$ \mathsf{for}\,i=0,\ldots,s\,\mathsf{do}\,p:=i, $$
which produces a public-observable vector $(0,1,\ldots,s)$, where $s$
is the secret. 
This condition is enforced at every point in time, as
the attacker is taken to be able to observe the program in real time
as it executes.
 Hence, on the other hand, the program
$$ \mathsf{for}\,i=0,\ldots,s\,\mathsf{do}\,p:=s-i, $$ which produces
observations $(s,\ldots,0)$, does not satisfy it: after the first output the
public already knows that the secret is $s$ but would not from merely being able to
count.

The most straightforward way to capture this property is to introduce, for
each agent $A$, a special ``counting agent'' $A^\#$, whose knowledge represents
what can be learned by only observing the number of distinct outputs, so
$w^1 \sim_{\KC_{A^\#}} w^2$ iff $|\view_A(w^1)| = |\view_A(w^2)|$, that is, the
lengths match. Then we say that a ``problem'' only occurs if $A$ knows $\varphi$,
but $A^\#$ does not: we demand that for all $w$, $A$ and $\varphi$,
\begin{equation} \label{eqn:piconf}
 w\vDash \Diamond ( \B{\KC_A}\varphi \wedge \neg \B{\KC_{A^\#}} \varphi ) \Rightarrow \B{\KP_A} \Diamond \varphi. 
\end{equation}

\vspace{0.5em}
\noindent \textbf{Declassification.} If we wish to relax confidentiality by
identifying a particular formula $\psi$ whose truth value we want to allow 
the agent $A$ to learn even if it depends on secrets that $A$ can not observe initially,
we can achieve this by changing the permission relation $\KP_A$, setting
$w_1 \sim_{\KP_A} w_2$ iff $\view_A w_1 = \view_A w_2$ and either $w_1\vDash \psi$
and $w_2\vDash \psi$ or $w_1\not\vDash \psi$  and $w_2\not\vDash\psi$.
For example, if we want to declassify the XOR $s_1\oplus s_2$ of two secrets,
we obtain a security Kripke frame in which the world $(p\mapsto 0, s_1\mapsto 1, s_2\mapsto 1)$
is only $\KP_A$-related to worlds in which $s_1,s_2\mapsto 1$ or $s_1,s_2\mapsto 0$,
and so $\langle p:=s_1\oplus s_2, (0,1,1)\rangle \vDash \B{\KP_A} \Diamond (p@1=0)$.
This simple notion of declassification corresponds to \emph{what}-declassification
in the taxonomy of Sabelfeld and Sands \cite{SS09}.
}

\subsection{Integrity}
\label{sec:inte}
Integrity is the property that no trusted data is influenced by an untrusted party.
Often, this property is defined dually to confidentiality, with untrusted data
taking the place of secret data and trusted data taking the place of public data  \cite{biba}.
However, in our setting, it is more natural to model active attacker capabilities
as single modal steps, as opposed to the invariant-based representation we use
for attacker knowledge. This in turn suggests a formula which relates permissions
and capabilities in a different way than the one for confidentiality; however,
for s.s.c.-derived security Kripke frames, it can again be proven equivalent to the
standard definition, which implies that the two modal definitions are in fact
equivalent under additional conditions.

Integrity also differs from confidentiality in that \changed{the relations 
denoting the capabilities and permissions of the attacker must always be distinct.}
 This is because in the s.s.c.,
the attacker is assumed to only be able to influence the initial state of memory;
but some effects of this influence may linger for the duration of the system's execution,
without this constituting a \emph{prima facie} policy violation. 
For an agent $A$, the formula relates the capability relation $\WC_A$ to the
permission relation $\WP_A$. From a starting world $w$, the total set of possible states
that we could consider $A$ to be able to influence the system into taking are
those that are reached by $A$ performing an action available to it, followed by letting
the system run its course, i.e. the ones quantified over by $\langle \WC_A \rangle \Diamond$.
In some subset of those, $A$ will have influenced the system to bring about an effect
that $A$ was not actually permitted to bring about, as $A$'s influence was propagated
by the program code into a part of the system $A$ was not trusted with -- such as in the case
of an assignment $t:=u$ of an untrusted variable to a trusted one. 
In the parlance of the previous section, $\langle \WC_A \rangle \Diamond \varphi$ denotes
a potential problem, as $A$ caused the system to satisfy a property $\varphi$ which
$A$ may not actually have been allowed to.
The $\WP_A$ relation relates worlds to others
if the totality of differences between them is in aspects $A$ is allowed to affect. Since
we can assume that $A$ does nothing wrong by performing no action and merely waiting,
all effects attainable by $\Diamond$ followed by $\langle \WP_A \rangle$ can in fact be 
taken to be okay -- $\Diamond \langle \WP_A \rangle \varphi$ is a valid ``excuse'',
suggesting the following definition.

\begin{definition} \label{def:inte}
A security Kripke frame satisfies \emph{integrity} if for all agents $A$, worlds $w$
and t.s. formulae $\varphi$,
$$w \vDash \langle \WC_A\rangle \Diamond \varphi \Rightarrow \Diamond \langle \WP_A \rangle \varphi. $$
\end{definition}

This formula admits an intuitive reading saying that if $A$ can bring about $\varphi$
by exercising a capability and then waiting to let the system act on its behalf,
then it should be permitted to attain $\varphi$ directly after simply waiting for some amount of time.

We can also give a definition in 
the standard setting,
which is in fact structurally
identical to the trace-based definition (\ref{def:conf2}) of confidentiality. (Note that $\fix$
singles out the variables $A$ \emph{can not} modify, i.e. the ``trusted'' ones,
which correspond to the public ones under confidentiality-integrity duality.)

\begin{definition}  \label{def:inte2}
A program $P$ and s.s.c. $S$
with write permissions $W$ satisfies \emph{termination-insensitive trace-based integrity}
if for all agents $A$, initial stores $\sigma$
and pairs of assignments $v_1$ and $v_2$ to $A$-controlled variables
$X=W(A)$,
if $t^i$ denotes the maximal trace generated from 
$\langle p,\sigma[X\mapsto v_i]\rangle$, then
either $\fix_A(t^1)=\fix_A(t^2)$, or at least one
of $t^1$ and $t^2$ does not halt.
\end{definition}

As before, this is actually equivalent to a termination-insensitive counterpart of the
definition of integrity we just gave:

\begin{definition} \label{def:inteti}
A security Kripke frame satisfies \emph{termination-insensitive integrity} if for all agents $A$, worlds $w$
and t.s. formulae $\varphi$,
$$w \vDash \langle \WC_A\rangle (\Diamond{\Downarrow} \wedge \Diamond \varphi) \Rightarrow \Box\neg{\Downarrow}\vee \Diamond \langle \WP_A \rangle \varphi. $$
\end{definition}

We can read this property as saying that if by performing a write $A$
can make the system eventually halt and eventually make proposition $\varphi$
true, then either the system must currently diverge or $A$
must eventually be permitted to bring about $\varphi$ directly.

\begin{theorem} \label{thm:inteeq} Suppose $P$ is a program with s.s.c. $S$.
Then $P$ and $S$ satisfy Def. \ref{def:inte2}
iff the Kripke interpretation of $P$ and $S$ satisfies
Def. \ref{def:inteti}.
\end{theorem}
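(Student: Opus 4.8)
The plan is to prove the two directions by unwinding both definitions into statements about traces of $P$, relying on the explicit description of the Kripke interpretation in Def. \ref{def:kripint} and the commutation property of Prop. \ref{prop:allcommute}. First I would fix an agent $A$, write $X = W(A)$, and observe that in the Kripke interpretation the relation $\WC_A$ relates exactly the pairs of singleton (length-1) traces $\langle p,\sigma\rangle$, $\langle p,\sigma'\rangle$ with $\fix_A(\langle p,\sigma\rangle)=\fix_A(\langle p,\sigma'\rangle)$, i.e. $\sigma|_{V\setminus X} = \sigma'|_{V\setminus X}$; equivalently $\sigma' = \sigma[X\mapsto v]$ for some assignment $v$. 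So $\langle \WC_A\rangle\psi$ evaluated at a world $w$ lying on the run of initial store $\sigma$ is equivalent to: $\psi$ holds at \emph{some} singleton world $\langle p, \sigma[X\mapsto v]\rangle$. Likewise $\WP_A$ relates two worlds of any length with equal $\fix_A$, and $\Diamond$ is existential quantification over $T$-successors, i.e. over prefixes-extensions along the unique maximal trace. The atomic formulae $v@\tau=i$ and ${\Downarrow}$ are temporally sound, and t.s. formulae are preserved along $T$; this is what lets $\Diamond\varphi$ at a world be read as "$\varphi$ holds somewhere on the maximal trace through that world".

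Next I would handle the easy direction, that Def. \ref{def:inte2} (trace integrity) fails $\Rightarrow$ Def. \ref{def:inteti} fails. If trace integrity fails there are $\sigma$, $v_1$, $v_2$ with $t^1, t^2$ the maximal traces from $\langle p,\sigma[X\mapsto v_i]\rangle$, both halting, and $\fix_A(t^1)\neq \fix_A(t^2)$. Using characteristic formulae for prefixes (the proposition on characteristic formulae) I would build a t.s. formula $\varphi$ that pins down the relevant finite prefix of $t^2$ up to $A$'s trusted view — concretely a conjunction of atoms $v@\tau = \sigma^2_\tau(v)$ over trusted $v$ and the steps up to the first index where the $\fix_A$-projections diverge, together with ${\Downarrow}$ at the end of $t^2$ — such that $\varphi$ is witnessed on $t^2$ but no world with the same $\fix_A$-class as any world on $t^1$ satisfies it. Then at the singleton world $w = \langle p,\sigma\rangle$ (or any fixed starting world with store $\sigma$), the left side $\langle\WC_A\rangle(\Diamond{\Downarrow}\wedge\Diamond\varphi)$ holds via the witness $v_2$; but $\Box\neg{\Downarrow}$ fails (take $v_1$: $t^1$ halts) and $\Diamond\langle\WP_A\rangle\varphi$ fails because every $\Diamond$-successor of $w$ lies on $t^1$ and none of its $\WP_A$-neighbours satisfies $\varphi$. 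So Def. \ref{def:inteti} is violated.

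For the converse — Def. \ref{def:inteti} fails $\Rightarrow$ Def. \ref{def:inte2} fails — I would start from a world $w$, t.s. $\varphi$ and agent $A$ witnessing the failure: $w\vDash \langle\WC_A\rangle(\Diamond{\Downarrow}\wedge\Diamond\varphi)$ but $w\vDash \Diamond{\Downarrow}$ wait — rather, $w\not\vDash \Box\neg{\Downarrow}$, i.e. the run through $w$ halts, and $w\not\vDash\Diamond\langle\WP_A\rangle\varphi$. Because $\WC_A$ only connects singletons, the witnessing $\WC_A$-step lands on a singleton $\langle p,\sigma[X\mapsto v_2]\rangle$ whose maximal trace $t^2$ halts and satisfies $\varphi$ at some prefix; meanwhile the run through $w$ is (the tail of) the maximal trace $t^1$ from some $\langle p,\sigma[X\mapsto v_1]\rangle$, which halts. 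It remains to show $\fix_A(t^1)\neq\fix_A(t^2)$: if they were equal, then some prefix of $t^1$ would be $\WP_A$-related to the $\varphi$-satisfying prefix of $t^2$ (matching $\fix_A$-projections, using that $\varphi$ is t.s. so it "sticks" once true on $t^2$ and the destuttered $\fix$-vectors agree), giving $w\vDash\Diamond\langle\WP_A\rangle\varphi$, a contradiction. Hence $\sigma$, $v_1$, $v_2$ witness failure of Def. \ref{def:inte2}.

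The main obstacle I anticipate is the bookkeeping around destuttering in $\fix_A$ together with the $T$-prefix structure of worlds: $\fix_A(t^1)=\fix_A(t^2)$ is an equality of \emph{destuttered} projection vectors, so I must be careful that a world (finite trace prefix) on $t^1$ whose $\fix_A$-projection equals that of a given prefix of $t^2$ actually exists — this requires arguing that the destuttered trusted-view sequences being equal lets one align prefixes at corresponding "observable trusted states", which is exactly the content that makes $\langle\WP_A\rangle$ and $\Diamond$ line up. A secondary technical point is manufacturing the t.s. formula $\varphi$ in the forward direction with the precise property that it separates the $\fix_A$-class of $t^2$-prefixes from that of all $t^1$-prefixes; this is where characteristic formulae for prefixes and temporal soundness of the atoms $v@\tau=i$, ${\Downarrow}$ do the work. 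Both of these are the same destuttering/alignment lemma used in the proof of Theorem \ref{thm:confeq}, just applied to $\fix_A$ instead of $\view_A$, so I would factor it out and reuse it.
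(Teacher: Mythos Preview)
Your proposal is correct and follows the same two-direction contrapositive structure as the paper. The one notable divergence is your construction of the witnessing formula $\varphi$ in the ``trace integrity fails $\Rightarrow$ Kripke integrity fails'' direction: you try to build $\varphi$ from atoms $v@\tau = \sigma^2_\tau(v)$ over trusted variables and (rightly) anticipate destuttering alignment headaches. The paper sidesteps this entirely by taking $\varphi \triangleq \varphi_{w^2} \wedge (a@\tau = v)$, where $\varphi_{w^2}$ is the characteristic formula of the singleton $w^2$ (hence true exactly on the run $t^2$) and $a@\tau=v$ is any atom first true at $w^2_\Downarrow$; then $\varphi$ holds at the single world $w^2_\Downarrow$, so $\langle\WP_A\rangle\varphi$ at $u$ reduces to the check $\fix_A(u) = \fix_A(w^2_\Downarrow)$, and the WLOG assumption $|\fix_A(t^2)| \geq |\fix_A(t^1)|$ finishes it with no bookkeeping. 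In the converse direction, the paper also argues explicitly that the violating world $w$ must itself be a singleton (otherwise its only $\WC_A$-neighbour is $w$ by reflexivity, and $\Diamond\varphi \wedge \neg\Diamond\langle\WP_A\rangle\varphi$ is impossible since $\WP_A$ is reflexive too); you should make this step explicit rather than asserting the $\WC_A$-step ``lands on a singleton''. Finally, commutation (Prop.~\ref{prop:allcommute}) is not used in this proof.
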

\iffullver
\begin{proof} See \hyperlink{proof:inteeq}{appendix}. \end{proof}
\fi

\begin{remark} The distinction between our attacker
model and one in which $A$ could influence variables in $W(A)$ at all times,
so $\WC_A=\WP_A$,
is illustrated by the program
$u:=0; t:=u$. This program first ensures that an untrusted
variable has value $0$ and then copies its value into a trusted variable. If the attacker
could change the value of $u$ after the first assignment, then the attacker could influence
the contents of $t$. In our attacker model, this satisfies integrity,
but it would not under the more powerful one.
\end{remark}
{%
\vspace{0.5em}
\noindent\textbf{Endorsement.}
As in the case of confidentiality,
we may want to relax the requirement that no untrusted agent may influence trusted
data at all.
Therefore, it may be of interest to allow for explicit \emph{endorsement}
of untrusted inputs using a syntactic construct.
 If we wish to endorse an agent $A$'s influence towards a piece
of trusted data, we may likewise choose to augment the
permission relation $\WP_A$.
Interestingly, influence and knowledge are structurally dissimilar, and so there
is no straightforward notion of endorsing influence along a particular formula.
We can, however, allow $A$ to influence the contents of a particular variable $v$,
setting $w^1\sim_{\WP_A} w^2$ iff $\fix_A w^1$ and $\fix_A w^2$ only differ in
entries corresponding to $v$.

Unlike in the case of declassification, we can even implement endorsement that
only allows effects after a particular endorsing event occurred in the execution
without changing our working definition. If we take the program to have a special
variable $E$ ranging over sets of pairs of agents and variable names, and define
a statement $\mathsf{endorse}(A,x) \triangleq E := E\cup \{(A,x)\}$, then we can
define an endorsement-aware permission relation as
$w^1\sim_{\WP_A} w^2$ iff $|f^1|=|f^2|$ and whenever $f^1 = \fix_A w^1$ and $f^2 = \fix_A w^2$ differ
in a variable $x$ their $i$th entry ($f^1_i(x) \neq f^2_i(x)$), influence to this
variable must already have been endorsed for $A$ in both ($x\in f^1_i(E), f^2_i(E)$).
A program such as $\mathsf{endorse}(A,t); t:=u$ (where $W(A)=\{u\}$) then satisfies integrity,
but $t:=u; \mathsf{endorse}(A,t)$ does not.
}

\subsection{Robust Declassification}
Since interesting programs rarely satisfy confidentiality, 
often, weaker security properties are considered.
One such property is robust declassification, which allows secrets
to be released, as long as \emph{whether} the secret is released can not be influenced by an untrusted party.
For example, the program $p:=s$, which always releases its secret, 
satisfies robust declassification;
 but
$\textsf{if}\,u=1\,\textsf{then}\,p:=s$ (Fig.~\ref{fig:violations}~(ii)) %
 (where $u$ is untrusted, $p$ public and $s$ secret) does not.

This property can be encoded almost straightforwardly based on our formula for confidentiality.
A more formal version can be stated as follows: If an attacker can
bring about a violation of confidentiality, then the violation must already exist
without the attacker doing anything. Since confidentiality is encoded by the implication
$ \Diamond \B{\KC_A}\varphi \Rightarrow \B{\KC_A}\Diamond \varphi$, a violation is just its negation,
$ \Diamond \B{\KC_A}\varphi \wedge \neg \B{\KP_A}\Diamond \varphi. $
With an additional condition we will explain shortly, we thus arrive at the following definition.
\begin{definition} \label{def:rd1}
A security Kripke frame satisfies \emph{robust declassification} iff
for all worlds $w\in \mathcal{F}$, all agents $A$ and all \emph{write-stable} t.s. formulae $\varphi$,
\begin{align*} w\vDash & \langle \WC_A \rangle ( \Diamond \B{\KC_A}\varphi \wedge \neg \B{\KP_A}\Diamond \varphi ) \\
& \Rightarrow \Diamond \B{\KC_A}\varphi \wedge \neg \B{\KP_A}\Diamond \varphi,%
\end{align*}
where a formula $\varphi$ is \emph{write-stable} for $A$ if for all worlds $w$,
$$ w\vDash \Diamond \varphi \Rightarrow [\WC_A] \Diamond \varphi. $$
\end{definition}
Why do we need the restriction to write-stable formulae?
If the restriction were not in place, we would encounter a problem:
 the $\WC_A$ step on the left-hand side would allow us to ``cheat'' the definition
by encoding differences which were actually directly brought about by the
attacker and do not just encode secrets whose release the attacker influenced.
For example, the program $p:=s$ with an additional unused untrusted variable $u\in W(A)$
would be declared insecure, as at the initial world $w$ where all variables have value 0,
the above formula is violated with $\varphi=(u@0=1) \wedge (s@0=0)$. Then the premise
of the implication holds, but the conclusion does not, as $w\vDash \Box\neg\varphi$ and hence
$w\vDash \neg \Diamond \B{\KC_A}\varphi$.  
More generally, suppose that $\varphi$ is in fact a formula which encodes a secret that gets released robustly (so
 $w\vDash \Diamond \B{\KC_A}\varphi \wedge \neg \B{\KP_A}\Diamond \varphi $ for all $w$),
and at a particular world $w$, $\psi$ is attacker-controlled in the sense that
 $w\vDash \langle \WC_A \rangle \Diamond\psi$ but $w\not\vDash \Diamond\psi$.
Then we have $w\vDash \langle \WC_A\rangle ( \Diamond \B{\KC_A}(\psi \wedge \varphi) \wedge \neg \B{\KP_A}\Diamond (\psi \wedge \varphi) )$,
but
$w\not\vDash \Diamond \B{\KC_A}(\psi \wedge \varphi) \wedge \neg \B{\KP_A}\Diamond (\psi \wedge \varphi)$,
since at $w$, $\psi\wedge\varphi$ is always false and therefore can not be known.
To avoid this problem, we thus assert that the eventual truth of the formula $\varphi$ must actually be independent
of $A$'s write permissions.
It is worth noting that when $\WC_A$ is an equivalence relation, as is the case
for Kripke interpretations of programs, write-stability also implies that
$$ w\vDash \neg \Diamond \varphi \Rightarrow [\WC_A] \neg \Diamond \varphi. $$

When $\KC=\KP$, the resulting definition
can in fact be simplified further: $w\vDash \langle \WC_A \rangle \neg \B{\KC_A}\Diamond \varphi$
implies by commutativity (Prop. \ref{prop:allcommute}) and write-stability that there
are also worlds where $\Box\neg\varphi$ that are $\KC_A$-related to $w$, and hence
the conjunct $\neg\B{\KC_A}\Diamond\varphi$ on the right-hand side becomes redundant.
A symmetric argument for when $\neg \B{\KC_A}\Diamond\varphi$, that is, the
conclusion is false, implies that the same conjunct is redundant on the left-hand side as well.

\begin{remark}
If $\KC_A=\KP_A$ for all $A$, the formula in Def. \ref{def:rd1} is equivalent to 
$$ w\vDash \D{\WC_A} \Diamond \B{\K_A} \varphi \Rightarrow \Diamond \B{\K_A}\varphi. $$
\end{remark}

Violations of robust declassification according to this definition
always take a particular appearance in the Kripke frame, presenting
as a pair of runs where $A$ gains knowledge (a knowledge set is refined over time into subsets) being related by
a possible write operation to a pair of runs where $A$ does not
(runs that were indistinguishable to $A$ remain such).
For the four runs where $p$ (which we overwrite anyway) is initialised to 0,
the example program from earlier (Fig.~\ref{fig:violations} (ii)) can be illustrated as follows.
Here, each world is labelled with a triple representing the state of $u$, $s$ and $p$ in it, in that order.
\begin{center}
\begin{tikzpicture}
    \runfinite{2}{2}{0}{0}
    \runfinite{2}{3}{2}{0}
    \runfinite{2}{0}{1}{0.7}
    \runfinite{2}{1}{3}{0.7}
     \draw [color=\perclr] (1,0) ellipse (1.45 and 0.3);
     \draw [color=\perclr] (2,0.7) ellipse (1.45 and 0.3);
     \draw [color=\perclr] (2,1.7) ellipse (1.45 and 0.3);
     \draw [color=\perclr] (0,1) ellipse (0.35 and 0.3);
     \draw [color=\perclr] (2,1) ellipse (0.35 and 0.3);

     \draw[color=\actclr, wave] (w00) to[bend left]  (w20);
     \draw[color=\actclr, wave] (w10) to[bend left]  (w30);

    \node [fill=white,inner sep=0.3pt] (l1) at (m00) {\small $000$ }; 
    \node [fill=white,inner sep=0.3pt] (l1) at (m10) {\small $010$ }; 
    \node [fill=white,inner sep=0.3pt] (l1) at (m20) {\small $100$ }; 
    \node [fill=white,inner sep=0.3pt] (l1) at (m30) {\small $110$ }; 

    \node [fill=white,inner sep=0.3pt] (l1) at (m01) {\small $000$ }; 
    \node [fill=white,inner sep=0.3pt] (l1) at (m11) {\small $010$ }; 
    \node [fill=white,inner sep=0.3pt] (l1) at (m21) {\small $100$ }; 
    \node [fill=white,inner sep=0.3pt] (l1) at (m31) {\small $111$ }; 

\end{tikzpicture}
\end{center}
The blue circles represent equivalence classes of the knowledge relations $\KP_A=\KC_A$,
and the red waves represent the non-reflexive component of the write capability relation $\WC_A$.
We can then see that the property is violated e.g. at the world with store
$000$: by taking a $\WC_A$-step to the world labelled $100$, we get a situation
where eventually (at the $100$-world) we have $\B{\KC_A}(s=0)$, but do not know that
this will be eventually the case (as $A$ thinks it may be at the $110$-world, where
eventually $\B{\KC_A}(s=1)$, instead).

\vspace{0.3em}
\noindent \textbf{Comparing to a trace-based definition.}
Once again, we want to compare our definition to a trace-based definition of robust declassification.
With slight adaptations to account for our notation and setting, the following definition is based on 
\cite{DBLP:conf/csfw/MyersSZ04}, but adapted using the updated notation and memory-based
attacker model of \cite{cecchetti2017nonmalleable} for consistency with our definitions.

\begin{definition} \label{def:rd2} (adapted from \cite{cecchetti2017nonmalleable}, Def. 6.5, and \cite{DBLP:conf/csfw/MyersSZ04}) A program $P$ and s.s.c. $S$ derived from $\langle W,R\rangle$ satisfies \emph{trace-based robust declassification}
if for all agents $A$, initial stores $\sigma$, pairs of assignments $v_1,v_2$ to secret-to-$A$ variables in $X=\Vars(P)\setminus R(A)$ and
pairs of assignments $w_1,w_2$ to $A$-writable variables in $Y=W(A)$, if $t^{ij}$ is the trace generated from $\langle p, \sigma[X\mapsto v_i][Y\mapsto w_j]\rangle$, then \emph{either}
$\view_A(t^{11})=\view_A(t^{21})$ iff $\view_A(t^{12})=\view_A(t^{22})$, \emph{or} at least one of the four runs diverges.
\end{definition}

As before, we need to encode the condition that all runs must terminate in the modal-logic formula in order to be able to prove equivalence for general programs. In the context of robust declassification,
this turns out to be somewhat less straightforward, as attacker actions can now influence
program termination independently of their effect on knowledge and truth of formulae pertaining
to other state. This can result in programs that satisfy Def. \ref{def:rd2} solely by virtue of, for instance, attacker actions always turning halting runs into non-halting runs and vice versa. This is especially a problem
in cases where the attacker may induce a disclosure of $\Diamond\varphi$ by making a previously always-halting program diverge whenever $\Box\neg\varphi$ (thus making it so that the program halting proves to the attacker that $\Diamond\varphi$ is true): by nature our formula would be inclined to consider this a violation of confidentiality, but the termination-insensitive reference definition does not. To forestall this scenario, we stipulate that 
the attacker must consider it impossible for the system to diverge if $\Box\neg\varphi$ after performing the attack in question; this may necessitate picking such $\varphi$ that their negation implies termination, at least given the particular attack. Figure \ref{fig:violations} and Remark \ref{def:rdvars} show what would happen without this additional conjunct.

\def\extrasecret{h}
\begin{figure*}
\begin{tabular}{rrcccc}
\toprule
& Program & Def. \ref{def:rd1} & Rmk. \ref{def:rdvars} (a) & Rmk. \ref{def:rdvars} (b) & Def. \ref{def:rdit} \\
\midrule
 (i) & $p:=s$               & -- & --  & -- & -- \\
(ii) & $\mathsf{if}\,u=1\,\mathsf{then}\,p:=s$ & $s@0=0$ & $s@0=0$ & $s@0=0$ & $s@0=0$ \\
(iii)& ($\mathsf{if}\,u=1\,\mathsf{then}\,p:=s);\, \mathsf{loop}$ & $s@0=0$ & -- & -- & -- \\
(iv)& $\mathsf{if}\,u=1\,\mathsf{then}\,(p:=s;\, \mathsf{if}\,s=1\,\mathsf{then}\,\mathsf{loop})$ & $s@0=0$ & $s@0=0$ & -- & -- \\
(v) & ($\mathsf{if}\,u=1\,\mathsf{then}\,p:=s);\, \mathsf{if}\,s\wedge (u\oplus \extrasecret)=1\,\mathsf{then}\,\mathsf{loop}$
& $s@0=0$ & $s@0=0$ & $s@0=0$ & -- \\
(vi) & ($\mathsf{if}\,u=1\,\mathsf{then}\,p:=s);\, \mathsf{if}\,(s\wedge \extrasecret)=1\,\mathsf{then}\,\mathsf{loop}$
& $s@0=0$ & $s@0=0$ & $s@0=0$ & $\extrasecret@0=1 \vee s@0=0$ \\
\bottomrule
\end{tabular}

\vspace{0.2em} All violations are at the length-1 world where all variables have value 0. The agent is allowed to read $p$ and write to $u$.
\caption{Violations of different formulae on various examples.}
\label{fig:violations}
\end{figure*}

\begin{definition} \label{def:rdit}
A security Kripke frame satisfies \emph{termination-insensitive robust declassification} iff
for all worlds $w\in \mathcal{F}$, all agents $A$ and all \emph{write-stable} t.s. formulae $\varphi$,
\begin{align*} & \langle \WC_A \rangle ( \Diamond{\Downarrow} \wedge \Diamond \B{\KC_A}\varphi \wedge \neg \B{\KP_A}(\Box\neg{\Downarrow} \vee \Diamond \varphi ) \\ 
 & \hspace{2em} \wedge \B{\KC_A} (\Box\neg\varphi \Rightarrow \Diamond{\Downarrow}) ) \\
& \Rightarrow \Box\neg{\Downarrow} \vee (\Diamond \B{\KC_A}\varphi
\wedge \neg \B{\KP_A}(\Box\neg{\Downarrow} \vee \Diamond \varphi ) ) .
\end{align*}
If $\KC_A=\KP_A$ for all $A$, this simplifies to
\begin{align*} & \langle \WC_A \rangle ( \Diamond{\Downarrow} \wedge \Diamond \B{\K_A}\varphi
\wedge \B{\K_A} (\Box\neg\varphi \Rightarrow \Diamond{\Downarrow}) ) \\
& \Rightarrow \Box\neg{\Downarrow} \vee \Diamond \B{\K_A}\varphi.
\end{align*}
\end{definition}

As discussed previously in the context of Def. \ref{def:sigterm}, classical language-based security setups implicitly assume that agents know if the program has terminated. This assumption is actually intimately linked with the justification behind the termination-insensitivity condition, and in this proof we will actually require it. With the requirement that the program signals termination made explicit, we can again prove equivalence.
We also require the assumption that the attacker can actually know what attack they performed, represented by $W(A)\subseteq R(A)$ (so everything that $A$ can write is also readable to $A$). This is implicit in the definition of \cite{DBLP:conf/csfw/MyersSZ04} and sensible in a perfect-recall context. 
\footnote{Uncertainty about the exact outcome of an attack can still be represented operationally, by having the attack affect a special piece of state representing that the attack occurred, and then operationally copying from a secret source of randomness.}
\begin{theorem} \label{thm:syncrd}
Suppose $P=\langle p, V, I\rangle $ signals termination and $S$ is an s.s.c. defined by $\langle W, R\rangle$, with $W(A)\subseteq R(A)$ for all agents $A$.
Then $P$ satisfies trace-based robust declassification (Def. \ref{def:rd2})
iff the Kripke interpretation of $P$ satisfies termination-insensitive robust declassification (Def. \ref{def:rdit}).
\end{theorem}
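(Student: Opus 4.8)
The plan is to prove the two directions of the equivalence by translating between the four-run structure of Definition \ref{def:rd2} and the modal formula of Definition \ref{def:rdit} via the Kripke interpretation. First I would fix an agent $A$, an initial store $\sigma$, and note that a choice of assignments $v_i$ to secret variables $X = V(P) \setminus R(A)$ and $w_j$ to writable variables $Y = W(A)$ corresponds, in the Kripke interpretation, to a length-1 world $w^{ij}$, with $w^{i1} \sim_{\KC_A} w^{i'1}$ controlled by $\view_A$-equality and $w^{11} \sim_{\WC_A} w^{12}$ (and $w^{21} \sim_{\WC_A} w^{22}$) by $\fix_A$-equality. The key translation device is that a $\KC_A$-distinction between runs "opens up over time" exactly when the two traces' views diverge after some common prefix, which is precisely the negation of the trace condition $\view_A(t^{11}) = \view_A(t^{21})$; and since $P$ signals termination (Def. \ref{def:sigterm}), a halting world is $\KC_A$-distinguished from every non-halting world, which is what lets the $\Box\neg{\Downarrow}$ disjuncts in Def. \ref{def:rdit} line up with the "at least one run diverges" escape clause in Def. \ref{def:rd2}.

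For the forward direction (trace-based $\Rightarrow$ modal), I would argue contrapositively: suppose the modal formula fails at some length-1 world $w = w^{11}$ for some write-stable t.s.\ $\varphi$. The failure gives a $\WC_A$-successor $w^{12}$ satisfying $\Diamond{\Downarrow} \wedge \Diamond\B{\KC_A}\varphi \wedge \neg\B{\KP_A}(\Box\neg{\Downarrow}\vee\Diamond\varphi) \wedge \B{\KC_A}(\Box\neg\varphi \Rightarrow \Diamond{\Downarrow})$, while $w^{11}$ satisfies $\neg\Box\neg{\Downarrow}$ and $\neg(\Diamond\B{\KC_A}\varphi \wedge \neg\B{\KP_A}(\Box\neg{\Downarrow}\vee\Diamond\varphi))$. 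Unpacking: at $w^{12}$, the run halts, $A$ eventually knows $\varphi$, and $A$ cannot rule out $\Diamond\varphi$-failure-with-termination — so there is a $\KC_A$-related world $w^{22}$ whose run halts but where $\neg\Diamond\varphi$ (using the extra conjunct to exclude the divergent case). At $w^{11}$, by contrast, either $A$ does not eventually know $\varphi$, or $A$ does know $\Diamond\varphi$ everywhere $\KC_A$-reachable (modulo divergence). Using write-stability to transport $\Diamond\varphi$ / $\neg\Diamond\varphi$ across the $\WC_A$ edges, and using $W(A) \subseteq R(A)$ to ensure $A$ can tell $w^{11}$-side from $w^{12}$-side, I would extract four initial stores of the required form and a $\varphi$-witnessing secret-variable split showing that $\view_A(t^{11}) = \view_A(t^{21})$ holds but $\view_A(t^{12}) = \view_A(t^{22})$ fails (or vice versa), with all four runs halting — contradicting Def. \ref{def:rd2}. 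Here I would lean on characteristic formulae for prefixes and on the explicit $v@\tau=i$ atoms to build a concrete write-stable $\varphi$ encoding the relevant secret values at time $0$.

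For the backward direction (modal $\Rightarrow$ trace-based), again contrapositively: a failure of Def. \ref{def:rd2} yields $\sigma$, secret splits $v_1,v_2$, write splits $w_1,w_2$ with all four $t^{ij}$ halting and (WLOG) $\view_A(t^{11}) = \view_A(t^{21})$ but $\view_A(t^{12}) \neq \view_A(t^{22})$. I would take $\varphi \triangleq \bigwedge_{x\in X}(x@0 = \sigma[X\mapsto v_1](x))$ — a t.s.\ formula, write-stable because its truth depends only on the initial secret state which $A$'s writes cannot touch — and verify the modal formula fails at $w^{11}$: at $w^{12}$ we get $\Diamond{\Downarrow}$ (the run halts), $\Diamond\B{\KC_A}\varphi$ (since $t^{12}$'s view eventually separates from $t^{22}$'s, $A$ eventually pins down the secret split), $\B{\KC_A}(\Box\neg\varphi\Rightarrow\Diamond{\Downarrow})$ (all relevant runs halt, so the implication is trivially $A$-known), and $\neg\B{\KP_A}(\Box\neg{\Downarrow}\vee\Diamond\varphi)$ (as $w^{22}$ is $\KP_A$-related, halts, and has $\neg\varphi$ hence $\neg\Diamond\varphi$); whereas at $w^{11}$, since its view agrees with $t^{21}$'s throughout and $t^{21}$ is the $w_2$-variant, $A$ cannot distinguish the two secret splits, so $\neg\Diamond\B{\KC_A}\varphi$, and the whole right-hand disjunction fails. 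That contradicts Def. \ref{def:rdit}.

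The main obstacle I expect is the careful bookkeeping around the extra conjunct $\B{\KC_A}(\Box\neg\varphi \Rightarrow \Diamond{\Downarrow})$ and the interaction between termination and knowledge. The subtlety flagged in the paragraph before the theorem — that attacker writes can flip halting runs into divergent ones and thereby leak via the termination channel — means the naive correspondence "modal violation $\leftrightarrow$ trace violation" is too coarse; one has to check that the extra conjunct exactly carves out the cases Def. \ref{def:rd2} is insensitive to, and in the forward direction one must be careful to choose $\varphi$ so that it (together with the performed attack) forces termination of the relevant runs, which is where the $\Box\neg\varphi \Rightarrow \Diamond{\Downarrow}$ clause does real work rather than being vacuous. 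Secondarily, the "iff" structure inside Def. \ref{def:rd2} (the condition is a biconditional between two view-equalities, not a single equality) means the argument splits into sub-cases according to which side of the biconditional fails, and matching each to the correct instantiation of the modal formula — possibly after swapping the roles of $w_1$ and $w_2$ — requires some care; the signals-termination assumption and perfect recall (Prop. following Def. \ref{def:pr}) are the tools that keep this manageable.
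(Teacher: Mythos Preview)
Your overall two-direction contrapositive structure matches the paper, and your forward direction (modal violation $\Rightarrow$ trace violation) is on the right track, though vague about constructing the fourth world. The paper obtains $w^{21}$ from $\Box\neg\B{\K_A}\varphi$ at $w$ by passing to $w_\Downarrow$, invoking \emph{signals termination} to find a $\K_A$-related halted world with $\neg\varphi$, and taking its singleton ancestor; it then uses \emph{commutativity} (Prop.~\ref{prop:allcommute}) to produce a world $w_3$ with $w^{12}\sim_{\K_A}w_3\sim_{\WC_A}w^{21}$, and it is \emph{this} $w_3$ whose halting is forced by the conjunct $\B{\K_A}(\Box\neg\varphi\Rightarrow\Diamond{\Downarrow})$ --- not, as you suggest, the world you extracted earlier from $\neg\B{\KP_A}(\Box\neg{\Downarrow}\vee\Diamond\varphi)$. (Also, in this direction $\varphi$ is \emph{given} as part of the assumed violation; your closing remark about ``building a concrete write-stable $\varphi$'' belongs to the other direction.)

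The genuine gap is your choice of $\varphi$ in the backward direction. Taking $\varphi\triangleq\bigwedge_{x\in X}(x@0=v_1(x))$ does not yield $w^{12}\vDash\Diamond\B{\K_A}\varphi$: the hypothesis $\view_A(t^{12})\neq\view_A(t^{22})$ only lets $A$ eventually rule out \emph{the specific run from $w^{22}$}, not every secret assignment other than $v_1$. Concretely, if $X=\{s_1,s_2\}$ and the program only ever reveals $s_1$ after the attack, then $A$ never learns $s_2$ and hence never knows your $\varphi$. The same over-reach breaks your verification of $\B{\K_A}(\Box\neg\varphi\Rightarrow\Diamond{\Downarrow})$: the $\K_A$-quantifier at $w^{12}$ ranges over \emph{all} initial secrets with writable part $w_2$, and some of those runs may diverge while still having $\Box\neg\varphi$ (since their secret differs from $v_1$). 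The paper's $\varphi$ is instead the negation of the disjunction of characteristic formulae for the entire $\WC_A$-orbit of $w^{21}$, i.e.\ ``we are not in any run whose initial store agrees with $w^{21}$ on $V\setminus W(A)$''. This is write-stable by construction, and because $W(A)\subseteq R(A)$, the \emph{only} initial $\K_A$-relative of $w^{12}$ lying in that orbit is $w^{22}$ itself --- which halts by assumption --- so both $\Diamond\B{\K_A}\varphi$ (at $w^{12}_\Downarrow$) and the termination conjunct go through.
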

\iffullver
\begin{proof} See \hyperlink{proof:syncrd}{appendix}. \end{proof}
\fi
As in the case of our previous definitions, the additional conditions introduced in Def. \ref{def:rdit} are really
only necessary to deal with settings in which some runs may not terminate.
\begin{corollary} If all runs terminate, Def. \ref{def:rd1} and Def. \ref{def:rd2} are
equivalent under the same circumstances as in \ref{thm:syncrd}.
\end{corollary}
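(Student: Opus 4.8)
The plan is to derive this from Theorem~\ref{thm:syncrd} by showing that, once every run of $P$ terminates, Definition~\ref{def:rdit} collapses to Definition~\ref{def:rd1} on the Kripke interpretation. Theorem~\ref{thm:syncrd} already gives, under the stated hypotheses (that $P$ signals termination and $W(A)\subseteq R(A)$), that $P$ satisfies Def.~\ref{def:rd2} iff its Kripke interpretation satisfies Def.~\ref{def:rdit}. The corollary keeps those hypotheses and merely \emph{adds} the assumption that all runs terminate, so it suffices to prove that, for frames all of whose runs terminate, Def.~\ref{def:rdit} and Def.~\ref{def:rd1} are the same condition; chaining with Theorem~\ref{thm:syncrd} then closes the argument.

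First I would record the basic fact that ``all runs terminate'' means every maximal trace is finite and ends in $\mathsf{stop}$, so that \emph{every} world $w$ has a halting $T$-successor $w_\Downarrow$; hence $w\vDash\Diamond{\Downarrow}$ and $w\vDash\neg\Box\neg{\Downarrow}$ at every world. In other words $\Box\neg{\Downarrow}$ is unsatisfiable in the frame and $\Diamond{\Downarrow}$ is valid. Substituting these facts into the formula of Def.~\ref{def:rdit}: the conjunct $\Diamond{\Downarrow}$ inside $\D{\WC_A}(\cdots)$ is trivially true; $\Box\neg{\Downarrow}\vee\Diamond\varphi$ simplifies to $\Diamond\varphi$ both where it sits under $\B{\KP_A}$ on the left and on the right; the conjunct $\B{\KC_A}(\Box\neg\varphi\Rightarrow\Diamond{\Downarrow})$ has a body of the form $\bot\Rightarrow\top$, is therefore valid, and can be dropped; and the disjunct $\Box\neg{\Downarrow}$ on the right-hand side vanishes. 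What remains is exactly $\D{\WC_A}(\Diamond\B{\KC_A}\varphi\wedge\neg\B{\KP_A}\Diamond\varphi)\Rightarrow\Diamond\B{\KC_A}\varphi\wedge\neg\B{\KP_A}\Diamond\varphi$, i.e. the formula of Def.~\ref{def:rd1}. Since both definitions quantify over the same worlds, the same agents, and the same class of write-stable t.s.\ formulae (the notion of write-stability is literally identical in both), the two conditions coincide.

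I do not expect a serious obstacle here; the only point that needs care is that dropping the $\B{\KC_A}(\Box\neg\varphi\Rightarrow\Diamond{\Downarrow})$ conjunct is legitimate because $\Box\neg{\Downarrow}$ is false not merely at $w$ but at \emph{every} world the agent considers possible, which is precisely what ``all runs terminate'' supplies — an argument that only inspected $w$ would not justify this step. It is also worth remarking that this is the exact analogue, in the robust-declassification setting, of the earlier corollary reducing Def.~\ref{def:confti} to Def.~\ref{def:conf} when all runs halt, and the proof is correspondingly short.
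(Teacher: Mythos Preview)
Your proposal is correct and follows exactly the approach the paper intends: reduce Def.~\ref{def:rdit} to Def.~\ref{def:rd1} using that $\Diamond{\Downarrow}$ is valid and $\Box\neg{\Downarrow}$ unsatisfiable when all runs terminate, then invoke Theorem~\ref{thm:syncrd}. One cosmetic quibble: the body of $\B{\KC_A}(\Box\neg\varphi\Rightarrow\Diamond{\Downarrow})$ is not literally ``$\bot\Rightarrow\top$'' (since $\Box\neg\varphi$ need not be false), but rather an implication with a valid consequent, which is enough --- and you already make this precise in your closing remark.
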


\vspace{0.5em}
\noindent \textbf{Reviewing the design space.} To better understand the components
of the formula in Def. \ref{def:rdit}, we consider whether and why it accepts or rejects programs
from a collection of examples, and the effect of omitting certain parts or
replacing them by alternative expressions. 

\begin{remark}[Wrong alternatives] 
\begin{enumerate}[(a)]
\item Omitting all components pertaining to termination of runs where $\Box\neg\varphi$: 
$$ \langle \WC_A \rangle ( \Diamond{\Downarrow} \wedge \Diamond \B{\K_A}\varphi  ) 
 \Rightarrow \Box\neg{\Downarrow} \vee \Diamond \B{\K_A}\varphi.$$
\item 
\changed{
Replacing the condition ($[K_A](\Box\neg\varphi\Rightarrow\Diamond\Downarrow)$)
that all possible runs after an attack 
where $\Box\neg\varphi$ must terminate with the weaker condition (inherited from termination-insensitive confidentiality)
that there must exist \emph{one} possible run that terminates and has $\Box\neg\varphi$:
\begin{align*} 
&  \langle \WC_A \rangle ( \Diamond{\Downarrow} \wedge \Diamond \B{\K_A}\varphi \wedge \neg \B{\K_A}(\Box\neg{\Downarrow} \vee \Diamond \varphi )) \\ & \Rightarrow \Box\neg{\Downarrow} \vee \Diamond \B{\K_A}\varphi. 
\end{align*}
}
\end{enumerate}
 \label{def:rdvars}
\end{remark}
In Fig. \ref{fig:violations}, we see how these alternative attempts at a definition differ
in their behaviour from Def. \ref{def:rdit} and thus fall short of matching the trace-based definition.
All examples apart from (i) leak the secret $s$ only if the untrusted input $u$ equals 1 and
differ only in whether they subsequently terminate, so they all violate normal robust declassification,
Def. \ref{def:rd1}. For the other examples:
\begin{enumerate}
\item[(iii)] As this example never terminates, it naturally satisfies all formulae that encode
the assumption nothing is learned unless the program terminates.
\item[(iv)] This example loops forever if the attack was performed ($u=1$) and the secret was equal to 1.
Per termination insensitivity, it should be considered secure, as 
the information about $s$ that is leaked after performing the attack could be gleaned from just
observing whether the program terminated. However, the definition of \ref{def:rdvars} (a) is
violated, as the program terminates and reveals the secret when $s=0$, and the definition
does not concern itself with halting in the counterfactual scenario.
\item[(v)] When $s=1$, this example might or might not diverge depending on an additional
secret parameter $\extrasecret$, but also flips between terminating and not terminating depending on whether
the attack was performed. This still satisfies termination-insensitive robust declassification
as stated, because we can not find a violating quadruple of terminating runs.
A possible interpretation is that if performing the attack $u$ flips termination
status, we know that $s=1$, and hence the termination channel already reveals this information.
However, \ref{def:rdvars} (b) is violated, as it merely stipulates that after performing
the attack it is still \emph{possible} that the system halts with the truth of $\varphi$ going either way.
\item[(vi)] Here, an unrelated secret parameter $\extrasecret$ 
could lead to nontermination,
independent of the attack. This is correctly flagged as violating by all definitions, but illustrates
the cost of the $\B{\K_A} (\Box\neg\varphi \Rightarrow \Diamond{\Downarrow}) )$ condition, 
as it necessitates the violation $\varphi$ to include an additional framing disjunct
that excludes the unrelated nonterminating runs from the counterexamples.
\end{enumerate}

A more interesting question that the shape of Definition \ref{def:rdit} raises is why we
did not simply construct the formula in the most straightforward way by using 
termination-insensitive confidentiality on both the problem and the excuse side.

\begin{remark}[A more justifiable alternative] \label{def:rdjust}
By using termination-insensitive confidentiality on both the premise and conclusion side,
we get the formula
\begin{align*}
 & \langle \WC_A \rangle ( \Diamond{\Downarrow} \wedge \Diamond \B{\KC_A}\varphi \wedge \neg \B{\KP_A}(\Box\neg{\Downarrow} \vee \Diamond \varphi )) \\
 & \Rightarrow \Diamond{\Downarrow} \wedge \Diamond \B{\KC_A}\varphi \wedge \neg \B{\KP_A}(\Box\neg{\Downarrow} \vee \Diamond \varphi ).
\end{align*}
This principially differs from the previous family of definitions in that the ``excuse'' side 
has $\Diamond{\Downarrow} \wedge \ldots$ rather than $\Box\neg{\Downarrow}\vee\ldots$.
\end{remark}
The reason for not choosing this or a related definition is that it does not in fact agree with Def. \ref{def:rd2}
in its treatment of nonterminating runs, inherited from \cite{DBLP:conf/csfw/MyersSZ04}.
A distinguishing example is given by the program
$$ \mathsf{if}\,u=1\,\mathsf{then}\,p:=s\,\mathsf{else}\,\mathsf{loop}. $$
This program never terminates when $u=0$, but terminates and leaks the secret when the
untrusted input has been set to $u=1$.
Defs. \ref{def:rdit} and \ref{def:rd2}, and also the two variants in Rmk. \ref{def:rdvars},
would treat this program as secure, as all candidate quadruples of runs
include some non-terminating ones. 
Should it in fact be? We might
argue that according to the intended readings of termination insensitivity and robust declassification,
it should not: termination insensitivity states that we should ignore any leaks
that already would occur from merely observing the termination channel, but here when $u=0$ the
program never terminates, so the termination channel (or any other channel) does
not leak the secret which is straightforwardly released when $u=1$.
If one agrees with this reasoning, this would point at a potential problem with the Definition
of \cite{DBLP:conf/csfw/MyersSZ04}, which we had to exert some effort to reproduce faithfully.

\subsection{Transparent Endorsement}
Cecchetti et al. \cite{cecchetti2017nonmalleable} propose a property that
is dual to robust declassification, which they call \emph{transparent endorsement}. 
Analogously to how robust declassification relaxes confidentiality, this property amounts to a relaxation
of integrity. In this section, we will work towards
capturing this property in our framework (Def. \ref{def:teit}).
As this definition involves choices that may seem surprising without context,
we opt to proceed step by step, showing why more obvious approaches fail due to subtleties
in our integrity definition.

 Under transparent endorsement, untrusted
agents are allowed to influence trusted aspects of the system;
however, \emph{whether} they influence them should not depend on an
input that is secret to the untrusted agent in question.
We can illustrate three cornerstones of this definition as follows:
\begin{example} \label{ex:te} \begin{enumerate}[(i)]
\item A simple assignment from untrusted to trusted,
$t:=u$, satisfies transparent endorsement.
\item An assignment that occurs on a secret condition, 
$\textsf{if}\,s=1\,\textsf{then}\,t:=u$ (with $s$ secret, $t$ trusted and $u$
untrusted), violates transparent endorsement.
\item If the secret condition only determines \emph{how} the
integrity violation occurs, as in $ \mathsf{if}\,s=1\,\mathsf{then}\,t_1:=u\,\mathsf{else}\,t_2:=u, $
transparent endorsement is satisfied.
\end{enumerate}
\end{example}
\begin{remark} A motivation for this definition is given \changed{(\cite{cecchetti2017nonmalleable}, Fig. 1, 2)} by settings in which a user input
such as a password or a bid in an auction is endorsed to a higher integrity level
for further processing, in the assumption that the input really came from the user
in question. If the user has the ability to pass in inputs that the user himself can not
see, e.g. because they are encrypted, this assumption becomes questionable, leading to
unexpected outcomes: for instance, an attacker could submit an encrypted  
password as input to a password checker without any issues being flagged (as 
the password was never leaked), or cheat an auction with a malleable
encryption scheme by a specifically prepared bid that decodes to the opponent's bid
incremented by 1. Since the language-based constructs used to implement those examples
in full go beyond the scope of this paper, here we only consider the property in isolation.
\end{remark}

If $s$, $t$ and $u$ all range over $\{0,1\}$ and all possible combinations
are possible as initial stores, we can represent
the 8-run system of (ii) as follows
(nodes labelled with the values of $s$, $t$ and $u$ in order):
\begin{center}
\begin{tikzpicture}
\foreach \t/\offs in {0/0,1/4.2} {
\begin{scope}[shift={(\offs,0)}]
    \runfinite{2}{2}{0}{0}
    \runfinite{2}{3}{2}{0}
    \runfinite{2}{0}{1}{0.7}
    \runfinite{2}{1}{3}{0.7}

\ifthenelse{\t=0}{
     \draw [color=\perclr] (w11) ellipse (0.35 and 0.3);
     \draw [color=\perclr] (w31) ellipse (0.35 and 0.3);
     \draw [color=\perclr,rotate around={35:(0.5,1.35)}] (0.5,1.35) ellipse (1.05 and 0.3);
}{
     \draw [color=\perclr] (w01) ellipse (0.35 and 0.3);
     \draw [color=\perclr] (w21) ellipse (0.35 and 0.3);
     \draw [color=\perclr,rotate around={35:(2.5,1.35)}] (2.5,1.35) ellipse (1.05 and 0.3);
}
     \draw [color=\perclr,rotate around={35:(0.5,0.35)}] (0.5,0.35) ellipse (1.05 and 0.3);
     \draw [color=\perclr,rotate around={35:(2.5,0.35)}] (2.5,0.35) ellipse (1.05 and 0.3);

     \draw[color=\actclr, wave] (w00) to[bend right]  (w10);
     \draw[color=\actclr, wave] (w20) to[bend right]  (w30);
     \draw[color=\perclr, wave, densely dashdotted] (w01) to[bend left]  (w11);
     \draw[color=\perclr, wave, densely dashdotted] (w00) to[bend left]  (w10);
     \draw[color=\perclr, wave, densely dashdotted] (w20) to[bend left]  (w30);

    \node [fill=white,inner sep=0.3pt] (l1) at (m00) {\small $0\t0$ }; 
    \node [fill=white,inner sep=0.3pt] (l1) at (m10) {\small $0\t1$ }; 
    \node [fill=white,inner sep=0.3pt] (l1) at (m20) {\small $1\t0$ }; 
    \node [fill=white,inner sep=0.3pt] (l1) at (m30) {\small $1\t1$ }; 

    \node [fill=white,inner sep=0.3pt] (l1) at (m01) {\small $0\t0$ }; 
    \node [fill=white,inner sep=0.3pt] (l1) at (m11) {\small $0\t1$ }; 
    \node [fill=white,inner sep=0.3pt] (l1) at (m21) {\small $100$ }; 
    \node [fill=white,inner sep=0.3pt] (l1) at (m31) {\small $111$ }; 
\end{scope}
}
\end{tikzpicture}
\end{center}
Looking at the wavy lines that represent write permissions and capabilities,
 we note a dual of the pattern observed for r.d. before:
the initial points of two pairs of runs are related by $K_A$, but only
one of them remains related by $\WP_A$. %

Adapted to deal with non-termination in the same way as the reference
definition of robust declassification, the trace-based definition,
given in \cite{cecchetti2017nonmalleable} in the context of programs that
are guaranteed to terminate, can be stated as follows.

\begin{definition} \label{def:te2} (adapted from \cite{cecchetti2017nonmalleable}, Def. 6.7) A program $P$ and s.s.c. $S$ derived from $\langle W,R\rangle$ satisfies \emph{trace-based transparent endorsement}
if for all agents $A$, initial stores $\sigma$, pairs of assignments $w_1,w_2$ to secret-to-$A$ variables in $Y=\Vars(P)\setminus R(A)$ and
pairs of assignments $v_1,v_2$ to $A$-writable variables in $X=W(A)$, if $t^{ij}$ is the trace generated from $\langle p, \sigma[X\mapsto v_i][Y\mapsto w_j]\rangle$, then \emph{either}
$\fix_A(t^{11})=\fix_A(t^{21})$ iff $\fix_A(t^{12})=\fix_A(t^{22})$, \emph{or} at least one of the four runs diverges.
\end{definition}

It would appear natural to follow a similar approach as with robust
declassification and work off of the formula for integrity in Def. \ref{def:inte}
to express that if it is violated, this must be independent of secrets, perhaps 
quantifying
\begin{align} & \langle \WC_A\rangle \Diamond \varphi \wedge \neg \Diamond \langle \WP_A \rangle \varphi \nonumber \\
& \Rightarrow \B{\KC_A} ( \langle \WC_A\rangle \Diamond \varphi \wedge \neg \Diamond \langle \WP_A \rangle \varphi) \label{eqn:badte}
\end{align}
over an appropriate notion of \emph{read-stable} formulae that are independent of secrets,
that is, satisfy
$$ w\vDash \Diamond\varphi \Rightarrow [\KC_A]\Diamond\varphi $$
for all $w$.
However, this turns out to not work.
In Example \ref{ex:te} (ii), the violation that occurs when $s=1$ is
due to the formula $t@1=1$, that is to say, the violation encodes
that the effect that the trusted variable $t$ equals 1 could be brought about,
without this ever being directly permissible. But this formula is not in fact
read-stable, as at the initial world where $s$ and $t$ are both 1,
it will eventually be true but $A$ considers it possible that $s=0$ and
hence it would not be. Equivalently, by observing $t@1=1$, $A$ in fact learns
that the secret $s$ equals 1.
We might try to fix this by picking a formula which is carefully designed to
be read-stable, such as $\varphi\triangleq t@1=1 \vee (s@0=0 \wedge u@0=1)$,
and this indeed violates (\ref{eqn:badte}).
However, this trick turns out to be too powerful,
as the ostensibly safe example $t:=u$ violates (\ref{eqn:badte}) with
this formula at the all-0 initial world as well: $\D{\WP_A} \varphi$ is true as the ($s=0,u=1$)-world is $\WP_A$-reachable, and hence $\Diamond \D{\WP_A} \varphi$ is true, but the $s=1,u=0$ world at
which an integrity violation occurs is still $\KC_A$-reachable too.

The key to solving this problem instead lies in the difference between the structure
of the integrity formula (Def. \ref{def:inte}) and the confidentiality formula (Def. \ref{def:conf}).
Though we have proved them equivalent to structurally identical trace-based definitions,
the conditions that have to hold \emph{for each formula $\varphi$} are not in fact logically equivalent,
and the violating $\varphi$
in the two cases differ. In the case of integrity, which prohibits flows from untrusted
to trusted variables, the violations of our definition pertain to the trusted effect at the end of the flow;
on the other hand, for confidentiality, which prohibits flows from secret to public,
the violations we found instead pertained to the secret origins/``causes'' such as $s@0=0$.
The definition of transparent endorsement that we seek to match treats causes and effects
in a subtly different manner, as it asserts that secrets should not influence whether
a \emph{particular} untrusted input affects \emph{any} trusted effect. 
Hence, Ex. \ref{ex:te} (iii) should satisfy transparent endorsement.
This existential quantification over effects turns out to be hard to mirror by
restricting the quantification over formulae.
Even with the stipulation that a violating formula must be read-stable,
we can construct a violation of Eqn. \ref{eqn:badte} as
$\varphi\triangleq t_1@1=1 \vee (s@0=0 \wedge u@0=1)$,
which gives an integrity violation at the world where $s=1$ and $u=0$,
but not at the one where $s=0$ and $u=0$.
To reproduce the behaviour of robust declassification, we therefore define the following alternative
variant of integrity that instead structurally follows the formula for confidentiality,
except with permissions and capabilities swapped.

\begin{remark}[Cause form of integrity] \label{def:intecf} 
A security Kripke frame satisfies \emph{cause integrity} if for all agents $A$, worlds $w$
and t.s. formulae $\varphi$,
$$w \vDash \Diamond \B{\WP_A} \varphi \Rightarrow \B{\WC_A} \Diamond \varphi. $$
It satisfies \emph{cause termination-insensitive integrity} if for the same,
$$w \vDash \Diamond{\Downarrow} \wedge \Diamond \B{\WP_A}\varphi \Rightarrow \B{\WC_A}(\Box\neg{\Downarrow} \vee \Diamond \varphi). $$
\end{remark}

A possible natural-language reading of this definition is that
if eventually $A$ is not permitted to falsify $\varphi$, then $A$
must not be capable of performing any action that would result in
$\varphi$ being perpetually false. The program $t:=u$ violates
it with the formula $u@0=0$: once the value of $u$ has been
recorded in the trusted location $t$, $A$ is no longer allowed
changes that would result in a state only consistent with $u$ being 1,
but $A$ would be capable of performing such a change in the beginning.
We can then finally formulate transparent endorsement.

\begin{definition} \label{def:te}
A security frame satisfies \emph{transparent endorsement} iff
for all worlds $w\in \mathcal{F}$, all agents $A$ and all \emph{read-stable} t.s. formulae $\varphi$,
\begin{align*} w\vDash & \langle \KC_A \rangle ( \Diamond \B{\WP_A}\varphi \wedge \neg \B{\WC_A}\Diamond \varphi ) \\
& \Rightarrow \Diamond \B{\WP_A}\varphi \wedge \neg \B{\WC_A}\Diamond \varphi.%
\end{align*}
\end{definition}

\begin{definition} \label{def:teit}
A security Kripke frame satisfies \emph{termination-insensitive transparent endorsement} iff
for all worlds $w\in \mathcal{F}$,  agents $A$ and \emph{read-stable} t.s. formulae $\varphi$,
\begin{align*} w\vDash & \langle \KC_A \rangle ( \Diamond{\Downarrow} \wedge \Diamond \B{\WP_A}\varphi \wedge \neg \B{\WC_A}(\Box\neg{\Downarrow} \vee \Diamond \varphi ) \\ 
 & \hspace{2em} \wedge \B{\WP_A} (\Box\neg\varphi \Rightarrow \Diamond{\Downarrow}) ) \\
& \Rightarrow \Box\neg{\Downarrow} \vee (\Diamond \B{\WP_A}\varphi
\wedge \neg \B{\WC_A}(\Box\neg{\Downarrow} \vee \Diamond \varphi ) ) .
\end{align*}
\end{definition}

Since these formulae are structurally identical to the ones for confidentiality,
proofs of equivalence to the trace-based ones can proceed by symmetry.
\begin{proposition} \begin{enumerate}[(i)]
\item If $P$ is a program with s.s.c. $S$, then the Kripke interpretation of $P$ and $S$ satisfies
cause termination-insensitive integrity iff $P$ and $S$ satisfy trace-based integrity (Def. \ref{def:inte2}).
\item If $P$ also signals termination and $W(A)\subseteq R(A)$, then it satisfies trace-based transparent endorsement (Def. \ref{def:te2})
iff the Kripke interpretation of $P$ and $S$ satisfies termination-insensitive transparent endorsement (\ref{def:teit}).
\end{enumerate} \label{prop:teeq} 
\end{proposition}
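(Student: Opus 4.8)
The plan is to obtain both parts by transporting the proofs of Theorem~\ref{thm:confeq} (for (i)) and Theorem~\ref{thm:syncrd} (for (ii)) along the structural correspondence that turns a confidentiality-style formula into a cause-integrity-style one. In the inner positions this correspondence replaces the observation-capability relation $\KC_A$ by the write-permission relation $\WP_A$ and the observation-permission relation $\KP_A$ by the write-capability relation $\WC_A$, and in Def.~\ref{def:teit} it turns the outermost $\D{\WC_A}$ of Def.~\ref{def:rdit} into $\D{\KC_A}$; on the trace-based side it replaces the view $\view_A$ (equivalently, the visible variables $R(A)$) by the write-invariant view $\fix_A$ (equivalently, the trusted variables $\Vars\setminus W(A)$) and swaps the roles of the two families of initial-store perturbations, the one ranging over assignments to secret-to-$A$ variables and the one ranging over assignments to $A$-controlled variables. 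I would first record that under this correspondence write-stable formulae become read-stable ones, and that the hypotheses ``$P$ signals termination'' and $W(A)\subseteq R(A)$ are exactly what is needed on the dual side --- the latter because $W(A)\subseteq R(A)$ is equivalent to $\Vars\setminus R(A)\subseteq\Vars\setminus W(A)$, so $\fix_A$ records every change producible by the dual ``attacker move'' (a perturbation of secret variables), which is the analogue of ``$A$ observes its own writes''.

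For part (i), the starting point is that the Kripke interpretation is built so that $\WP_A$ is defined from $\fix_A$ exactly as $\KC_A$ is defined from $\view_A$, while the restriction of $\WC_A$ to initial worlds is defined from $\fix_A$ exactly as the restriction of $\KP_A$ to initial worlds is from $\view_A$. The one structural asymmetry --- that $\WC_A$, unlike $\KP_A$, relates no two distinct non-initial worlds --- only simplifies matters: at a non-initial world $w$ the conclusion $\B{\WC_A}(\Box\neg{\Downarrow}\vee\Diamond\varphi)$ is automatically satisfied, since $w$ has no $\WC_A$-successor other than itself and $\Diamond\B{\WP_A}\varphi\Rightarrow\Diamond\varphi$ by reflexivity of $\WP_A$. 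So I would reduce to checking the cause-t.i.-integrity formula at initial worlds, where the $\WC_A$-related pairs are precisely the pairs $\langle p,\sigma[X\mapsto v_1]\rangle$, $\langle p,\sigma[X\mapsto v_2]\rangle$ ($X=W(A)$) quantified in Def.~\ref{def:inte2}, and then rerun the argument of Theorem~\ref{thm:confeq}, reading $\fix_A$ for $\view_A$ and $\WP_A,\WC_A$ for $\KC_A,\KP_A$, using the characteristic formulae for prefixes to supply the witnessing $\varphi$.

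For part (ii), I would rerun the proof of Theorem~\ref{thm:syncrd} under the same dictionary: every $\view_A$ reads as $\fix_A$, every inner $\KC_A$/$\KP_A$ reads as $\WP_A$/$\WC_A$, and the outer ``attacker performs a write'' quantification reads as the outer $\D{\KC_A}$ quantification over perturbations of secret variables. The abstract facts invoked there --- commutation of $\WC_A$ with $\KC_A$ (Prop.~\ref{prop:allcommute}), perfect recall, characteristic formulae for prefixes, and ``$P$ signals termination'' --- all have the required analogues, and part (i) supplies the translation of the inner cause-integrity condition into its trace-based form (Def.~\ref{def:inte2}).

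I expect the main obstacle to be twofold. First, the proof of Theorem~\ref{thm:syncrd} very likely exploits the identity $\KC_A=\KP_A$, which holds in every Kripke interpretation, to collapse the robust-declassification formula to its simplified shape; the dual identity $\WP_A=\WC_A$ is \emph{false} in general --- precisely the asymmetry stressed in Section~\ref{sec:inte} --- so the transparent-endorsement proof must be carried out against the unsimplified form of Def.~\ref{def:rdit}, keeping the $\WP_A$- and $\WC_A$-positions apart at every step. Second, one must check that ``$P$ signals termination'' together with $W(A)\subseteq R(A)$ makes the four-trace quantification of Def.~\ref{def:te2}, with its ``at least one run diverges'' escape clause, line up with the $\D{\KC_A}$-instances of the modal formula and its termination bookkeeping (the $\Box\neg{\Downarrow}\vee(\cdots)$ conclusion and the $\B{\WP_A}(\Box\neg\varphi\Rightarrow\Diamond{\Downarrow})$ conjunct), i.e. that on the dual side ``signals termination'' still discharges the role it played in Theorem~\ref{thm:syncrd}. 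Once that alignment is secured, the remaining bookkeeping transfers term by term from the proof of Theorem~\ref{thm:syncrd}.
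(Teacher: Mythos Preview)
Your proposal is correct and follows the same strategy as the paper's (sketch) proof: dualize the arguments of Theorems~\ref{thm:confeq} and~\ref{thm:syncrd} along the dictionary $\KC_A\leftrightarrow\WP_A$, $\KP_A\leftrightarrow\WC_A$, $\view_A\leftrightarrow\fix_A$, with the outer $\D{\WC_A}$ becoming $\D{\KC_A}$. Your remark that $W(A)\subseteq R(A)$ is self-dual via $\Vars\setminus R(A)\subseteq\Vars\setminus W(A)$ is exactly the paper's observation.

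The one place the paper takes a shortcut you do not is in handling $\WP_A\neq\WC_A$. You propose to carry the argument through against the unsimplified form of Def.~\ref{def:rdit}, keeping $\WP_A$ and $\WC_A$ separate throughout; this works but is heavier. The paper instead notes that the proofs of Theorems~\ref{thm:confeq} and~\ref{thm:syncrd} only use the inner knowledge relation \emph{productively on initial worlds}, and on initial worlds $\WP_A$ and $\WC_A$ coincide (both reduce to equality of the single initial $\fix_A$-entry). Since the subformulae to which $\B{\WC_A}$ is applied (such as $\Box\neg{\Downarrow}\vee\Diamond\varphi$) are constant along runs, quantifying over $\WC_A$-related initial worlds and over $\WP_A$-related worlds gives the same answer there, so the simplified argument transfers directly. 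This is what lets the paper say the proof ``can proceed analogously\ldots despite the permission-capability swap'' rather than redoing the bookkeeping from scratch.
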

\iffullver
\begin{proof} See \hyperlink{proof:teeq}{appendix}. \end{proof}
\fi

\begin{remark} Though Def. \ref{def:teit} is the one that matches the original definition
of transparent endorsement, a definition
which is violated by Ex. \ref{ex:te} (iii) may in fact be of interest. 
As it stands, it is not necessarily safe to reuse ``trusted'' outputs of a program
satisfying transparent endorsement: if a program
fragment such as that example is followed by another which elevates the contents
of only one of the two variables to a yet higher level of trust (say, $t^* = t_1$),
then two ``innocuous'' program fragments that individually satisfy transparent endorsement would
combine into a program that does not. A definition based on Eqn. \ref{eqn:badte} would avoid
this issue.
\end{remark}

{%
\subsection{Comparing Security Properties}
One advantage of our representation of security properties as modal implications
that have to hold for all subformulae $\varphi$ is that it is straightforward
to contextualise properties in terms of their mutual implications. A
property of the form $A\Rightarrow B$ implies, and thus is stronger than, another property $A'\Rightarrow B'$
if $A'\Rightarrow A$ and $B\Rightarrow B'$. 
For the properties we discussed, it usually turns out to be sufficient to consider
strengthenings and weakenings due to additional conjunctions and disjunctions respectively:
for example, when $\KC_A=\KP_A$, the simplified definition
of robust declassification (r.d.) (Def. \ref{def:rd1})
is $\langle \WC_A \rangle \Diamond \B{\K_A}\varphi \Rightarrow \Diamond \B{\K_A}$,
and since $\Diamond{\Downarrow} \wedge \Diamond \B{\K_A}\varphi \Rightarrow \Diamond \B{\K_A}\varphi$
and $\Diamond \B{\K_A} \varphi \Rightarrow \Box\neg{\Downarrow} \vee \Diamond \B{\K_A}\varphi$,
it implies the wrong attempt of \ref{def:rdvars}(a) to define
a termination-insensitive (t-e) version, 
$ \langle \WC_A \rangle ( \Diamond{\Downarrow} \wedge \Diamond \B{\K_A}\varphi  ) 
 \Rightarrow \Box\neg{\Downarrow} \vee \Diamond \B{\K_A}\varphi$,
 which we only formulated for that case. By using similar 
reasoning, we can construct 
the following overview of how the confidentiality properties discussed in this paper 
relate to each other.
\begin{center}
\begin{tikzpicture}[>=latex]
    \node (conf) at (5,0) {confidentiality (\ref{def:conf})};
    \node[align=center] (confpi) at (4,-5) {progress-insensitive\\ confidentiality (IV.A Eqn. \ref{eqn:piconf})};
    \node[align=center] (confti) at (0,0) {termination-insensitive\\ confidentiality (\ref{def:confti})};
    \node[align=center] (rd) at (4,-2.7) {robust declassi-\\fication (\ref{def:rd1})};
    \node[align=center] (rdjust) at (0,-1.5) {alt. termination-\\insensitive r.d. (\ref{def:rdjust})};
    \node (rdalt1) at (0,-2.7) {wrong t-i r.d. (\ref{def:rdvars}(a))};
    \node (rdalt2) at (0,-3.7) {wrong t-i r.d. (\ref{def:rdvars}(b))};
    \node[align=center] (rdti) at (0,-5) {termination-insensitive\\ r.d. (\ref{def:rdit})};

    \draw [dotted, thick] (-1.9,-2.3) to [square left brace]  (1.9,-2.3);
    \draw [dotted, thick] (-1.9,-4.1) to [square right brace]  (1.9,-4.1);

    \draw[double equal sign distance,-implies] (conf)--(confti);
    \draw[double equal sign distance,-implies] (conf) to [bend left=45] (confpi);
    \draw[double equal sign distance,-implies] (conf)--(rd);
    \draw[double equal sign distance,-implies] (rd)--(rdalt1);
    \draw[double equal sign distance,-implies] (confti)--(rdjust);
    \draw[double equal sign distance,-implies] (rdjust)--(rdalt1);
    \draw[double equal sign distance,-implies] (rdalt1)--(rdalt2);
    \draw[double equal sign distance,-implies] (rdalt2)--(rdti);

\end{tikzpicture}
\end{center}
No additional implications beyond those implied by transitivity
hold among these properties in general (see Fig. \ref{fig:violations}
for some of the examples witnessing this). In particular, we note
that robust declassification implies the benchmark definition
of termination-insensitive robust declassification (Def. \ref{def:rdit})
but not our proposed alternative (Def. \ref{def:rdjust}).
A program witnessing this, $ p:=s;\, \mathsf{if}\,u=0\,\mathsf{then}\,\mathsf{loop} $,
can be identified straightforwardly by inspecting the formula, which
suggests that we fail the additional condition $\Diamond{\Downarrow}$
on the right-hand side.

We can establish a similar diagram for the integrity properties, which
are fewer in number:
\begin{center}
\begin{tikzpicture}[>=latex]
    \node (inte) at (4,0) {integrity (\ref{def:inte}/\ref{def:intecf})};
    \node[align=center] (intewrong) at (4,-4) {alt. transparent\\ endorsement (Sec. IV.D Eqn. \ref{eqn:badte})};
    \node[align=center] (inteti) at (0,0) {termination-insensitive\\ integrity (\ref{def:inteti}/\ref{def:intecf})};
    \node[align=center] (te) at (3.5,-2) {transparent en-\\dorsement (\ref{def:te})};
    \node[align=center] (teit) at (0,-2) {termination-insen-\\sitive t.e. (\ref{def:teit})};

    \draw[double equal sign distance,-implies] (inte)--(inteti);
    \draw[double equal sign distance,-implies] (inte) to[bend left=45] (intewrong);
    \draw[double equal sign distance,-implies] (inte)--(te);
    \draw[double equal sign distance,-implies] (te)--(teit);
    \draw[double equal sign distance,-implies] (inteti)--(teit);

\end{tikzpicture}
\end{center}
We have not investigated the relationship between the wrong
definition of transparent endorsement (t.e.) proposed in Eqn. \ref{eqn:badte}
and the property in \ref{def:te}, but the remaining implications are
again exhaustive.

}

\section{Related Work}\label{rw}
The connection between modal logics and security properties has been studied before. Since Sutherland's work on non-deducibility \cite{Suth86}, 
a common trait to several works on information flow control has been the appeal to the concept of knowledge as a fundamental mechanism to bring out
what security property is being enforced and compare it with the knowledge allowed by the security policy \cite{dima2006nondeterministic,askarov2007gradual,DBLP:journals/corr/abs-1107-5594,DBLP:conf/nordsec/Balliu13,AhmadianB22,DBLP:conf/eurosp/McCallBJ22}. These works have produced elegant  security conditions for confidentiality and various flavours of declassification. Halpern and O'Neill \cite{DBLP:journals/tissec/HalpernO08} introduce a  framework for reasoning about confidentiality in multi-agent systems. Balliu et al. \cite{DBLP:conf/pldi/BalliuDG11} study epistemic modal logics to provide syntactical characterisations of confidentiality and declassification, and their relation to trace-based conditions  \cite{DBLP:conf/nordsec/Balliu13}.
Baumann et al. \cite{DBLP:conf/csfw/BaumannDGN21} introduce an epistemic approach to compare information flows of specifications and their refinements.
Clarkson et al. \cite{DBLP:conf/post/ClarksonFKMRS14} develop new logics for reasoning about trace-based conditions for hyperproperties \cite{DBLP:journals/jcs/ClarksonS10}. Recently, Lamport and Schneider  \cite{DBLP:conf/csfw/LamportS21} explore TLA to specify and verify a class of hyperproperties. Our work draws inspiration on these works, yet our contributions go beyond confidentiality properties. We provide a general and intuitive modal framework to characterise the interplay between confidentiality and integrity, declassification and endorsement, as well as subtle notions of termination- and progress-insensitivity which are at the heart of soundness justifications for modern enforcement mechanisms \cite{DBLP:journals/jsac/SabelfeldM03,DBLP:conf/esorics/AskarovHSS08,DBLP:journals/corr/abs-1107-5594}.

\changed{
One particularly noteworthy approach which \emph{does} unify confidentiality and integrity in a modal framework was proposed by Moore et al. \cite{mooreabstract}. In contrast with our proposal, that framework does not feature a modality of time at all, opting to represent entire runs as single possible worlds; its primitive modalities are a ``knows'' modality analogous to our $\K_A$, as well as a modality $[E_A]$ glossed as ``$A$ ensures'', which roughly corresponds to $[W_{\bar A}]$ in our framework, where $\bar A$ is the complementary agent of $A$, who can write everything that $A$ can not. Thus, $[E_A]\varphi$ if only $A$ can falsify $\varphi$ by performing a write.
To model security properties, which we think of as inextricably linked to the system's evolution over time (e.g. confidentiality saying that no agent gains additional knowledge), their work thus resorts to using the write capabilities as a proxy for ``inputs'' (i.e. state at time 0), so for example confidentiality is given by the formula $[E_A]\varphi \Rightarrow \neg [K_B]\varphi$ for all agents $A,B$. Formulae $\varphi$ that are ensured by $A$ are taken to be $A$'s inputs to the system, which a priori $B$ should not know.
The apparent dependency of a purely epistemic property talking about $B$'s knowledge on $A$'s active capabilities is unexpected, and has the consequence that $A$ active capabilities must be of a particular form.
This makes it impossible to express many policies in the seemingly obvious way. A particularly clear example of it arises for a policy that says every agent may know everything, but integrity of some data, say a variable $a$, must be maintained. Integrity is given dually to confidentiality as the formula $[K_A]\varphi \Rightarrow \neg [E_B]\varphi$ -- but if $[K_A]\varphi$ holds for every $\varphi$, this property says that $[E_B]\varphi$ may not hold for any $\varphi$ at all.
 To some extent, this issue can be sidestepped by restricting the sets of formulae and hand-crafting agents with $K$ and $E$ relations that do not directly correspond to the capabilities of any agent, but this makes the system unintuitive. Moreover, further difficulties arise when we consider properties with complex dependencies on time, such as (when-)declassification or explicit (where-)endorsement as discussed at the end of Section \ref{sec:inte}.
}

Other attempts at building such general frameworks include selective
interleaving functions \cite{McL94}, possibilistic security properties \cite{ZL97}, and Mantel's  assembly
kit \cite{Mantel00}. These approaches are quite different, and focus more on the modular construction of properties rather than
 extensional properties. Beyond information-flow properties, past works have studied modal logics, mainly epistemic logics \cite{rak}, in the context of 
 computer security, including BAN logic \cite{10.1145/77648.77649} and applied $\pi$-calculus \cite{DBLP:conf/forte/ChadhaDK09} to model
 knowledge in security protocols.

A key goal of our work is to illuminate on the different security properties proposed by the language-based security community, including robust declassification and transparent endorsement \cite{DBLP:conf/csfw/ZdancewicM01,DBLP:conf/csfw/MyersSZ04,DBLP:conf/csfw/ChongM06,DBLP:journals/jcs/MyersSZ06,DBLP:conf/pldi/BalliuM09,DBLP:journals/corr/abs-1107-5594,cecchetti2017nonmalleable,DBLP:conf/csfw/OakABS21}. A critique that may be levelled at the past work, our own included, is that it has not always managed to separate concerns very clearly. In particular, constraints in specification techniques, programming language features, and details and limitation in the enforcement mechanisms have been intertwined in such a way that it has often been unclear exactly what security properties are enforced and how these properties relate to each other. In contrast, we show that modal logic appears to be a well suited framework to study complex information flow properties in an elegant and intuitive manner. Cecchetti et al.  \cite{cecchetti2017nonmalleable} introduce the security property of nonmalleable information flow, which incorporates both robust declassification and  transparent endorsement in a trace-based setting. We use our modal framework to capture these properties in more general settings and show equivalence with appropriately generalised forms of both. 
The subtleties we uncover in the process demonstrate the advantage of our framework in exposing hidden assumptions in trace-based properties and allowing direct comparison between different definitions.
Askarov and Myers \cite{DBLP:journals/corr/abs-1107-5594} study knowledge-based security properties including robust declassification. Birgisson et al. \cite{DBLP:conf/iciss/BirgissonRS10} propose a trace-based framework for studying different facets of integrity. Our modal logic can serve as a unifying framework for these properties.

There exists a large array of works on verification of information flow properties \cite{DBLP:journals/jsac/SabelfeldM03}, including security type systems \cite{DBLP:journals/jcs/MyersSZ06,DBLP:conf/csfw/ChongM06,DBLP:journals/corr/abs-1107-5594} and program analysis \cite{DBLP:conf/pldi/BalliuM09,DBLP:conf/csfw/OakABS21} for robust declassification, and security type systems for nonmalleable security \cite{cecchetti2017nonmalleable}. Our work provides an intuitive baseline to justify soundness and security properties that are enforced by these mechanisms. A new direction for future work is the investigation of model checking techniques for direct verification of formulae in our modal logic. 
While model checking of modal logics in the context of security protocols has a long history \cite{GammieM04,lomuscio:2009:mcmas}, some works study model checking of programs with respect to specifications in epistemic logics  \cite{DBLP:conf/csfw/BalliuDG12,DBLP:conf/ijcai/GorogiannisRB17} and hyperproperty-based logics \cite{DBLP:conf/cav/FinkbeinerRS15}. Unfortunately, none of these works consider security properties that intertwine confidentiality and integrity.

\section{Conclusions}

We have introduced a framework based on modal logic which
allows us to reason about the security of computer systems by
representing the capabilities of agents in the system and
the permissions granted to them in the security policy as modal relations.
After showing how to represent programs and a standard class
of security policies %
in this framework, we have presented several intuitive
definitions of security properties from the literature, 
including variants of confidentiality, integrity, 
robust declassification and transparent endorsement, proving 
a variant of each equivalent to known trace-based definitions.
In the process, we have uncovered several subtleties of the established
definitions, including a potential issue with how termination-insensitive
robust declassification handles termination and an interesting
detail in how transparent endorsement deals with cause and effect.
The exposure of this sort of detail is one of the benefits of 
viewing existing properties through the lens of our framework.

\if 0
This paper amounts to an initial exploration of the potential
of the framework we propose, and suggests several promising directions
for future work. In the settings we considered, the attacker model
was generally taken to be based on controlling the initial state
of a subset of memory locations, but we could imagine many others,
both more and less restrictive, such as ones based on being able
to inject or replace program code subject to certain restrictions,
which would result in a more operational notion of attacks. Most
definitions we make could be evaluated with respect to different
attacker models with little change.
Furthermore, our framework is sufficiently expressive that
many other properties known from the literature, folklore and
ones that have not yet been formulated could be captured in it,
such as other higher-order combinations of confidentiality and integrity
in the spirit of robustness, and possibility of comparing
security properties between refining models.
\fi

We assumed program semantics to be deterministic, with any
non-deterministic behaviour optionally encoded as additional input parameters,
and all security properties we considered were possibilistic, 
rejecting programs if untoward information flows were possible regardless
of their likelihood. 
\changed{
In future work, we aim to consider
settings in which these assumptions do not hold,
such as differential privacy \cite{dwork-diffpriv},
as well as security properties based on quantitative information flow (QIF) \cite{denningsecurity,grayqif}.
We expect that
existing applications of modal logic to probability
and uncertainty (\cite{rau}, Chapter 7) will be of use to that end:
as a basic example, we may consider analogues of our definitions
where all knowledge modalities have been replaced by ones capturing
a level of certainty.
}

\changed{
While we have not focussed on formal verification, there is
extensive prior work and tooling available for model-checking security
properties with modal logics such as HyperLTL \cite{DBLP:conf/cav/FinkbeinerRS15}.
Security Kripke frames can be generated automatically for any system
with a known operational semantics, and there are no fundamental
obstacles to verifying our formulae directly on them. However,
the formulae that we presented are optimised for understandability
and power to facilitate equivalence proofs, rather than efficient
model checking. 
Investigating formal verification in our framework is a promising direction
for future work.
}

\section{Acknowledgements}

We are indebted to Mae Milano, who first suggested to the first author to investigate the application of modal logic to security properties, Owen Arden and Joe Halpern. Their early input and collaboration at Cornell was instrumental in the development of these ideas.
Thanks are also due to Andrew Myers, Fred Schneider, Scott Moore, Deepak Garg, Aslan Askarov, Andrei Sabelfeld  and David Sands for valuable discussion and feedback.
Finally, we would like to thank the anonymous reviewers for their insightful comments.

This work was partially supported by the Swedish Research
Council under projects JointForce and WebInspector, and the Swedish Foundation for Strategic
Research (SSF) under projects CHAINS and TrustFull.

\bibliography{biblio}{}

\iffullver

\appendix

\subsection{Missing proofs}

\newtheorem*{theorem*}{Theorem}
\newtheorem*{proposition*}{Proposition}

\begin{theorem*}[\ref{thm:confbasic}] Suppose $P$ is a program with s.s.c. $S$.
Then $P$ and $S$ satisfy Def. \ref{def:confts}
iff the Kripke interpretation of $P$ and $S$ satisfies
Def. \ref{def:conf}.
\end{theorem*}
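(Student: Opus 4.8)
The plan is to prove the two implications separately inside the Kripke interpretation (Def.~\ref{def:kripint}), whose worlds are the finite traces in $\Tr(P)$, whose time relation $T$ is the prefix order, and in which $\KC_A=\KP_A$, written $\K_A$, relates two worlds exactly when they have the same $\view_A$. I would use two elementary facts about $\view_A$: (i) if a finite trace $z$ is a prefix of a (finite or infinite) maximal trace $t$, then $\view_A(z)$ is an initial segment of $\view_A(t)$; and (ii) walking along the finite prefixes of $t$, the length of $\view_A$ grows from $1$ to $|\view_A(t)|$ in steps of $0$ or $1$. Both are immediate from the definition of $\mathsf{dest}$.

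\textbf{($\Rightarrow$).} Assume $P,S$ satisfy Def.~\ref{def:confts}, fix $A$, a world $w$, a t.s.\ $\varphi$, and a witness $w'$ of $w\vDash\Diamond\B{\K_A}\varphi$, so $w$ is a prefix of $w'$ and $w'\vDash\B{\K_A}\varphi$. Given an arbitrary $\bar w\sim_{\K_A}w$, it suffices to produce a $T$-successor of $\bar w$ satisfying $\varphi$. Since $w$ is a prefix of $w'$ they share an initial store $\rho$, and since $\view_A(\bar w)=\view_A(w)$ the initial store $\bar\rho$ of $\bar w$ agrees with $\rho$ on $R(A)$, hence differs from it only on $X:=V(P)\setminus R(A)$. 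Applying trace confidentiality to $\rho$ and the secret assignments $\rho|_X,\bar\rho|_X$ gives $\view_A(\hat w)=\view_A(\hat{\bar w})$, where $\hat w,\hat{\bar w}$ are the maximal traces extending $w,\bar w$. Now $\view_A(w')$ is a finite initial segment of this common view (fact (i)), and $\view_A(\bar w)=\view_A(w)$ is an initial segment of $\view_A(w')$; so fact (ii), applied to the prefixes of $\hat{\bar w}$ that extend $\bar w$, yields a finite prefix $\bar w'$ of $\hat{\bar w}$ with $\bar w$ a prefix of $\bar w'$ and $\view_A(\bar w')=\view_A(w')$. Then $\bar w'\sim_{\K_A}w'$, so $\bar w'\vDash\varphi$, and $\bar w'$ is a $T$-successor of $\bar w$, so $\bar w\vDash\Diamond\varphi$. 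As $\bar w$ was arbitrary, $w\vDash\B{\K_A}\Diamond\varphi$. (This direction uses no temporal soundness of $\varphi$.)

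\textbf{($\Leftarrow$).} I would argue contrapositively. Suppose Def.~\ref{def:confts} fails: there are $A$, an initial store $\sigma$ and secret assignments $v_1\ne v_2$ to $X$ with $\view_A(t^1)\ne\view_A(t^2)$, where $t^i$ is the maximal trace from $\sigma^i:=\sigma[X\mapsto v_i]$. Let $u$ be the longest common initial segment of the two views; at least one of them strictly extends $u$, and since Def.~\ref{def:confts} is symmetric in $v_1,v_2$ we may assume it is $\view_A(t^1)$. The witnessing formula will be
$$\varphi \;:=\; \neg\bigwedge_{x\in X}\bigl(x@0=v_2(x)\bigr),$$
``the initial secret is not $v_2$'': a legitimate finite formula, and temporally sound because the initial store (hence the truth of $\varphi$) is constant along any run. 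Let $w''$ be the shortest prefix of $t^1$ with $|\view_A(w'')|=|u|+1$; by maximality of $u$, $\view_A(w'')$ is not an initial segment of $\view_A(t^2)$. Any $z$ with $\view_A(z)=\view_A(w'')$ starts from a store that agrees with $\sigma$ on $R(A)$ (compare first entries of the views) and whose secret part cannot be $v_2$ — otherwise fact (i) would make $\view_A(w'')$ an initial segment of $\view_A(t^2)$ — so $z\vDash\varphi$. Hence $w''\vDash\B{\K_A}\varphi$, and since $w''$ strictly extends the length-$1$ world $w_0^1:=\langle p,\sigma^1\rangle$, we get $w_0^1\vDash\Diamond\B{\K_A}\varphi$. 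On the other hand every $T$-successor of $w_0^2:=\langle p,\sigma^2\rangle$ is a prefix of $t^2$, so starts from $\sigma^2$ and falsifies $\varphi$, giving $w_0^2\not\vDash\Diamond\varphi$; and $\view_A(w_0^1)=\view_A(w_0^2)$ gives $w_0^1\sim_{\K_A}w_0^2$. Hence $w_0^1\not\vDash\B{\K_A}\Diamond\varphi$, so Def.~\ref{def:conf} fails at $w_0^1$ with this t.s.\ $\varphi$.

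The hard part will be the choice of $\varphi$ in the $(\Leftarrow)$ direction. A ``positive'' formula — a characteristic formula for some world of $t^1$, or ``the secret equals $v_1$'' — does not work, because a failure of trace confidentiality between $v_1$ and $v_2$ only guarantees that the view eventually tells the two apart, not that it ever reveals the whole secret (a third secret may keep producing the same view), so such a formula is in general never known by $A$ and $\Diamond\B{\K_A}\varphi$ fails. The ``negative'' formula ``the secret is not $v_2$'' sidesteps this: it is temporally sound because it is run-constant, it is precisely what $A$ comes to know at the first observable step where the two views part, and it is vacuously false along $t^2$. Everything else — the $\mathsf{dest}$ bookkeeping and the initial-segment arguments of facts (i)--(ii) — is routine.
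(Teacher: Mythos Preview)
Your proof is correct, and for the ($\Leftarrow$) direction it is essentially the paper's argument: both go contrapositive, pick a singleton world on each of the two offending runs, and use as witness a formula saying ``we are not in the second run''. The paper invokes the abstract characteristic formula $\neg\varphi_{w^2}$ for this; your explicit $\neg\bigwedge_{x\in X}(x@0=v_2(x))$ is a concrete instance of the same idea, and your careful handling of the WLOG (via the longest common prefix $u$ of the two views) is cleaner than the paper's somewhat loose ``assume $t^1$ is at least as long as $t^2$''.

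The ($\Rightarrow$) direction is where you genuinely diverge. The paper again argues contrapositively: from a world $w$ violating the modal formula it first retreats to the initial world $w_1$ using perfect recall, then extracts from $\neg\B{\K_A}\Diamond\varphi$ a $\K_A$-related $w_2$ with $\Box\neg\varphi$, and finally uses $\Diamond\B{\K_A}\varphi$ to separate the two maximal traces in $A$'s view. You instead prove the implication directly: given $\bar w\sim_{\K_A} w$, you use trace confidentiality to align the two maximal views and then walk along prefixes of $\hat{\bar w}$ until the view length matches $\view_A(w')$, producing a $T$-successor of $\bar w$ that is $\K_A$-related to $w'$ and hence satisfies $\varphi$. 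This buys you a proof that needs neither perfect recall nor temporal soundness of $\varphi$ for this direction, at the cost of the small bookkeeping facts (i)--(ii) about $\mathsf{dest}$; the paper's contrapositive avoids that bookkeeping but must appeal to perfect recall. Both routes are short, and your closing remark about why a ``positive'' witness fails is a useful observation that the paper does not make explicit.
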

\begin{proof}
\hypertarget{proof:confbasic}{}
\noindent \emph{Kripke $\Rightarrow$ trace-based.}
Want to show that a violation of definition \ref{def:confts}
implies a violation of Kripke frame confidentiality.
Suppose we have a pair of traces $t^i$ such that $\view_A(t^1)\neq \view_A(t^2)$,
but their initial states only differ in variables $A$ can not read.
Without loss of generality, assume that $t^1$ is at least as long
as $t^2$.
Let $w^i$ be the unique singleton traces that are prefixes
of the respective $t^i$. Then $w^1 \sim_{\K_A} w^2$ by definition.
Since $\view_A(t^1)\neq \view_A(t^2)$, the two lists
of observations must differ at some finite entry (or $\view_A(t^2)$,
being the shorter one, must end before some finite entry), say the $n$th.
Let $v^1$ be a successor of $w^1$ (i.e. $(w^1,v^1)\in T$) \todo{RG: Don't like the use of v}
at which $A$ has made at least $n$
observations, i.e. $|\view_A(v^1)|\geq n$.
Then for all $v^2$ s.t. $(w^2,v^2)\in T$, $v^1 \not\sim_{\K_A} v^2$.

Taking $\varphi$ to be the formula $\neg \varphi_{w^2}$, where
$\varphi_{w^2}$ is the characteristic formula for that run,
we therefore have $w^1 \vDash \Diamond \B{\K_A} \varphi$ (as
at $v^1$, we know that we are not in the run generated from $w^2$ \todo{ $v^1 \vDash \B{\K_A} \varphi$}),
but $w^1\vDash \neg \B{\K_A} \Diamond \varphi$, as $\varphi$
does not hold at any $T$-successor of $w^2$.

\vspace{0.5em}
\noindent \emph{trace-based $\Rightarrow$ Kripke.}
Want to show that a violation of definition \ref{def:conf}
implies a violation of trace-based confidentiality.
Suppose we are given a world $w$ such that
$w\vDash \Diamond \B{\K_A}\varphi \wedge \neg \B{\K_A}\Diamond \varphi.$
Let $w_1$ be the unique $T$-ancestor of $w$, i.e. world with $(w_1,w)\in T$,
such that $w_1$ is an initial trace, i.e. of length 1.
Then we must also have $w_1\vDash \Diamond \B{\K_A}\varphi \wedge \neg \B{\K_A}\Diamond \varphi$.
Indeed, $w_1\vDash \Diamond \B{\K_A}\varphi$ by transitivity of $T$.
To determine that $w_1 \vDash \neg \B{\K_A}\Diamond \varphi$,
we first observe that if $\varphi$ is t.s.,
then so is $\Diamond \varphi$. 
Hence if we had $w_1 \vDash \B{\K_A} \Diamond \varphi$,
then $w_1 \vDash \Box \B{\K_A}\Diamond \varphi$ by perfect recall,
and hence $w \vDash \B{\K_A}\Diamond\varphi$; but
$w \vDash \neg \B{\K_A}\Diamond \varphi$, resulting in a contradiction.

Since $w_1 \vDash \neg \B{\K_A}\Diamond \varphi$,
there must be a $w_2$ s.t. $w_1 \sim_{\K_A} w_2$ and $w_2 \vDash \Box \neg \varphi$.
All $T$-successors of $w_2$ have $\neg \varphi$.
On the other hand, since $w_1 \vDash \Diamond \B{\K_A} \varphi$,
there must be a $v_1$ s.t. $(w_1,v_1)\in T$ and
$v_1$ is not $\K_A$-related to any world where $\neg\varphi$.
In particular, $v_1$ may not be $\K_A$-related to any $T$-successor
of $w_2$. This implies that $A$'s view of the maximal trace
generated from $w_1$ has a prefix ($v_1$) which does
not occur as a prefix in $A$'s view of the maximal trace generated from $w_2$.
Therefore, the $\view_A$ of the two maximal traces must differ.

However, $w_1$ and $w_2$ themselves are $\K_A$-related,
so they only differ in $A$-unreadable variables, and as such satisfy
the preconditions of Definition \ref{def:confts}.
\end{proof}

\begin{theorem*}[\ref{thm:confeq}] Suppose $P$ is a program with s.s.c. $S$.
Then $P$ and $S$ satisfy Def. \ref{def:conf2}
iff the Kripke interpretation of $P$ and $S$ satisfies
Def. \ref{def:confti}.
\end{theorem*}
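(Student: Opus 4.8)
The plan is to follow the proof of Theorem~\ref{thm:confbasic} almost verbatim, tracking the two extra termination conjuncts; both directions go by contraposition.

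\textbf{Kripke $\Rightarrow$ trace-based.} Suppose Def.~\ref{def:conf2} fails: there are an agent $A$, an initial store $\sigma$, and assignments $v_1,v_2$ to $X=V(P)\setminus R(A)$ such that, writing $t^i$ for the maximal trace from $\langle p,\sigma[X\mapsto v_i]\rangle$, both $t^1$ and $t^2$ halt but $\view_A(t^1)\neq\view_A(t^2)$. Since these views are distinct finite vectors, one is not a prefix of the other; rename so that $\view_A(t^1)$ is not a prefix of $\view_A(t^2)$. Let $w^i$ be the singleton prefixes, so $w^1\sim_{\K_A}w^2$, and let $\varphi=\neg\varphi_{w^2}$ (the negated characteristic formula, temporally sound since $w^2$ is the initial world of its run, exactly as in the proof of Theorem~\ref{thm:confbasic}). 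Taking $v^1=t^1$ itself, $v^1$ is $\K_A$-related to no $T$-successor of $w^2$ (their views are precisely the prefixes of $\view_A(t^2)$, none of which equals $\view_A(t^1)$), so $v^1\vDash\B{\K_A}\varphi$ and hence $w^1\vDash\Diamond\B{\K_A}\varphi$; also $w^1\vDash\Diamond{\Downarrow}$ because $t^1$ halts. On the right-hand side, $w^2$ witnesses $w^1\vDash\neg\B{\KP_A}(\Box\neg{\Downarrow}\vee\Diamond\varphi)$: $t^2$ halts so $w^2\not\vDash\Box\neg{\Downarrow}$, and $\varphi$ is false at every $T$-successor of $w^2$ so $w^2\not\vDash\Diamond\varphi$. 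Thus $w^1$ violates Def.~\ref{def:confti}.

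\textbf{Trace-based $\Rightarrow$ Kripke.} Conversely, suppose the Kripke interpretation has a world $w$, agent $A$ and temporally sound $\varphi$ with $w\vDash\Diamond{\Downarrow}\wedge\Diamond\B{\K_A}\varphi\wedge\neg\B{\KP_A}(\Box\neg{\Downarrow}\vee\Diamond\varphi)$. First push this back to the singleton $T$-ancestor $w_1$ of $w$: the premise $\Diamond{\Downarrow}\wedge\Diamond\B{\K_A}\varphi$ transfers by transitivity of $T$, and the negated right-hand side transfers by perfect recall together with temporal soundness of $\Box\neg{\Downarrow}\vee\Diamond\varphi$ (here $\Box\neg{\Downarrow}$ is t.s.\ because divergence is a property of the whole run, and $\Diamond\varphi$ is t.s.\ when $\varphi$ is). From $w_1\vDash\neg\B{\K_A}(\Box\neg{\Downarrow}\vee\Diamond\varphi)$ we obtain $w_2\sim_{\K_A}w_1$ with the run of $w_2$ halting and $w_2\vDash\Box\neg\varphi$; passing to the singleton prefix of $w_2$ (which has the same view, hence is still $\K_A$-related to $w_1$, whose run still halts, and which still satisfies $\Box\neg\varphi$ by temporal soundness of $\varphi$) gives two singleton traces related by $\K_A$, i.e.\ initial stores $\sigma^1,\sigma^2$ agreeing on $R(A)$. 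Finally, from $w_1\vDash\Diamond\B{\K_A}\varphi$ pick $v_1$ a prefix of the maximal trace $t^1$ from $\sigma^1$ with $v_1\vDash\B{\K_A}\varphi$; then $v_1$ is $\K_A$-unrelated to every prefix of the maximal trace $t^2$ from $\sigma^2$ (all of which satisfy $\neg\varphi$), so $\view_A(v_1)$ is not a prefix of $\view_A(t^2)$ and hence $\view_A(t^1)\neq\view_A(t^2)$, while both $t^1$ and $t^2$ halt. This is a violation of Def.~\ref{def:conf2}.

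The routine part is the same bookkeeping with characteristic formulae, views, destuttering and prefixes as in Theorem~\ref{thm:confbasic}. The genuinely new point, and the main thing to get right, is the handling of the termination conjuncts: verifying that $\Box\neg{\Downarrow}\vee\Diamond\varphi$ is temporally sound (so the perfect-recall step still moves the violation to the initial world), and checking that the halting status of the runs in play, as well as $\Box\neg\varphi$, is preserved when we pass from a $\K_A$-related world to its singleton prefix.
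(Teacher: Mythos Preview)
Your proof is correct and follows essentially the same approach as the paper's: both directions go by contraposition, using characteristic formulae in the forward direction and perfect recall plus temporal soundness of $\Box\neg{\Downarrow}\vee\Diamond\varphi$ to push the violation to a singleton world in the backward direction. The only cosmetic differences are that the paper works with the halting successors $w^i_\Downarrow$ directly (rather than your ``not a prefix'' argument and explicit passage to the singleton prefix of $w_2$), but the underlying reasoning is the same.
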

\begin{proof}
\hypertarget{proof:confeq}{}
Note that in this setting, $\KC=\KP$ for all agents, so we
will simply write $\K$.

\noindent \emph{Kripke $\Rightarrow$ trace-based.}
Want to show that a violation of definition \ref{def:conf2}
implies a violation of Kripke frame termination-insensitive confidentiality.
Suppose we have a pair of traces $t^i$ such that $\view_A(t^1)\neq \view_A(t^2)$,
and $t^1$ and $t^2$ both halt. Without
loss of generality, assume that $\view_A(t^1)$ is at least as long
as $\view_A(t^2)$.
\todo{The use of both t and w is confusing.}
Let $w^i$ be the unique singleton traces that are prefixes
of the respective $t^i$, and $w_\Downarrow^i$ \todo{why we don't use directly t?} be the shortest traces
where the respective $t^i$ has halted.
Then $\view_A(t^i)=\view_A(w_\Downarrow^i)$, as the observable
memory does not change after halting.
So $\view_A(w_\Downarrow^1) \neq \view_A(w_\Downarrow^2)$,
and in fact $\view_A(w_\Downarrow^1) \neq \view_A(w)$
for any $w$ s.t. $(w^2,w)\in T$, as those are either strictly shorter 
or equal to $w_\Downarrow^2$, and so $w_\Downarrow^1 \not\sim_{\K_A} w$.

At the same time, we have $w^1 \sim_{\K_A} w^2$ by definition \ref{def:conf2},
as their initial states only differ in variables that $A$ can not read.

Taking $\varphi$ to be the formula $\neg \varphi_{w^2}$, where
$\varphi_{w^2}$ is the characteristic formula for that run,
we therefore have $w^1 \vDash \Diamond \B{\K_A} \varphi$ (as
at $w^1_\Downarrow$, no $T$-successor of $w^2$ is $\K_A$-related),
and $w^1 \vDash \Diamond \Downarrow$ by assumption, 
but $w^1\vDash \neg \B{\K_A} (\Box \neg {\Downarrow} \vee \Diamond \varphi)$, as $\varphi$
does not hold at any $T$-successor of $w^2$, but $w^2\vDash \Diamond \Downarrow$.

\vspace{0.5em}

\noindent \emph{trace-based $\Rightarrow$ Kripke.}
Want to show that a violation of definition \ref{def:confti}
implies a violation of termination-insensitive trace confidentiality.
Suppose we are given a world $w$ such that
$$w\vDash \Diamond{\Downarrow} \wedge \Diamond \B{\K_A}\varphi \wedge \neg \B{\K_A}(\Box \neg {\Downarrow} \vee \Diamond \varphi).$$
Let $w_1$ be the unique $T$-ancestor of $w$, i.e. world with $(w_1,w)\in T$,
such that $w_1$ is an initial trace, i.e. of length 1.
Then we must also have 
$$w_1\vDash \Diamond{\Downarrow} \wedge \Diamond \B{\K_A}\varphi \wedge \neg \B{\K_A}(\Box \neg {\Downarrow} \vee \Diamond \varphi).$$
Indeed, $w_1\vDash \Diamond{\Downarrow} \wedge  \Diamond \B{\K_A}\varphi$ by transitivity of $T$.
To determine that $w_1 \vDash \neg \B{\K_A}\Diamond \varphi$, \todo{RG: Some of these things could become lemmata}
we first observe that if $\varphi$ is t.s.,
then so is $\Diamond \varphi$, and also $\Box \neg {\Downarrow} \vee \varphi$,
as $\Box\neg{\Downarrow}$ is true iff it is true everywhere in a run. 
Hence if we had $w_1 \vDash \B{\K_A} (\Box \neg {\Downarrow} \vee \Diamond \varphi)$,
then $w_1 \vDash \Box \B{\K_A} (\Box \neg {\Downarrow} \vee\Diamond \varphi)$ by perfect recall,
and hence $w \vDash \B{\K_A} (\Box \neg {\Downarrow} \vee\Diamond\varphi)$, 
contradicting the assumption about $w$.

Since $w_1 \vDash \neg \B{\K_A} (\Box \neg {\Downarrow} \vee\Diamond \varphi)$,
there must be a $w_2$ s.t. $w_1 \sim_{\K_A} w_2$ and $w_2 \vDash \Diamond{\Downarrow}\wedge \Box \neg \varphi$.
All $T$-successors of $w_2$ have $\neg \varphi$,
and in particular so does the first halted successor $w_{2,\Downarrow}$. \todo{RG There is only one halted successor}
On the other hand, since $w_1 \vDash \Diamond \B{\K_A} \varphi$,
by perfect recall, we have $w_{1,\Downarrow} \vDash \B{\K_A} \varphi$.
So $w_{1,\Downarrow}$ is not $\K_A$-related to any world where $\neg\varphi$,
in particular to $w_{2,\Downarrow}$.
But these two worlds are each complete terminating traces (extended from $w_1$ and $w_2$
respectively), and they differ in $\view_A$ by definition of $\K_A$. 
These two traces meet all preconditions necessary to violate
Def. \ref{def:conf2}.
\end{proof}

\begin{theorem*}[\ref{thm:inteeq}] Suppose $P$ is a program with s.s.c. $S$.
Then $P$ and $S$ satisfy Def. \ref{def:inte2}
iff the Kripke interpretation of $P$ and $S$ satisfies
Def. \ref{def:inteti}.
\end{theorem*}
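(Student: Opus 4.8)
We follow the template of the proofs of Theorems \ref{thm:confbasic} and \ref{thm:confeq}, establishing both directions by contraposition (a violation of one definition produces a violation of the other). Throughout, recall that in the Kripke interpretation $\KC=\KP$, that $\WC_A$ relates only length-$1$ traces, and that $\WP_A$ relates any two traces with the same $\fix_A$-value; recall also that temporally sound formulae, in particular characteristic formulae $\varphi_w$, once true stay true along $T$.

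\emph{Kripke $\Rightarrow$ trace-based.} Suppose Def.~\ref{def:inte2} fails, witnessed by an agent $A$, a store $\sigma$ and assignments $v_1,v_2$ to $X=W(A)$ with $\fix_A(t^1)\neq\fix_A(t^2)$ and both $t^1,t^2$ halting, where $t^i$ is the maximal trace from $w^i:=\langle p,\sigma[X\mapsto v_i]\rangle$. Since $\fix_A(w^1)=\sigma|_{V\setminus W(A)}=\fix_A(w^2)$, the length-$1$ worlds $w^1,w^2$ are $\WC_A$-related. Two distinct destuttered sequences cannot each be a prefix of the other, so after possibly swapping $1$ and $2$ we may assume $\fix_A(t^1)$ is not a prefix of $\fix_A(t^2)$. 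As $t^1$ is a halted (hence maximal) trace, the characteristic formula $\varphi:=\varphi_{t^1}$ is true exactly at $t^1$ and is temporally sound. I claim $w^2$ and $\varphi$ witness a violation of Def.~\ref{def:inteti}: the premise $w^2\vDash\langle\WC_A\rangle(\Diamond{\Downarrow}\wedge\Diamond\varphi)$ holds via the $\WC_A$-step to $w^1$, whose run reaches the halted world $t^1\vDash{\Downarrow}\wedge\varphi_{t^1}$; on the conclusion side, $w^2\not\vDash\Box\neg{\Downarrow}$ because $t^2$ halts, and $w^2\not\vDash\Diamond\langle\WP_A\rangle\varphi$ because the unique world satisfying $\varphi_{t^1}$ is $t^1$, while a $T$-successor of $w^2$ (a prefix of $t^2$) is $\WP_A$-related to $t^1$ only if $\fix_A(t^1)$ equals $\fix_A$ of that prefix — impossible, since $\mathsf{dest}$ of a prefix is a prefix of $\mathsf{dest}$, so the latter is a prefix of $\fix_A(t^2)$.

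\emph{Trace-based $\Rightarrow$ Kripke.} Conversely, suppose Def.~\ref{def:inteti} fails at a world $w$ and temporally sound $\varphi$. The premise $w\vDash\langle\WC_A\rangle(\Diamond{\Downarrow}\wedge\Diamond\varphi)$ forces $w$ to be a length-$1$ world $\langle p,\sigma\rangle$ (only such worlds have $\WC_A$-successors), with witness $w'=\langle p,\sigma'\rangle$ likewise length-$1$ and $\sigma,\sigma'$ agreeing off $X:=W(A)$; thus $w$ and $w'$ are the initial worlds of the maximal traces $t^1,t^2$ arising from a common store restricted outside $X$. From $w'\vDash\Diamond{\Downarrow}$, $t^2$ halts, and from $w'\vDash\Diamond\varphi$ with temporal soundness of $\varphi$, the halted world $t^2_\Downarrow=t^2$ satisfies $\varphi$; from $w\not\vDash\Box\neg{\Downarrow}$, $t^1$ halts too. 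It remains to show $\fix_A(t^1)\neq\fix_A(t^2)$ — exactly the trace-based violation. If instead $\fix_A(t^1)=\fix_A(t^2)$, then $t^2$, having the same $\fix_A$-value as the $T$-successor $t^1_\Downarrow=t^1$ of $w$, would be a $\WP_A$-neighbour of $t^1$ satisfying $\varphi$, so $w\vDash\Diamond\langle\WP_A\rangle\varphi$, contradicting the assumed violation.

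\emph{Main obstacle.} The only delicate point is in the first direction: we must rule out the degenerate case in which one agent's trusted view is a prefix of the other's (handled by the WLOG swap), and we rely on the routine lemma that destuttering a prefix of a sequence yields a prefix of the destuttered sequence, so that $\fix_A$ of a proper prefix of $t^2$ is a prefix of $\fix_A(t^2)$. By contrast with the confidentiality proofs, the second direction needs no appeal to perfect recall, since the modality $\langle\WC_A\rangle$ already pins the witnessing world down to a length-$1$ trace and hence no reduction to a $T$-ancestor is required.
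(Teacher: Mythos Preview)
Your proof is correct and follows essentially the same approach as the paper's: contraposition in both directions, using a characteristic formula of the halted world as the violating $\varphi$ in the first direction and extracting a violating pair of traces from the length-$1$ world in the second. The only cosmetic differences are labeling (you swap the roles of $t^1,t^2$) and your WLOG phrased as ``$\fix_A(t^1)$ not a prefix of $\fix_A(t^2)$'' rather than the paper's length comparison; your explicit mention of the destuttering-of-a-prefix lemma is in fact slightly cleaner than the paper's argument.
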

\begin{proof}
\hypertarget{proof:inteeq}{}
\noindent \emph{Kripke $\Rightarrow$ trace-based.}
Want to show that a violation of definition \ref{def:inte2}
implies a violation of Kripke frame termination-insensitive integrity.
Suppose we have a pair of traces $t^i$ such that $\fix_A(t^1)\neq \fix_A(t^2)$,
and $t^1$ and $t^2$ both halt. Without
loss of generality, assume that $\fix_A(t^2)$ is at least as long
as $\fix_A(t^1)$.
Let $w^i$ be the unique singleton traces that are prefixes 
of the respective $t^i$, and $w_\Downarrow^i$ be the shortest traces
where the respective $t^i$ has halted.
Then $\fix_A(t^i)=\fix_A(w_\Downarrow^i)$, as the unwriteable
memory does not change after halting \todo{RG: There must be something here with infinite runs even if we halt that has not been spelled out, or I missed it}.
So $\fix_A(w_\Downarrow^1) \neq \fix_A(w_\Downarrow^2)$,
and in fact $\fix_A(w) \neq \fix_A(w_\Downarrow^2)$
for any $w$ s.t. $(w^1,w)\in T$, as those are either strictly shorter

At the same time, we have $w^1 \sim_{\WC_A} w^2$ by definition \ref{def:inte2},
as their initial states only differ in variables that $A$ can not read. \todo{RG: can write}

Let $\varphi_{w^2}$ be the characteristic formula for the run of $w^2$,
$\tau$ be the length of $w^2_\Downarrow$, $a$ an arbitrary variable
and $v$ the value it takes in $w^2_\Downarrow$, so $a@\tau=v$ 
is true at $w^2_\Downarrow$ and false before it.
\todo{RG: ?}
be any formula that only 
Taking $\varphi$ to be the formula $\varphi_{w^2} \wedge a@\tau=v$, 
we therefore have $w^1 \vDash \langle \WC_A\rangle (\Diamond{\Downarrow} \wedge \Diamond \varphi)$
(as $w^2$ is reachable by $\WC_A$, that run halts, and $\varphi$ is true when it does),
but $w^1 \vDash \neg\Box\neg{\Downarrow} \wedge \neg \Diamond\langle \WP_A\rangle \varphi$
 (as the run halts, but $w^2_\Downarrow$ is never $\WP_A$-reachable from it).

\vspace{0.5em}

\noindent \emph{trace-based $\Rightarrow$ Kripke.}
Want to show that a violation of definition \ref{def:inteti}
implies a violation of trace-based integrity.
Suppose we are given a world $w_1$ such that
$$w_1\vDash \langle \WC_A\rangle (\Diamond{\Downarrow} \wedge \Diamond \varphi) \wedge \Diamond{\Downarrow}
\wedge \neg\Diamond\langle \WP_A\rangle \varphi. $$
Then $w_1$ must be an initial trace, i.e. of length 1,
as otherwise $\WC_A$ only relates it to itself, but we can not have
$\Diamond\varphi$ and $\neg \Diamond \langle \WP_A\rangle \varphi$ at the same world
since $\WP_A$ itself is reflexive.

Let $w_2$ be the $\WC_A$-related world at which $w_2 \vDash \Diamond{\Downarrow} \wedge \Diamond \varphi$.
Since $\varphi$ is t.s., we must in fact
have $\Diamond ({\Downarrow}\wedge \varphi)$, i.e. $\varphi$
is satisfied at the successor of $w_2$ where that run first halts, $w_{2,\Downarrow}$.
Also, the run generated from $w_1$ halts, say first at $w_{1,\Downarrow}$, and because we have
$w_1\vDash \Box \neg \langle \WP_A\rangle \varphi$,
we also have $w_{1,\Downarrow}\vDash \neg \langle \WP_A\rangle \varphi$.
In particular, $\fix_A (w_{1,\Downarrow}) \neq \fix_A (w_{2,\Downarrow})$.
But these two worlds are each complete terminating traces (extended from $w_1$ and $w_2$
respectively, which are $\WC_A$-related).
These two traces meet all preconditions necessary to violate
Def. \ref{def:inte2}.

\end{proof}

\begin{theorem*}[\ref{thm:syncrd}]
Suppose $P=\langle p, V, I\rangle $ signals termination and $S$ is an s.s.c. defined by $\langle W, R\rangle$, with $W(A)\subseteq R(A)$ for all agents $A$.
Then $P$ satisfies trace-based robust declassification (Def. \ref{def:rd2})
iff the Kripke interpretation of $P$ satisfies termination-insensitive robust declassification (Def. \ref{def:rdit}).
\end{theorem*}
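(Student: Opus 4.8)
The plan is to prove the two directions separately, as in the proofs of Theorems \ref{thm:confbasic}--\ref{thm:inteeq}, throughout exploiting that an s.s.c.\ gives $\KC_A=\KP_A$ (so one may use the simplified form of Def.\ \ref{def:rdit}) and that $W(A)\subseteq R(A)$ forces the sets $Y=W(A)$ and $X=\Vars(P)\setminus R(A)$ to be disjoint and $A$'s view to reveal the time-$0$ values of all variables in $Y$. In both directions the witness to a violation will live at length-$1$ worlds, with the $\langle\WC_A\rangle$-step of Def.\ \ref{def:rdit} corresponding exactly to replacing the $A$-writable part $w_1$ of the initial store by another value $w_2$ in the quadruple $\{t^{ij}\}$ of Def.\ \ref{def:rd2}.

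For the direction ``Kripke $\Rightarrow$ trace-based'' I would start from a violation of Def.\ \ref{def:rd2}: after possibly swapping $w_1\leftrightarrow w_2$, all four runs halt, $\view_A(t^{11})=\view_A(t^{21})$, and $\view_A(t^{12})\neq\view_A(t^{22})$. The key construction is the formula $\varphi$ asserting ``the time-$0$ values of the variables in $X$ lie in $T$'', where $T$ is the set of assignments $v$ to $X$ such that the run from $\sigma[X\mapsto v][Y\mapsto w_2]$ either halts with $\view_A$ equal to $\view_A(t^{12})$ or diverges. This $\varphi$ is temporally sound (it is constant along each run) and write-stable (it mentions only $X$, which is disjoint from $Y$, so $\Diamond\varphi$ is $\WC_A$-invariant); moreover $v_1\in T$ while $v_2\notin T$ (since $t^{22}$ halts but with the ``wrong'' view). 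Taking $w$ to be the length-$1$ world with store $\sigma[X\mapsto v_1][Y\mapsto w_1]$ and $w'$ its $\WC_A$-neighbour with store $\sigma[X\mapsto v_1][Y\mapsto w_2]$, I would verify: at $w'_{\Downarrow}$ agent $A$ knows $\varphi$, using that $P$ signals termination (Def.\ \ref{def:sigterm}) to confine $A$'s residual uncertainty there to secrets in $T$; the conjunct $\B{\K_A}(\Box\neg\varphi\Rightarrow\Diamond{\Downarrow})$ holds at $w'$ precisely because $T$ was defined to absorb all divergent runs over writable $w_2$; and on the right-hand side $w\vDash\neg\Box\neg{\Downarrow}$ while $w\vDash\neg\Diamond\B{\K_A}\varphi$, the latter because $\view_A(t^{11})=\view_A(t^{21})$ keeps a prefix of the $t^{21}$-run (on which $\varphi$ is everywhere false, as its secret $v_2\notin T$) $\K_A$-indistinguishable from every prefix of the $t^{11}$-run.

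For ``trace-based $\Rightarrow$ Kripke'' I would take a violation of Def.\ \ref{def:rdit}, i.e.\ a write-stable t.s.\ $\varphi$ and a world $w$ at which the left side holds via some $\WC_A$-neighbour $w'$ but the right side fails. Since $\WC_A$ relates distinct worlds only if both are length-$1$, and the conjuncts cannot coexist at a single world, $w\neq w'$ and both are length-$1$; let $v_1$ be their common $X$-part and $w_1,w_2$ their respective $Y$-parts. From $w\vDash\Diamond{\Downarrow}\wedge\neg\Diamond\B{\K_A}\varphi$, at $w_{\Downarrow}$ there is a $\K_A$-related $v''\vDash\neg\varphi$, which signalling of termination forces to be a halted trace; reading off its $X$-part as $v_2$ gives $\view_A(t^{11})=\view_A(t^{21})$. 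Then $\varphi$ is false throughout the $t^{21}$-run (temporal soundness plus $\neg\varphi$ at its halted endpoint), and write-stability transports this to the $\WC_A$-neighbouring $t^{22}$-run; the extra conjunct at $w'$, applied to the $\K_A$-related initial world of $t^{22}$, then forces $t^{22}$ to halt as well. Finally $w'_{\Downarrow}\vDash\B{\K_A}\varphi$ by perfect recall (Def.\ \ref{def:pr}), which together with $t^{22}_{\Downarrow}\vDash\neg\varphi$ and both runs halting yields $\view_A(t^{12})\neq\view_A(t^{22})$; hence all four runs halt and the biconditional of Def.\ \ref{def:rd2} fails.

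I expect the main obstacle to be the bookkeeping around the termination conjunct $\B{\K_A}(\Box\neg\varphi\Rightarrow\Diamond{\Downarrow})$: it is what dictates the slightly awkward definition of $T$ in the first direction (the divergent runs must be ``framed in'', exactly as in Fig.\ \ref{fig:violations}(vi)), and it is the pivotal step that upgrades ``$\varphi$ fails on the $t^{22}$-run'' to ``$t^{22}$ halts'' in the second. Getting the WLOG symmetry, the use of $W(A)\subseteq R(A)$ (needed both for write-stability of $\varphi$ and for $A$ to know which attack it performed), and the invocations of signalling of termination and perfect recall at the right halted endpoints to line up cleanly is the part that will require the most care.
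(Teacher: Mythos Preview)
Your proof is correct and follows essentially the same route as the paper's. The one notable difference is your choice of witness formula in the Kripke $\Rightarrow$ trace-based direction. The paper simply takes $\varphi$ to be the negation of the characteristic formula of the $\WC_A$-equivalence class of $w^{21}$, i.e.\ ``the initial store does not agree with $w^{21}$ on $\Vars\setminus W(A)$''. With this choice the troublesome conjunct $\B{\K_A}(\Box\neg\varphi\Rightarrow\Diamond{\Downarrow})$ at $w^{12}$ holds automatically: using $W(A)\subseteq R(A)$ to pin the $Y$-part and visibility of the ``middle'' part, the only initial world $\K_A$-related to $w^{12}$ on which $\Box\neg\varphi$ holds is $w^{22}$ itself, whose run halts by hypothesis. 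Your set $T$, which frames in all secrets that diverge under the attack $w_2$, also works, but the extra engineering turns out to be unnecessary when converting a \emph{given} violating quadruple of halting runs into a formula; the Fig.~\ref{fig:violations}(vi) phenomenon you anticipate only forces such framing when one must exhibit a violating $\varphi$ in the presence of extraneous nonterminating runs, not here. In the reverse direction your argument coincides with the paper's almost step for step: your $w'$ and $v''$ are its $w_1$ and $w_{2,\Downarrow}$, and your $t^{22}$-initial world is the paper's $w_3$, obtained (as you do implicitly) via commutativity of $\K_A$ and $\WC_A$.
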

\begin{proof}
\hypertarget{proof:syncrd}{}
\noindent \emph{Kripke $\Rightarrow$ trace-based}.
Want to show that a violation of definition \ref{def:rd2} implies a violation of Kripke frame robust declassification.
Suppose we have a quadruple of traces $t^{ii}$ such that WLOG $\view_A(t^{11})=\view_A(t^{21})$, but $\view_A(t^{12})\neq \view_A(t^{22})$, and all terminate.

Let $w^{ii}$ be the unique singleton trace that's a $T$-ancestor of $t^{ii}$.
Then we have $w^{12} \sim_{\K_A} w^{22}$, $w^{11} \sim_{\K_A} w^{21}$, $w^{11} \sim_{\WC_A} w^{12}$ by definition.
Also, since all four maximal traces terminate, we in fact have
$\view_A(w^{11}_\Downarrow)=\view_A(w^{21}_\Downarrow)$ and $\view_A(w^{12}_\Downarrow)\neq \view_A(w^{22}_\Downarrow)$.

Let $\varphi$ be the formula that encodes that we are not in a run 
reachable from $w^{21}$ by $\WC_A$, i.e. $\neg (\varphi_{w^{21}} \vee \varphi_{w^{22}} \vee \ldots)$.
Then $w_\Downarrow^{12}\vDash \B{\K_A} \varphi$, as $w_\Downarrow^{12}$ is not $\K_A$-related to any successor of $w^{21}$'s $\WC_A$ relatives.
Hence, $w^{12} \vDash \Diamond \B{\K_A}\varphi$. However, $w^{12}\vDash \neg \B{\K_A}(\Box\neg{\Downarrow}\vee\Diamond \varphi)$, as $w^{12}$ is $\WC_A$-related to $w^{22}$, whose run also halts.
Also, $w^{12}\vDash \B{\K_A}(\Box \neg\varphi \Rightarrow \Diamond{\Downarrow})$, as the only $\K_A$-related 
world where $\Box\neg\varphi$ holds is $w^{22}$, whose run also halts.
Thus, $w^{11} \vDash \langle \WC_A \rangle (\Diamond{\Downarrow}\wedge \Diamond \B{\K_A}\varphi \wedge\neg \B{\K_A}(\Box\neg{\Downarrow}\vee\Diamond \varphi) \wedge \B{\K_A} (\Box\neg\varphi \Rightarrow \Diamond{\Downarrow})) $.

However, $w^{11} \vDash \Diamond{\Downarrow}\wedge \neg \Diamond \B{\K_A} \varphi$, as if $\B{\K_A}\varphi$ at any $T$-successor, then also
$w^{11}_\Downarrow\vDash \B{\K_A}\varphi$ by perfect recall. But $w^{11}_\Downarrow$ is $\K_A$-related to $w^{21}_\Downarrow$, at which $\varphi$ is false. 
Thus Def. \ref{def:rd1} is violated.

\vspace{0.5em}

\noindent \emph{trace-based $\Rightarrow$ Kripke}.
Want to show that a violation of Definition \ref{def:rd1} implies a violation
of trace-based robust declassification.
Suppose we have
\begin{align*} w\vDash
& \langle \WC_A \rangle ( \Diamond{\Downarrow} \wedge \Diamond \B{\K_A}\varphi \wedge \neg \B{\K_A}(\Box\neg{\Downarrow} \vee \Diamond \varphi ) \\ 
 & \hspace{2em} \wedge \B{\K_A} (\Box\neg\varphi \Rightarrow \Diamond{\Downarrow}) ) \\
& \Rightarrow \Box\neg{\Downarrow} \vee \Diamond \B{\K_A}\varphi,
\end{align*}
or equivalently 
\begin{align} w\vDash
& \langle \WC_A \rangle ( \Diamond{\Downarrow} \wedge \Diamond \B{\K_A}\varphi \wedge \neg \B{\K_A}(\Box\neg{\Downarrow} \vee \Diamond \varphi ) \nonumber \\ 
 & \hspace{2em} \wedge \B{\K_A} (\Box\neg\varphi \Rightarrow \Diamond{\Downarrow}) ) \nonumber \\
& \wedge \Diamond{\Downarrow} \wedge \neg\Diamond \B{\K_A}\varphi. \label{eqn:worldw}
\end{align}
We have $w\vDash \Box\neg\B{\K_A}\varphi$, and $w\vDash \Diamond{\Downarrow}$. So
at the terminating $T$-successor $w_\Downarrow$, we still have $\neg\B{\K_A}\varphi$,
and it is related to a world where $\neg\varphi$ and (as $P$ signals termination) $\Downarrow$,
say $w_{2,\Downarrow}\vDash {\Downarrow} \wedge\neg\varphi$. This world must have a unique singleton
$T$-ancestor $w_2$, where $w_2\vDash \Diamond{\Downarrow}\wedge \Box\neg\varphi$.

Let $w_1$ denote the $\WC_A$-reachable world at which
$w_1 \vDash \Diamond{\Downarrow} \wedge \Diamond \B{\K_A}\varphi \wedge \neg \B{\K_A}(\Box\neg{\Downarrow} \vee \Diamond \varphi ) \wedge \B{\K_A} (\Box\neg\varphi \Rightarrow \Diamond{\Downarrow})$.
By commutativity, there must be a $w_3$ with $w_1 \sim_{\K_A} w_3 \sim_{\WC_A} w_2$,
where by write-stability between $w_2$ and $w_3$ we also have $w_3\vDash \neg\Diamond\varphi$.
Since $w_1\vDash \B{\K_A} (\Box\neg\varphi \Rightarrow \Diamond{\Downarrow})$,
we must have $w_3 \vDash \Diamond{\Downarrow}$.
Also, $w_1\vDash \Diamond \B{\K_A}\varphi$, so at least at $w_{1,\Downarrow}$,
we must have $\B{\K_A}\varphi$ and hence $w_{1,\Downarrow}\neg\sim_{\K_A} w_{3,\Downarrow}$,
since $w_{3,\Downarrow}\vDash \neg\varphi$.
So the runs generated by $w_1$ and $w_3$ are distinguishable in $A$'s view, but the runs generated by $w$ and $w_2$ are not, and all four terminate. Done.
\end{proof}

\begin{proposition*}[\ref{prop:teeq}] \begin{enumerate}[(i)]
\item If $P$ is a program with s.s.c. $S$, then the Kripke interpretation of $P$ and $S$ satisfies
cause termination-insensitive integrity iff $P$ and $S$ satisfy trace-based integrity (Def. \ref{def:inte2}).
\item If $P$ also signals termination and $W(A)\subseteq R(A)$, then it satisfies trace-based transparent endorsement (Def. \ref{def:te2})
iff the Kripke interpretation of $P$ and $S$ satisfies termination-insensitive transparent endorsement (\ref{def:teit}).
\end{enumerate}
\end{proposition*}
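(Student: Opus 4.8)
The plan is to obtain both parts by transcribing, under an explicit translation, the two confidentiality-side equivalence proofs we already have: part~(i) mirrors the proof of Theorem~\ref{thm:confeq}, and part~(ii) mirrors the proof of Theorem~\ref{thm:syncrd}. The translation swaps $\view_A \leftrightarrow \fix_A$ and the secret-to-$A$ variables $V(P)\setminus R(A)$ with the $A$-writable variables $W(A)$; on relations it sends the capability-knowledge relation $\KC_A$ to the permission-write relation $\WP_A$ and the permission-knowledge relation $\KP_A$ to the capability-write relation $\WC_A$ (each ``equality of the appropriate destuttered projection'', with $\WC_A$ additionally confined to length-$1$ traces), and it sends the ``outer'' write modality $\langle\WC_A\rangle$ of Def.~\ref{def:rdit} to the ``outer'' knowledge modality $\langle\KC_A\rangle$ of Def.~\ref{def:teit}. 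Under this translation, cause integrity and cause termination-insensitive integrity (Remark~\ref{def:intecf}) are literally the images of confidentiality (Def.~\ref{def:conf}) and termination-insensitive confidentiality (Def.~\ref{def:confti}), termination-insensitive transparent endorsement (Def.~\ref{def:teit}) is the image of termination-insensitive robust declassification (Def.~\ref{def:rdit}), and the trace-based Defs.~\ref{def:inte2} and~\ref{def:te2} are the images of Defs.~\ref{def:conf2} and~\ref{def:rd2} with the roles of the two families of quantified assignments interchanged. So the skeleton of each argument --- reading a violating quadruple (resp.\ pair) of traces off a Kripke violation by taking the initial traces and their first halting successors, and conversely building a violating world together with a temporally sound witness formula $\varphi$ (a conjunction of $\fix_A$-determined atoms, negated against a characteristic formula for prefixes, exactly as on the confidentiality side) --- carries over unchanged.

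Before transcribing, one verifies that the structural inputs used by the source proofs hold for the images. The relation $\WP_A$ is an equivalence relation and satisfies perfect recall for temporally sound formulae by the same argument as $\K_A$: $\fix_A$ of a prefix is a destuttered prefix of $\fix_A$ of the trace, so any world $\WP_A$-related to a $T$-successor of $w$ has a prefix that is $\WP_A$-related to $w$, at which the t.s.\ formula already held and hence, being t.s., holds forever. That $\fix_A$ stabilises once $\mathsf{stop}$ is reached (no transitions leave a halting configuration) is the exact counterpart of the ``$\view_A$ stops growing after termination'' step used repeatedly in the confidentiality proofs; characteristic formulae for prefixes are a property of the frame and are untouched; the only commutativity needed is $\WC_A$ with $\KC_A$, which is Proposition~\ref{prop:allcommute} verbatim (commutativity is symmetric). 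For part~(ii), the hypothesis that $P$ signals termination (Def.~\ref{def:sigterm}) is used precisely where ``$P$ signals termination'' appears in the proof of Theorem~\ref{thm:syncrd}, and $W(A)\subseteq R(A)$ is the image of itself: it is what lets an outer $\langle\KC_A\rangle$-step, which varies the $R(A)$-visible part of the initial store, be matched with a variation of the $A$-writable variables, just as in the robust-declassification argument $W(A)\subseteq R(A)$ matches the outer $\langle\WC_A\rangle$-step with a variation of what $A$ knows.

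The one place that requires thought rather than pure transcription is the single genuine asymmetry between the two sides: in the Kripke interpretation of an s.s.c.\ we have $\KC_A=\KP_A$, but $\WP_A\neq\WC_A$ --- indeed $\WC_A$ relates only length-$1$ traces --- and the outer relation is confined to length-$1$ traces on the robust-declassification side but not on the transparent-endorsement side. The remedy, which is also the analogue of the observation in the robust-declassification discussion that its violations all sit at length-$1$ worlds, is to show that a violation of Def.~\ref{def:intecf} or of Def.~\ref{def:teit} can occur only at an initial world (directly, or after pulling it back to its initial $T$-ancestor by perfect recall, exactly as in the proof of Theorem~\ref{thm:confbasic}): the conjunct $\neg\B{\WC_A}(\Box\neg{\Downarrow}\vee\Diamond\varphi)$ is non-vacuous only at length-$1$ worlds, since at any other world it reduces to $\Diamond{\Downarrow}\wedge\Box\neg\varphi$, which contradicts $\Diamond\B{\WP_A}\varphi$ (the latter implies $\Diamond\varphi$); so the witness of the outer modality is length-$1$ as well --- for Def.~\ref{def:rdit} because $\langle\WC_A\rangle$ only reaches length-$1$ worlds to begin with, and for Def.~\ref{def:teit} because the $\langle\KC_A\rangle$-witness satisfies a formula carrying that same $\neg\B{\WC_A}(\cdots)$ conjunct. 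Restricted to length-$1$ traces, $\WC_A$ and $\WP_A$ coincide, so every appeal in the source proof to ``the'' knowledge relation now resolves unambiguously. Thus the main obstacle is just to insert this ``pin to initial worlds'' step before any use of the $\WC_A$/$\WP_A$ identification; once it is in place, parts~(i) and~(ii) follow line-for-line from the proofs of Theorems~\ref{thm:confeq} and~\ref{thm:syncrd} respectively.
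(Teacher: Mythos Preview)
Your proposal is correct and follows essentially the same approach as the paper: both argue by symmetry with the proofs of Theorems~\ref{thm:confeq} and~\ref{thm:syncrd} under the $\view_A\leftrightarrow\fix_A$ translation, identify the one genuine obstacle as the asymmetry $\WP_A\neq\WC_A$, and resolve it by observing that the relevant worlds are initial (where the two coincide). Your write-up is in fact considerably more explicit than the paper's sketch---in particular your argument that $\neg\B{\WC_A}(\Box\neg{\Downarrow}\vee\Diamond\varphi)$ forces length-$1$ worlds, your verification of perfect recall for $\WP_A$, and your unpacking of why $W(A)\subseteq R(A)$ is self-dual all spell out what the paper leaves as a one-line remark.
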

\begin{proof} \hypertarget{proof:teeq}{} (Sketch)
The proof can proceed analogously to Theorems \ref{thm:confeq} and \ref{thm:syncrd}, despite the permission-capability swap, as we only ever use $\KC_A$ productively on initial worlds, which are also those on which $\WP_A$ and $\WC_A$ agree. In the proof of (ii), some additional care is necessary to address $\WP_A\neq \WC_A$. Note that the condition $W(A)\subseteq R(A)$ is, counterintuitively, self-dual, as $W(A)$ refers to the \emph{complement} of the variables whose values are unchanged by $\fix_A$.
\end{proof}

\fi %

%

\end{document}